\documentclass[aps,11pt,twoside, nofootinbib]{revtex4}

\usepackage[a4paper,top=3cm,bottom=2cm,left=2.5cm,right=2.5cm,marginparwidth=1.75cm]{geometry}
\usepackage{natbib}
\usepackage{amsmath}
\usepackage{graphicx}
\usepackage[colorinlistoftodos]{todonotes}
\usepackage[colorlinks=true, allcolors=blue]{hyperref}
\usepackage{bbm,microtype,mathrsfs,amssymb,color,amsthm,graphicx,cleveref,times,bm}
\usepackage{tikz-cd}
\usepackage{xcolor}
\usepackage{braket}
\usepackage{comment}
\usepackage{tikz}
\tikzstyle{tensor}=[rectangle,draw=blue!50,fill=blue!20,thick]
\usepackage{thmtools}
\usepackage{thm-restate}

\newtheorem{defn}{Definition}

\newtheorem{thm}{Theorem}[section] 
\newtheorem{cor}{Corollary}[section] 
\newtheorem{lem}{Lemma}[section] 
\newtheorem{prop}{Proposition}[section] 
\newtheorem{conj}{Conjecture}[section] 
\newtheorem{Question}{Question}[section]

\renewcommand{\>}{\rangle}
\newcommand{\ot}{\otimes}

\newcommand*{\cA}{\mathcal{A}}
\newcommand*{\cB}{\mathcal{B}}

\newcommand*{\cH}{\mathcal{H}}

\newcommand*{\cK}{\mathcal{K}}

\newcommand*{\cM}{\mathcal{M}}

\newcommand*{\cN}{\mathcal{N}}

\newcommand*{\cR}{\mathcal{R}}

\newcommand*{\cS}{\mathcal{S}}

\newcommand*{\tr}{\mathrm{Tr}}

\newcommand*{\id}{\rm id}

\def\bra #1{\langle #1\vert}
\def\ket #1{\vert #1\rangle}

\usepackage{color}
\definecolor{myred}{rgb}{1,0,0}

\definecolor{myblue}{rgb}{0,0,0.8}

\definecolor{myyellow}{rgb}{0.9,0.8,0}

\definecolor{mygreen}{rgb}{0,0.6,0}

\definecolor{myorange}{rgb}{0.6,0.6,0}

\definecolor{mycerul}{rgb}{0,0.6,1}


\newcommand{\CA}{\mathcal{A}}
\newcommand{\CB}{\mathcal{B}}

\newcommand{\BC}{\mathbb{C}}

\newcommand{\CE}{\mathcal{E}}

\newcommand{\CH}{\mathcal{H}}
\newcommand{\CI}{\mathcal{I}}

\newcommand{\CK}{\mathcal{K}}

\newcommand{\CM}{\mathcal{M}}
\newcommand{\CN}{\mathcal{N}}
\newcommand{\CO}{\mathcal{O}}
\newcommand{\CP}{\mathcal{P}}
\newcommand{\BP}{\mathbb{P}}

\newcommand{\CR}{\mathcal{R}}

\newcommand{\CS}{\mathcal{S}}

\newcommand{\CZ}{\mathcal{Z}}

\newcommand{\lV}{\lVert}
\newcommand{\rV}{\rVert}
\definecolor{KKgreen}{RGB}{0,100,70}
\definecolor{CFCblue}{RGB}{0,20,100}

\begin{document}

\title{Matrix Product Density Operators: when do they have a local parent Hamiltonian?}

\author{Chi-Fang (Anthony) Chen}

\affiliation{Institute for Quantum Information and Matter  \\ California Institute of Technology, Pasadena, CA 91125, USA}

\author{Kohtaro Kato}
\affiliation{Institute for Quantum Information and Matter  \\ California Institute of Technology, Pasadena, CA 91125, USA}
\affiliation{Center for Quantum Information and Quantum Biology,\\
Institute for Open and Transdisciplinary Research Initiatives,\\ Osaka University, Osaka 560-8531, Japan}

\author{Fernando G.S.L. Brand\~ao}
\affiliation{AWS Center for Quantum Computing, Pasadena, CA}
\affiliation{Institute for Quantum Information and Matter  \\ California Institute of Technology, Pasadena, CA 91125, USA}

\begin{abstract}
We study whether one can write a Matrix Product Density Operator (MPDO) as the Gibbs state of a quasi-local parent Hamiltonian. We conjecture this is the case for generic MPDO and give supporting evidences. To investigate the locality of the parent Hamiltonian, we take the approach of checking whether the quantum conditional mutual information decays exponentially. The MPDO we consider are constructed from a chain of 1-input/2-output (`Y-shaped') completely-positive maps, i.e. the MPDO have a local purification.  
We derive an upper bound on the conditional mutual information for bistochastic channels and strictly positive channels and show that it decays exponentially if the correctable algebra of the channel is trivial. 

We also introduce a conjecture on a quantum data processing inequality that implies the exponential decay of the conditional mutual information for every Y-shaped channel with trivial correctable algebra. We additionally investigate a close but nonequivalent cousin: MPDO measured in a local basis. We provide sufficient conditions for the exponential decay of the conditional mutual information of the measured states and numerically confirm they are generically true for certain random MPDO. 
 \end{abstract}
\maketitle
\tableofcontents
\section{Introduction}
Tensor networks provide useful ansatz for quantum many-body systems. 
In one-dimensional (1D) systems, the ground states of gapped local Hamiltonians can be efficiently approximated by Matrix Product States (MPS)~\cite{Affleck1987,Perez-Garcia2007,Hastings2007}. 
For the converse, generic MPS (which are technically called {\it injective} MPS) always have a gapped, local, and frustration-free {\it parent Hamiltonian} whose unique ground state is the MPS~\cite{Fannes1992,Perez-Garcia2007}. 
This correspondence between MPS and its parent Hamiltonian establishes a deep connection with gapped quantum systems, leading to the complete classification of 1D gapped quantum phases~\cite{Schuch2011}. For higher-dimensional systems, Projected Entangled Pair States (PEPS) are a natural generalization of MPS. PEPS have been used successfully to study gapped ground states~\cite{Affleck1987,verstraete2004renormalization}. Although the structural characterization of PEPS has not been completely established, local parent Hamiltonians also exist for injective and semi-injective PEPS~\cite{Molnar2018}. 

Matrix Product Density Operators (MPDO) are generalizations of MPS to describe 1D \textit{mixed} states. In Ref.~\cite{Hastings2006}, Hastings showed that any 1D Gibbs state of a local Hamiltonian (local Gibbs state in short) can be well-approximated by an MPDO with a polynomial bond dimension. This result justifies MPDO as a successful ansatz to study Gibbs states~\cite{Verstraete2004aa}.  

As a generic MPS is the ground state of the parent Hamiltonian, one may anticipate that a ``generic'' MPDO could be written as the Gibbs state of a local parent Hamiltonian. However, a fundamental obstacle to the latter question is the lack of a handy characterization of a generic MPDO. While Matrix Product Operators (MPO) are quite analogous to MPS, it is computationally hard to decide its positivity~\cite{PhysRevLett.113.160503}. Therefore, to make progress, one needs to hand-pick some parameterizable family of MPDO to begin with. 

Indeed, Cirac et al. analyzed MPDO at certain ``renormalization fixed-points", and showed that these fixed-point MPDO are Gibbs states of nearest-neighbor commuting Hamiltonians~\cite{Cirac2017}. 
Unfortunately, unlike MPS, a renormalization operation transforming a given MPDO to these fixed-point MPDO has not been well-defined yet. 

In this paper, we will focus on a physically motivated family: locally purifiable MPDO. These models arise from condensed matter physics, as they appear as the reduced state of MPS or a boundary state of PEPS. Technically, these models are chains of CP maps that connect to information theory. This structure is what enables much of our subsequent discussion.

Given the family of MPDO, our technical approach is to study whether the {\it conditional mutual information (CMI)} decays exponentially. The CMI $I(A:C|B)_\rho$ is a function defined for a tripartite state $\rho_{ABC}$ as
\begin{equation}
    I(A:C|B)_\rho:=S(AB)_\rho+S(BC)_\rho-S(B)_\rho-S(ABC)_\rho,
\end{equation}
where $S(A)_\rho=-\tr\rho_A\log_2\rho_A$ is the von Neumann entropy of the reduced state on $A$. 
Small values of CMI (with respect to certain tri-partitions of 1D system) turn out to be the necessary and sufficient condition  for being well-approximated by a local Gibbs state~\cite{KB16} (see Sec.~\ref{sec:prelimscmi} for more details). 

Now, we are in a position to present the guiding question of this work. Whenever the answer is affirmative, the parent Hamiltonians of the MPDO are (quasi-)local. 

\begin{Question} \label{question:main} For which MPDO does the CMI decay exponentially? Namely, for any tri-partition $ABC$ of the system where $B$ separates $A$ from $C$ with distance $\ell$ and constant $c>0$
\begin{align}
I(A:C|B) &= O(e^{-c\ell})\,?
\end{align}\label{main_decayCMI}
\end{Question}

In addition, we investigate the CMI of the MPDO after local measurements on the conditioning system $B$. As the CMI does not always decrease under measurement, it must be studied as an independent problem. The exponential decay of the CMI after measurement implies the outcome distribution is nearly a classical Markov distribution.

\begin{figure}[h]  
\includegraphics[width = 0.95\textwidth]{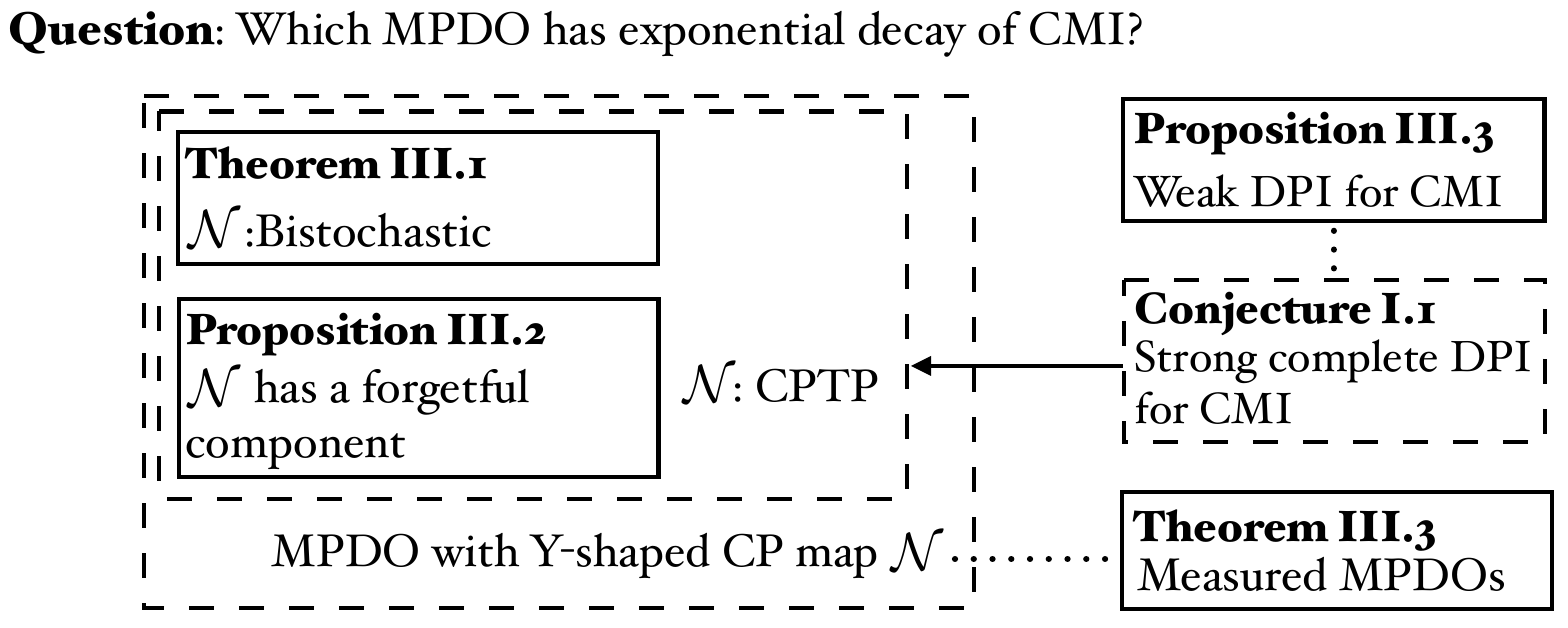}
\caption{The interplay between the main question, the main results, and the open conjectures. We ask which locally purifiable MPDO has exponential decay of CMI. When the MPDO is constructed by Y-shaped CPTP maps $\CN$, we obtain results for bistochastic channels and channels with a forgetful component. For general channels, we propose a strong data-processing inequality that implies decay of CMI but only prove a weaker version. When $\CN$ is only CP, we are only able to show the exponential decay of CMI for the measurement outcomes of MPDO. This figure omits the discussion on the trace norm CMI.}
\label{fig:MPDO_flow}
\end{figure}

\vspace{0.2in}
{\bf Main result. } In this work, we explore Question~\ref{question:main} in certain locally purifiable MPDO and point out missing technical conjectures that may lead to more general results.

We first study the quantum CMI of each tripartition of states generated by 1-input/2-output, ``Y-shaped" in short, channels. 
In particular, for bistochastic Y-shaped channels, we obtain an analytical bound for the CMI with a decay rate constant  (Theorem~\ref{thm: bistochastic}). The decay rate constant is strictly smaller than one if the channel has trivial correctable algebra, as defined in operator-algebra quantum error correction theory. We then generalize the argument for a slightly larger class of channels and derive a weaker bound on the CMI with another decay rate constant (Theorem~\ref{thm: partially invariant}). Therefore we show the exponential decay of the CMI (and a trace norm analog of CMI) for these MPDO.  We also prove the exponential decay of CMI for Y-shaped channels with a forgetful component which, in particular, include every strictly positive Y-shaped channels (Proposition~\ref{prop:forgetful}).

For general Y-shaped channels, we show that the CMI $I(A:C|B)$ must strictly decrease if one applies a channel with trivial correctable algebra on $C$ (Proposition~\ref{prop:strictDPI}). To bridge this result to Question~\ref{question:main}, we proposed a conjecture in the form of a data processing inequality for CMI with an explicit decay rate (Conjecture~\ref{ccDPI}), which would imply exponential decay of CMI (Proposition~\ref{prop:DPItoCMI}) for MPDO generated by such Y-shaped channels. The conjecture is known to be true for classical systems, however no result is known for quantum systems. 

We further study the CMI after local measurements on the conditional system $B$ in the computational basis for general locally purifiable MPDO (generated by completely positive (CP) Y-shaped maps)\footnote{ For technical reasons, the proof works for CP maps without the trace-preserving (TP) constraint, whereas in the unmeasured case it has to be a channel (CPTP-map). }. 
We prove two sufficient conditions, one is stronger than the other, that guarantee the exponential decay of the CMI of the measured MPDO (Theorem~\ref{mainthm}). 
We then provide a simple polynomial algorithm to check the stronger condition (Proposition~\ref{prop:algorithm}), and we numerically find the condition generically holds for MPDO generated by Y-shaped channels whose Stinespring unitaries are sampled from the Haar measure. 
 
\vspace{0.2 cm} 
 
{\bf Proof ideas.} The proof for the decay of CMI for bistochastic channel (Theorem~\ref{thm: bistochastic}) relies on decomposing the state into an uncorrelated state on $AB$ and $C$ plus a deviation which is traceless on $C$. We show the Hilbert-Schimdt norm of the deviation contracts under a bistochastic channel. In Theorem~\ref{thm: partially invariant}, we instead consider contraction of the deviation under the trace norm, using the associated tools for trace norm. For Y-shaped channels with a forgetful component (Proposition~\ref{prop:forgetful}), we use relative entropy convexity to show the CMI contracts at each step. The strict decay of the CMI for general channels (Proposition~\ref{prop:strictDPI})  follows from techniques in operator-algebra quantum error correction and properties of the Petz recovery map.

For the measured MPDO (Theorem~\ref{mainthm}) we are conditioning on a classical system $\bar{B}$, which implies the CMI $I(A:C|\bar{B})$ equals to the average of the mutual information for each outcome state $\rho_{AC,\bar{B}=b}$. The outcome states are constructed by sequential CP self-maps, and these are contractions in {\textit{Hilbert's projective metric}}. The CMI decays exponentially if the contraction ratio of every CP-map is strictly less than 1, which holds if the maps are all strictly positive (Condition 2). We also provide a stronger condition, Condition 1, which guarantees strict positivity after certain coarse-graining.

{\bf Structure of the paper.} 
We introduce basic concepts on MPDO and a few backgrounds in Section~\ref{preliminary}. In Section~\ref{main results}, we state our results and a conjecture on the data-processing inequality. The proofs are presented in Section~\ref{sec:proof} with several lemmas, whose detailed proof is left to Section~\ref{sec:proof of lemmas}.

\section{Preliminary} \label{preliminary}
In this section, we introduce basic notations and quantum information theoretical concepts that will be used in this paper. A quantum state (density operator) $\rho$ is a bounded operator on a finite-dimensional Hilbert space $\cH$ satisfying positivity $\rho\geq0$ and unit trace $\tr\rho=1$. We denote the set of quantum states on $\cH$ by $\cS(\cH)$.  Quantum systems are often denoted by capital letters $A,B,C,...$ and we abuse the same notation for the associated Hilbert spaces and the bounded operator spaces.  We denote the completely mixed state on a Hilbert space $B$ by $\tau_B$. The reduced state of $\rho$ associated to the system $A$ is denoted by $\rho_A$.  
$\|\cdot\|$ is the operator norm, $\|\cdot\|_p$ is the $p$-Schatten norm, $\|\cdot\|_{p-q}$ is the $p-q$ superoperator norm, and $\|\cdot\|_{cb}$ is the completely bounded 1-1 norm.

\subsection{Matrix Product Density Operators}
A general open boundary (uniform) MPDO $\rho\in\cS(\cH_n)$ is a quantum state $\rho$ written as 
\begin{align}
    \rho=\sum_{s_1,s'_1,...,s_n,s'_n}\<L|A^{s_n,s'_n}_n...A_1^{s_1,s'_1}|R\>\times |s_n...s_1\>\<s'_n...s'_1|\,,\label{eq:generalMPDO}
\end{align}
where each $\{A_k^{s_is'_i}\}_{s_i,s'_i}$ is a set of  $D$-dimensional matrices and $|L\>,|R\>$ are $D$-dimensional vectors with a constant $D$. Here, $A_k^{s_1s'_1}, |L\>$ and $|R\>$ are chosen so that positivity $\rho\geq0$ is satisfied for arbitrary $n$. This is a non-trivial condition and in general it is computationally hard to determine whether a given MPO is positive or not~\cite{PhysRevLett.113.160503}. In this paper, we will specialize to specific sub-classes of MPDO to guarantee the positivity of the state.

\begin{figure}[htbp]  
\begin{tikzpicture}[inner sep=1mm]
    \foreach \i in {1,...,5} {
        \node[tensor] (\i+0) at (-\i, 1) {$A_\i$};
        \node (\i spin) at (-\i, 0.3) {$\bra{s'_\i}$};
        \node (\i spinu) at (-\i, 1.7) {$\ket{s_\i}$};
        \node (\i tensor) at (-\i, 0.82) {};
        \draw[-] (\i+0) -- (\i spin);
	\draw[-] (\i+0) -- (\i spinu);
    };
    \foreach \i in {1,...,4} {
        \pgfmathtruncatemacro{\iplusone}{\i + 1};
       \draw[-] (\i+0) -- (\iplusone +0);
    };
      \node (spinl1) at (-6, 1) {$\bra{L}$};
      \draw[-] (5+0) -- (spinl1);
      \node (spinr0) at (0, 1) {$\ket{R}$};
      \draw[-] (1+0) -- (spinr0); 
\end{tikzpicture}

\caption{The tensor network representation for general MPDO~\eqref{eq:generalMPDO}. It differs from MPS that both bra and ket indices are distinguished, describing an operator instead of a vector. 
}
\label{MPDO_graphic}
\end{figure}
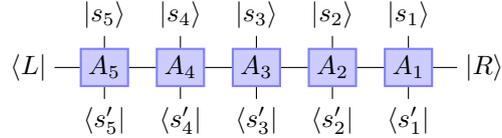

Throughout this paper, we consider MPDO constructed by concatenating Completely-Positive (CP) maps, i.e., locally purifiable MPDO. Let $\cH, \cK$ be finite-dimensional Hilbert spaces and $\cN:\cB(\cH)\to\cB(\cK\ot\cH)$ be a 1-input/2-output CP-map where one of the two output systems has the same dimension as the input. We interchangeably refer such a map as a {\it Y-shaped} map. For some initial state $\sigma$ and $\ell\in\mathbb{N}$, we obtain a (possibly unnormalized) state $\rho^\sigma_{\ell}(\CN)\in\cS(\cK^{\ot \ell}\ot\cH)$ defined by
\begin{equation}\label{def:family1}
    \rho^\sigma_\ell(\CN):=\left(\id_{\cK^{\ot (\ell-1)}}\ot\cN\right)\circ...\circ\left(\id_\cK\ot\cN\right)\circ\cN[\sigma],
\end{equation}
where $\id_\cK$ is the identity map on $\CK$. We mostly abbreviate $\rho_\ell^\sigma(\cN)$ as $\rho_\ell^\sigma$. This kind of states are classified as a finite-dimensional $C^*$-finitely correlated states~\cite{Fannes1992}~\footnote{Eq.~\eqref{def:family1} contains states without the consistency constraint imposed on the input state  $\tr_B\rho^\sigma_{BC}=\sigma$, which is required for $C^*$-infinite finitely correlated states.}.
By using the Kraus representation $\cN(X)=\sum_iT_iXT_i^\dagger$, one can easily verify $\rho_\ell^\sigma$ is a MPDO. 

The normalization $\tr\rho^\sigma_\ell=1$ is guaranteed when $\sigma$ is a quantum state and $\CN$ is a quantum channel, i.e., CP and Trace-Preserving (CPTP) map. 
Since 1D local Gibbs states always have exponentially decaying two-point correlation~\cite{Araki1969}, we are interested in  Eq.~\eqref{def:family1} with finite correlation length. The MPDO has a finite correlation length if $\tr_\cK\circ\cN$ has unique maximum eigenvalue 1, and especially the correlation length is exactly zero when $\tr_\cK\circ\cN$ is a constant channel (Fig.~\ref{fig:chain of Y-shaped}). The choice of $\sigma$ is rather arbitrary in finite systems. For convenience we mainly choose $\sigma$ as one side of the maximally entangled state, and denote the corresponding state on $\cH\ot\cK^{\ot \ell}\ot \cH$ by $\rho_\ell$. We often denote $A$ and $C$ to be two systems $\cH$ at the end, and $B_i\, (i=1,...,\ell)$ to be the remaining systems $\cK$.  Note that we can recover arbitrary $\rho^\sigma_\ell$ from $\rho_\ell$ by applying a suitable positive operator on $A$, which is the other side of the maximally entangled state $\sigma_{A\bar{A}}$, and then trace out $A$. 
\begin{figure}[htbp]  
\includegraphics[width = 0.7\textwidth]{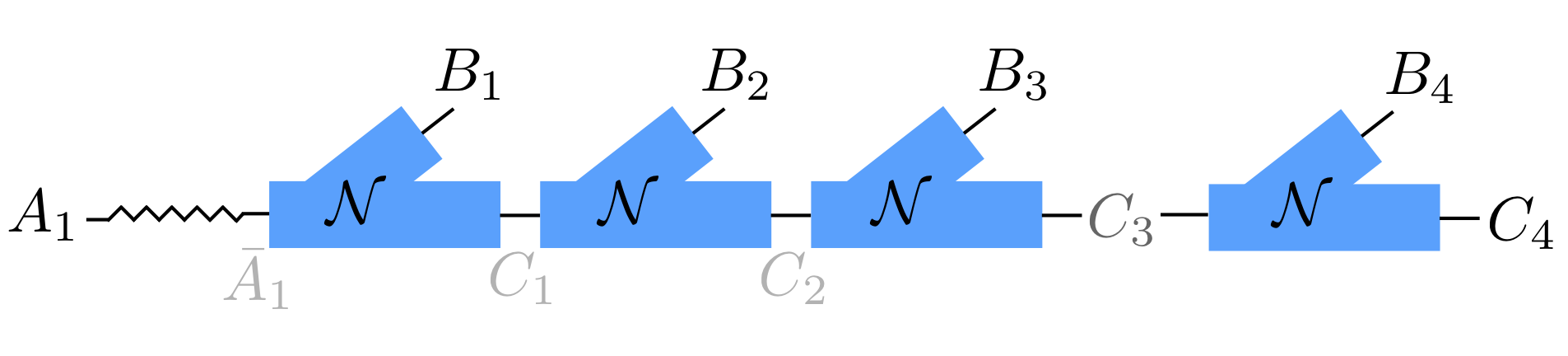}
\includegraphics[width = 0.7\textwidth]{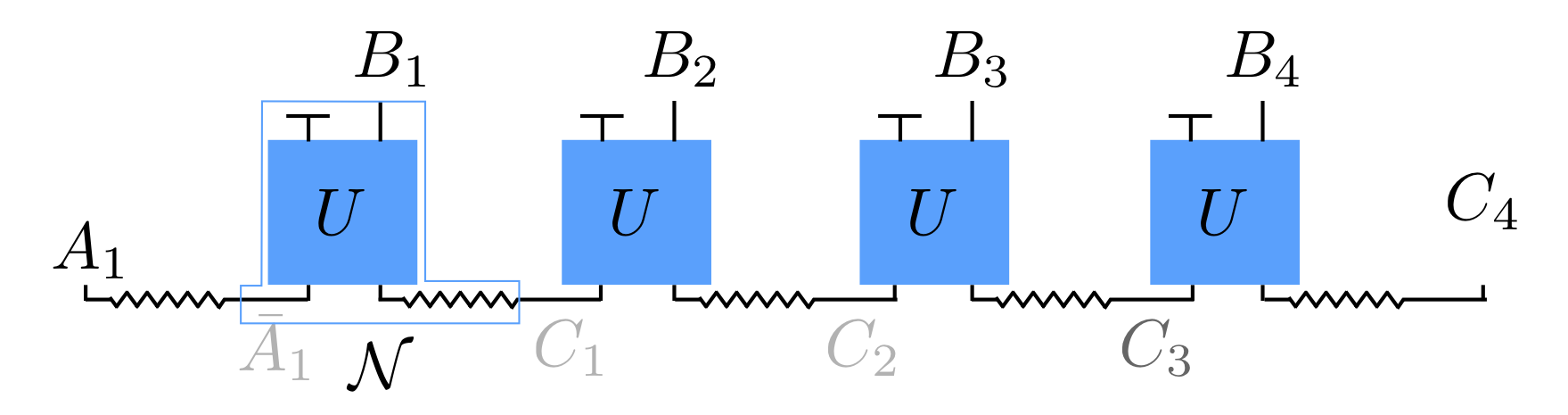}

\caption{(Top) The chain of Y-shaped channel $\CN$. The zigzag line represents a maximally  entangled pair. When we concatenate another channel to system $AB_1B_2B_3C_3$, subsystem $C_3$ is mapped to system $B_{4}C_{4}$.  (Bottom) Y-shaped channel with zero correlation length can be generated by unitaries or isometries and maximally entangled pairs, where tracing out any system $B_k$ makes the left and right segment uncorrelated.}
\label{fig:chain of Y-shaped}
\end{figure}

When we further perform a projective measurement on $B$ to Eq.~\eqref{def:family1} in the computational basis $\{|s_1s_2...s_l\>_B\}$, the output subsystem ${\bar B}$ becomes entirely classical (Fig.~\ref{fig:measured MPDO}). The corresponding Y-shaped map $\CN$ can then be decomposed into 
$\CN[\cdot] = \sum_{s_k} \ket{s_k}\bra{s_k}\otimes  \CM_{s_k}[\cdot]$, where $\CM_s[\cdot]$ is a CP self-map defined as
 \begin{align}\label{eq:CP self-maps}
\CM_{s_k}[\rho]&:=\<s_k|_{B_k}\left(\cN[\rho]\right)|s_k\>_{B_k}\,.
 \end{align}
Note that if $\CN$ is TP, then $\sum_s\CM_s$ is a CPTP-map and thus $\{\CM_s\}_s$ forms a quantum instrument.

 We can rewrite the measured MPDO by 
 \begin{align}\label{eq:measured MPDO}
 \rho_{A\bar{B}C}  = \sum_b p(b)\ket{b}\bra{b}_{\bar B} \otimes \rho_{AC, b}, 
 \end{align}
 where each $\rho_{AC, b}$ is an output state with a particular measurement outcome $b=\{s_1,...,s_l\}$
\begin{equation}
 \rho_{AC, b} =\frac{M_b [\sigma_{A\bar{A}}]}{\tr(M_b[\sigma_{A\bar{A}}])}:=\frac{\CM_{s_{\ell}}\cdots \CM_{s_1} [\sigma_{A\bar{A}}]}{\tr(\CM_{s_{\ell}}\cdots \CM_{s_1} [\sigma_{A\bar{A}}])},
 \end{equation}
 with probability 
 \begin{equation}
 p(b) =  \tr(\CM_{s_{\ell}}\cdots \CM_{s_1} [\sigma_{A\bar{A}}]).
 \end{equation}
\begin{figure}[ht]  
\includegraphics[width = 0.7\textwidth]{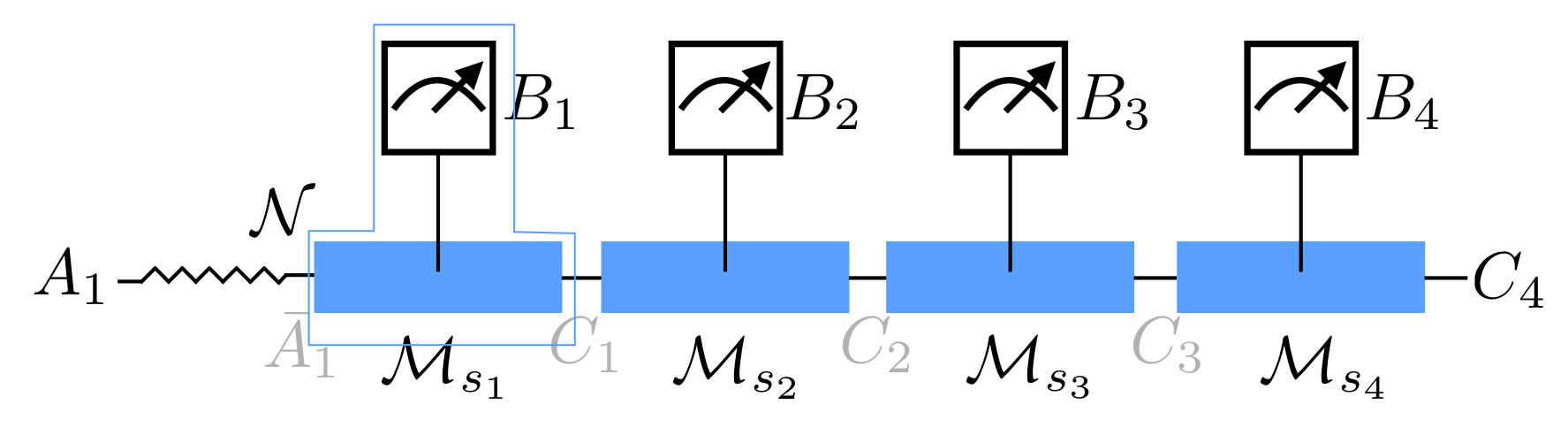}
\caption{The chain of Y-shaped CP maps reduced to sequences of CP self-maps upon measurement outcomes on $B_1, \ldots, B_4$. It mimics a classical hidden Markov chain, with the extra bit of the quantumness confined in system $A$ and $C$.}
\label{fig:measured MPDO}
\end{figure}

\subsection{Strong data-processing inequality constants}\label{sec:prelimsdpi}
The quantum relative entropy $D(\rho\|\sigma):=\tr\rho\log_2(\rho-\sigma)$ 
is a distance-like measure between two quantum states $\rho,\sigma$. 
One crucial feature of the relative entropy is that it obeys the {\it data-processing inequality (DPI)}: for any states $\rho, \sigma$ and any CPTP-map $\CE$, it follows that~\cite{Linbald1975,Uhlmann:1977aa}
\begin{align}
    D\left(\CE[\rho]\|\CE[\sigma]\right)\leq D(\rho\|\sigma). 
\end{align} 
The above DPI implies the monotonicity of the mutual information $I(A:C)_\rho :=D(\rho_{AC}\|\rho_A\ot\rho_C)$ and the conditional mutual information $I(A:C|B)_\rho$ under a local CPTP-map $\CE_C:C\to C'$, that is,
\begin{align}
    I(A:C')_{\CE_C(\rho_{AC})}&\leq I(A:C)_{\rho_{AC}}\,,\\
    I(A:C'|B)_{\CE_C(\rho_{ABC})}&\leq I(A:C|B)_{\rho_{ABC}}\,.
\end{align}
The equality holds if and only if there is a CPTP-map $\cR_{C'}:{C'\to C}$ such that $\CR\circ\CE[\rho]=\rho$~\cite{Petz1986,Petz1988}. 

{\it Strong DPI constants}, or the {\it contraction coefficients} are multiplicative factors in the above monotonicity inequalities. For the mutual information and CMI, they are defined as
\begin{align}
    \eta_{AC}(\CE_C)&:=\sup_{\rho_{AC}}\frac{I(A:C')_{\CE_C[\rho_{AC}]}}{I(A:C)_{\rho_{AC}}}\,,\\
    \eta_{ABC}(\CE_C)&:=\sup_{\rho_{ABC}}\frac{I(A:C'|B)_{\CE_C[\rho_{ABC}]}}{I(A:C|B)_{\rho_{ABC}}}\,,
\end{align}
where the values are bounded as $0\leq \eta_{AC}(\CE_C),\eta_{ABC}(\CE_C)\leq1$ by monotonicity. Note that $\eta_{ABC}(\CE_C)$ reduces to $\eta_{AC}(\CE_C)$ by taking $B$ to be trivial. The strong DPI constants bound how much correlations are preserved after applying the channel on subsysem $C$.

In the classical case, the strong DPI constants for different quantities are equivalent.
\begin{thm}[{Tight contractive DPI for classical mutual information~\cite[Theorem~4]{anantharam2013maximal} }]\label{thm:classicalSDPIc}
For any probability distribution $p_{AC}$ and any classical channel $\CE:C\to C'$, 
\begin{equation}
    \eta_{AC}(\CE)=\eta_{ABC}(\CE)=s^*(\CE)\,,
\end{equation}
where $s^*(\CE)$ is the strong DPI constant for the relative entropy on $C$
\begin{equation}
    s^*(\CE):= \sup_{\substack{p,q\\ p\neq q}} \frac{D\left(\CE\left[q_C\right]\right.\left\|\CE\left[p_C\right]\right)}{D\left(p_C\right.\left\|q_C\right)}\,.
\end{equation}
\end{thm}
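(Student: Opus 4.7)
I would prove the chain of equalities by the two-sided sandwich
\begin{equation}
s^*(\CE)\;\le\;\eta_{AC}(\CE)\;\le\;\eta_{ABC}(\CE)\;\le\;s^*(\CE).
\end{equation}
The middle inequality is immediate: setting $B$ to a one-point alphabet shows that every ratio appearing in the supremum defining $\eta_{AC}(\CE)$ already appears in the supremum defining $\eta_{ABC}(\CE)$. The substantive content is the two outer inequalities.

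For $\eta_{AC}(\CE)\ge s^*(\CE)$, the plan is a two-point embedding that turns a pair $p\ne q$ into a joint distribution on $A\times C$ whose mutual-information ratio recovers the relative-entropy ratio in the limit. Fix $p\ne q$ on $C$ with $q\ll p$, and for $\epsilon\in(0,1)$ set $p^\epsilon_A(0)=1-\epsilon$, $p^\epsilon_A(1)=\epsilon$, $p^\epsilon_{C|A=0}=p$, $p^\epsilon_{C|A=1}=q$, so that $p^\epsilon_C=(1-\epsilon)p+\epsilon q$. A direct Taylor expansion gives
\begin{equation}
I(A:C)_{p^\epsilon}=(1-\epsilon)\,D(p\|p^\epsilon_C)+\epsilon\, D(q\|p^\epsilon_C)=\epsilon\, D(q\|p)+o(\epsilon),
\end{equation}
using $D(p\|p^\epsilon_C)=O(\epsilon^2)$ and $D(q\|p^\epsilon_C)\to D(q\|p)$. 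Applying the same expansion to $\CE[p^\epsilon_{AC}]$ yields $I(A:C')_{\CE[p^\epsilon]}=\epsilon\, D(\CE[q]\|\CE[p])+o(\epsilon)$. Taking $\epsilon\downarrow 0$ in the ratio, then the supremum over $p\ne q$, reproduces $s^*(\CE)$.

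For $\eta_{ABC}(\CE)\le s^*(\CE)$, classicality of $B$ is essential and lets me use the pointwise chain-rule decomposition
\begin{equation}
I(A:C|B)_{p_{ABC}}=\sum_b p(b)\,I(A:C)_{p_{AC|b}}=\sum_{a,b}p(a,b)\,D(p_{C|a,b}\,\|\,p_{C|b}).
\end{equation}
Local application of $\CE$ on $C$ leaves the marginals $p(a,b)$ and $p(b)$ unchanged, and by the definition of $s^*(\CE)$,
\begin{equation}
D(\CE[p_{C|a,b}]\,\|\,\CE[p_{C|b}])\le s^*(\CE)\, D(p_{C|a,b}\,\|\,p_{C|b})
\end{equation}
for each $(a,b)$. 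Summing against $p(a,b)$ gives $I(A:C'|B)_{\CE[p]}\le s^*(\CE)\, I(A:C|B)_p$, and taking the supremum over $p_{ABC}$ yields $\eta_{ABC}(\CE)\le s^*(\CE)$.

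\textbf{Main obstacle.} The one step that is genuinely non-trivial to port outside the classical setting is the upper bound: the pointwise identity $I(A:C|B)=\sum_b p(b)I(A:C)_{p_{AC|b}}$ depends crucially on $B$ being classical, so that conditional distributions make literal sense and the CMI reduces to an average of ordinary mutual informations. Without this, one cannot pull the single-shot strong DPI constant $s^*(\CE)$ out of the conditional object, which is precisely the source of the difficulty motivating the rest of the paper. A smaller technical care in the lower-bound step is the behavior on the simplex boundary when $q\not\ll p$: since $D(q\|p)=+\infty$ contributes nothing to the supremum defining $s^*$, restricting to $q\ll p$ is harmless, and one can then pass to the closure by lower-semicontinuity of the relative entropy.
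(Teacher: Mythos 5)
Your proof is correct, and since the paper states this theorem as a citation of Anantharam et al.\ rather than proving it, your argument --- the two-point perturbative embedding $p^\epsilon_A=(1-\epsilon,\epsilon)$ for the lower bound and the classical decomposition $I(A:C|B)=\sum_{a,b}p(a,b)\,D(p_{C|a,b}\|p_{C|b})$ together with linearity of $\CE$ for the upper bound --- is precisely the standard proof of the cited result and matches the paper's own remark that the classical (C)MI is a convex combination of relative entropies of conditional distributions on $C$. The only nit is that the paper's displayed definition of $s^*(\CE)$ has mismatched argument ordering ($D(\CE[q]\|\CE[p])$ in the numerator against $D(p\|q)$ in the denominator), evidently a typo for the consistent ordering that your argument implicitly and correctly uses.
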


Remarkably, $\eta_{AC}(\CE)$ and $\eta_{ABC}(\CE)$ are independent of $A, B$ in the classical case. 
The crucial point is that the classical mutual information and the CMI are functions of conditional probability distribution on $C$. 
These functions are then written by convex combinations in classical systems, and regardless of how large auxilary classical system is involved, the Carath\'eodory theorem implies the auxilary system can always be reduced to dimension $|C|+1$~\cite{Danninger-Uchida2001}.
 
Unfortunately, no analog of conditional distribution nor the Carath\'eodory-like cardinality bound has been found for quantum systems (see e.g., Ref.~\cite{6555753} for a discussion on this problem). Therefore the classical approach failed to work in quantum regime. 
For this reason, for quantum systems it remains open whether Theorem~\ref{thm:classicalSDPIc} also holds, or whether the strong DPI constants are independent of the size of the auxiliary systems or not.

\subsection{The conditional mutual information and Gibbs states}\label{sec:prelimscmi}
The CMI and Gibbs states are intimately connected. In classical systems, the Hammersley-Clifford theorem~\cite{HCthm} states that a (positive) probability distribution $p_{XYZ}$ is a Gibbs distribution
\begin{equation}\label{eq:classicalGibbs}
    p_{XYZ}(x,y,z)=\frac{e^{-h_{XY}(x,y)-h_{YZ}(y,z)}}{Z}\,,
\end{equation}
where $Z$ is the normalization constant, if and only if $I(X:Z|Y)_p=0$. Moreover, CMI can be written as
\begin{align}\label{eq:classicalCMI}
    I(X:Z|Y)_p=\min_{q:I(X:Z|Y)_q=0}D(p_{XYZ}\|q_{XYZ})\,,
\end{align}
and thus the small value of the CMI implies the state is close to a Gibbs state~\eqref{eq:classicalGibbs}. These results are naturally extended to 1D classical spin chains.

Although Eq.~\eqref{eq:classicalCMI} does not hold in general quantum systems, Ref.~\cite{KB16} shows the following bound:
\begin{thm}[Theorem 1,\cite{KB16}]\label{thm:KB16} Let $\rho_{A_1...A_n}$ be a state satisfying $I(A_1...A_{k-1}:A_k...A_n|A_k)\leq\varepsilon$. Then
there exists a local Hamiltonian $H=\sum_{i=1} h_{A_i,A_{i+1}}$ with $h_{A_i,A_{i+1}}$ only acts on $A_iA_{i+1}$, such that the relative entropy is controlled by 
\begin{equation}
     D\left(\rho\left\|\frac{e^{-H}}{\tr e^{-H}}\right.\right)\leq n\varepsilon\,.
\end{equation}
\end{thm}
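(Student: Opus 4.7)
My plan is to take $\sigma$ to be a maximum-entropy state compatible with the two-local marginals of $\rho$, identify it as a two-local Gibbs state via Jaynes' principle, and then bound $D(\rho\|\sigma)$ by two complementary uses of strong subadditivity (SSA).

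First I would define $\sigma$ as any maximizer of $S(\omega)$ over the convex set $\{\omega : \omega_{A_k A_{k+1}} = \rho_{A_k A_{k+1}}, \ k=1,\dots,n-1\}$. This set is nonempty (it contains $\rho$), so by Jaynes' principle the maximizer has the Gibbs form $\sigma = e^{-H}/\tr e^{-H}$ whose Lagrange multipliers assemble into a two-local Hamiltonian $H = \sum_{k=1}^{n-1} h_{A_k A_{k+1}}$; this yields the Hamiltonian required by the theorem. Since $\log\sigma = -H - (\log Z)\,\idty$ is two-local and $\rho,\sigma$ share all two-local marginals, $\tr(\rho\log\sigma) = \tr(\sigma\log\sigma)$, and hence
\begin{equation*}
D(\rho\|\sigma) = -S(\rho) - \tr(\rho\log\sigma) = S(\sigma) - S(\rho).
\end{equation*}
It therefore suffices to show $S(\sigma) - S(\rho) \le (n-2)\varepsilon$.

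For the upper bound on $S(\sigma)$, I would iterate SSA $S(ABC) \le S(AB) + S(BC) - S(B)$ along the chain to obtain, for any state $\omega$,
\begin{equation*}
S(\omega) \le \sum_{k=1}^{n-1} S(\omega_{A_k A_{k+1}}) - \sum_{k=2}^{n-1} S(\omega_{A_k}) =: S_{\mathrm{M}}(\omega),
\end{equation*}
so that $S(\sigma) \le S_{\mathrm{M}}(\sigma) = S_{\mathrm{M}}(\rho)$ by the marginal-matching. For the lower bound on $S(\rho)$, I would rewrite the same telescoping \emph{with} the deficit terms preserved, giving the exact identity
\begin{equation*}
S(\rho) = S_{\mathrm{M}}(\rho) - \sum_{k=2}^{n-1} I(A_1\cdots A_{k-1} : A_{k+1} \mid A_k)_\rho,
\end{equation*}
and control each deficit via the data-processing inequality, $I(A_1\cdots A_{k-1}:A_{k+1}\mid A_k)_\rho \le I(A_1\cdots A_{k-1}:A_{k+1}\cdots A_n \mid A_k)_\rho \le \varepsilon$. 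Subtracting yields $D(\rho\|\sigma) = S(\sigma)-S(\rho) \le (n-2)\varepsilon \le n\varepsilon$.

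The only real subtlety, such as it is, is that one should \emph{not} try to make $\sigma$ a quantum Markov chain with the prescribed two-local marginals, since such a chain generically does not exist in the quantum setting (unlike the classical Hammersley--Clifford picture). The max-entropy/Jaynes construction dodges this obstruction: feasibility is witnessed automatically by $\rho$, and Jaynes' principle simultaneously provides the two-local Gibbs form (essential for the $\tr\rho\log\sigma = \tr\sigma\log\sigma$ cancellation) and the existence of the local parent Hamiltonian required by the theorem.
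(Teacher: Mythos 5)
This theorem is imported verbatim from Ref.~\cite{KB16} and the present paper supplies no proof of its own; your argument is correct and is essentially the proof given in that reference: maximize entropy over states matching the two-body marginals of $\rho$, invoke Jaynes' principle to obtain the two-local Gibbs form, use the marginal matching to reduce $D(\rho\|\sigma)$ to $S(\sigma)-S(\rho)$, and control that difference by the telescoped strong-subadditivity identity together with data processing applied to the CMI hypothesis (your $(n-2)\varepsilon$ is in fact slightly sharper than the stated $n\varepsilon$). The only point worth flagging is the standard technicality in the Jaynes step — if the entropy maximizer fails to be full rank the Lagrange multipliers can diverge, so one should either observe that the maximizer has maximal support over the feasible set or pass to a limit of full-rank Gibbs states — but this does not affect the bound.
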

If $\varepsilon=0$, we recover the quantum Hammersley-Clifford theorem~\cite{Hayden2004,brown2012quantum} and the Hamiltonian $H$ is the sum of terms like $-\log\rho_{A_i,A_{i+1}}$.  

When MPDO has exponentially decaying CMI for appropriate tripartitions, then we first coarse-grain sufficiently many (possibly logarithmic w.r.t. $1/\varepsilon$) neighbouring sites as one site and then apply Theorem~\ref{thm:KB16} to show that the MPDO is well-approximated by a local Gibbs state. 
\begin{cor}
Suppose $\rho_{B_1...B_m}$ satisfies exponential decay of CMI for every tripartition 
\begin{align}
I\left(B_1\cdots B_n: B_{n+\ell+1}\cdots B_{m}|B_{n+1}\cdots B_{n+\ell} \right) \le \CO( e^{- c \ell } )   .
\end{align}
Then coarse graining $\ell = \Omega\left( \frac{1}{c} \log(m/\epsilon) \right)$ sites ensures (e.g., $B_1\cdots B_\ell := A_1$ ) there exist a (quasi-)local Hamiltonian $H=\sum_{i=1} h_{A_i,A_{i+1}}$ such that
\begin{align}
    D\left( \rho_{A_1...A_m}|| \frac{ e^{-H}}{\tr[e^{-H}]}\right) \le \epsilon.
\end{align}
\end{cor} 
Note that relative entropy estimate converts to fidelity by
\begin{align}
    D(\rho||\sigma) \ge - 2\log(F(\rho,\sigma)).
\end{align}
\subsection{The correctable algebra and recovery channels}
To characterize what information is left undisturbed by a channel, we use the theory of operator-algebra quantum-error correction~\cite{Beny2007,Beny2007a}. For a given channel $\CE:A\to A'$, we define the {\it correctable algebra} $\CA(\CE)\subset \CB(\CH_A)$ as 
\begin{align}\label{def:correctable algebra}
    \CA(\CE):={\rm Alg}\left\{O_A \left| \left[O_A, E_a^\dagger E_b\right]=0\,,\;\;\forall a,b\right. \right\}.
\end{align}
$\CA(\CE)$ is a finite-dimensional $C^*$-algebra containing all observables whose information is perfectly preserved under $\CE$. One can always perfectly recover the information associated to an observable in the correctable algebra: there exists a CPTP-map $\CR:A'\to A$ called {\it recovery map} such that 
\begin{align}
    \CE^\dagger\circ\CR^\dagger(O_A)=O_A\,,\quad \forall O_A\in \CA(\CE)\,.
\end{align}
This implies that for any $O_A\in\cA(\CE)$, there always is a corresponding observable $\CR^\dagger(O_A)$ on $A'$ such that $\tr(O_A\rho_A)=\tr\left(\CR^\dagger(O_A)\CE(\rho_A)\right)$ for all $\rho_A$. 
The condition in Eq.~\eqref{def:correctable algebra} is a generalization of the well-known Knill-Laflamme condition~\cite{PhysRevLett.84.2525} in the standard quantum error correction theory, in which $\CA(\CE)=\CB(\CH_{code})$ for a protected code subspace $\CH_{code}$.

A common candidate for the recovery channel is the Petz map:
\begin{defn}The Petz recovery map $\CP_{\sigma,\CE}$ with the reference state $\sigma$ and the target channel $\CE$ is defined as
\begin{equation}\label{eq:Petz}
\CP_{\sigma,\CE}[\cdot] :=\sigma^{1/2}\CE^\dagger\left[\CE(\sigma)^{-1/2}(\cdot)\CE(\sigma)^{-1/2}\right]\sigma^{1/2}.
\end{equation}
\end{defn}
As it only features moderately in this paper, we save the vast literature on universal recovery channel elsewhere (see e.g.,~\cite{Junge_2018}), and include the relevant ideas  on invariant subspaces in Appendix~\ref{sec:Petz and correctable}. This yields another characterization of the correctable algebra handy for our purposes.
\begin{prop}\label{invar_cor}
The correctable algebra of a channel $\CE$ equals to the fixed-point algebra of the Petz-recovery channel composed with channel $\CP_{\tau, \CE}\circ\CE$ (and the dual $\CE^\dagger \circ \CP_{\tau, \CE}^\dagger$, due to being self-adjoint).
\begin{equation}
S_{\CE^\dagger \circ \CP_{\tau, \CE}^\dagger} = S_{\CP_{\tau, \CE}\circ\CE} = \CA(\CE)    
\end{equation}
\end{prop}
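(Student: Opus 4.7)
The plan is to prove two things: (i) that the two compositions $\CP_{\tau,\CE}\circ\CE$ and $\CE^\dagger\circ\CP_{\tau,\CE}^\dagger$ actually coincide as maps (so that their fixed-point sets are literally the same set), and (ii) that this common fixed-point set equals the correctable algebra $\CA(\CE)$. For (i), I would compute directly. Since $\tau=I/d$ is the maximally mixed state, $\tau^{1/2}=I/\sqrt{d}$ is a scalar and pulls straight through $\CE$; plugging into the definition of the Petz map gives
\[
    \CP_{\tau,\CE}\circ\CE[X] \;=\; \tfrac{1}{d}\,\CE^\dagger\bigl[\CE(\tau)^{-1/2}\CE(X)\CE(\tau)^{-1/2}\bigr] \;=\; \CE^\dagger\circ\CP_{\tau,\CE}^\dagger[X],
\]
establishing self-adjointness. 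It then suffices to identify the common fixed-point algebra $S$ with $\CA(\CE)$.

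For the inclusion $\CA(\CE)\subseteq S$, I would take $O\in\CA(\CE)$ and expand
\[
    \CE^\dagger\circ\CP_{\tau,\CE}^\dagger[O] \;=\; \tfrac{1}{d}\sum_{a,b} E_a^\dagger\,\CE(\tau)^{-1/2}\,E_b\,O\,E_b^\dagger\,\CE(\tau)^{-1/2}\,E_a,
\]
and appeal to the Knill--Laflamme-type relations $[O,E_a^\dagger E_b]=0$. The cleanest way to conclude is to invoke the structure theorem for finite-dimensional $C^*$-algebras: $\CA(\CE)$ decomposes the underlying space into sectors $\CH=\bigoplus_k \CH_{k,L}\otimes\CH_{k,R}$ on which the Kraus operators act only on the left factors while $O$ acts only on the right factors, so the weighting $\CE(\tau)^{-1/2}$ cancels sector-by-sector and the sum collapses back to $O$. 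Equivalently, one can quote the standard OAQEC result that the Petz recovery map with respect to a faithful reference state is a universal recovery channel for $\CA(\CE)$.

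For the converse $S\subseteq\CA(\CE)$, I would use that $\CE^\dagger\circ\CP_{\tau,\CE}^\dagger$ is a \emph{unital} CP map (both factors preserve the identity), so by Choi's theorem its fixed points lie in its multiplicative domain and form a $*$-subalgebra. The equality $\CE^\dagger\circ\CP_{\tau,\CE}^\dagger[O]=O$ is exactly the equality case of the data-processing inequality relative to $\tau$, and by Petz's theorem it forces $O$ to commute with the relevant Connes cocycle; in the tracial setting $\tau=I/d$ this collapses precisely to the Knill--Laflamme conditions $[O,E_a^\dagger E_b]=0$, i.e.\ $O\in\CA(\CE)$. The main obstacle is this converse direction: the forward inclusion is essentially bookkeeping once the sector decomposition is in hand, whereas the converse rests on the Petz equality theorem, which must be adapted from its usual state-valued formulation to the algebra-valued fixed-point statement we need here.
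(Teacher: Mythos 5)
Your proposal is correct and follows essentially the same route as the paper's proof: the self-adjointness $\CP_{\tau,\CE}\circ\CE=\CE^\dagger\circ\CP_{\tau,\CE}^\dagger$ is obtained by the same scalar-$\tau$ computation, the inclusion $\CA(\CE)\subseteq S$ rests on the same fact that the Petz map with maximally mixed reference is a universal subalgebra recovery map (the paper simply cites Chen--Penington--Salton for this, where you sketch the sector-by-sector Kraus computation), and the inclusion $S\subseteq\CA(\CE)$ amounts in both cases to the observation that a fixed point of $\CE^\dagger\circ\CP_{\tau,\CE}^\dagger$ is recoverable by $\CP_{\tau,\CE}$ and hence lies in the correctable algebra by the B\'eny--Kempf--Kribs characterization. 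Your converse is dressed up with the multiplicative-domain and Petz-equality/Connes-cocycle machinery where the paper just says ``by definition true,'' but the underlying content is the same standard OAQEC fact.
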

\section{Main results}\label{main results}
Here we state our main results in a concrete way. 

\subsection{Decay of CMI for bistochastic Y-shaped channels}\label{sec:bistochastic}
Our first main result is showing that any state constructed from a bistochastic~\footnote{Usually, bistochastic refers to a self-map channel in the literature. 
Here we have higher output dimension than input, but we abuse the terminology in this paper.}  Y-shaped channel (CPTP-map) is exponentially close to a product state on $AB$ and $C$. 
This provides an upper bound of the CMI (and also the trace norm CMI in Sec.~\ref{sec:trace norm CMI}). See Sec.~\ref{sec: sketch bistochastic} for the proof. 
\begin{thm} \label{thm: bistochastic}
For a bistochastic (unital) 
Y-shaped channel $\CN[\tau_{C_1}] = \tau_{B_1}\otimes \tau_{C_2}$, consider a family of states defined by
\begin{equation}
    \rho_{AB_1\cdots B_{\ell} C_{\ell}}:=\left(\id_\cH\ot\id_\cK^{\ot (\ell-1)}\ot\cN\right)\circ...\circ\left(\id_\cH\ot\id_{\cK}\ot\cN\right)\circ\cN[\sigma_{A\bar{A}}],
\end{equation}
where $\sigma_{A\bar{A}}$ is the maximally entangled state. 
Then, it holds that 
\begin{equation}
\lV \rho_{AB_1\cdots B_{\ell} C_{\ell}} - \rho_{AB_1\cdots B_{\ell}} \otimes \tau_C \rV_1 = O(\eta^\ell).\label{eq:1 close to pd}
\end{equation}
This implies the CMI is bounded as
\begin{align}
I(A: C|B_{1}\cdots B_{\ell})_{\rho_l} &=  O( \ell \eta^{\ell} ).
\end{align}
The contraction ratio $\eta$ is given as
\begin{equation}
\eta: = \limsup_{\rho_C \rightarrow \tau_C } \frac{D(\CN[\rho_C]\|\tau_{B_1C})}{D(\rho_C\|\tau_C)},
\end{equation}
where $\eta<1$ if and only if $\CN$ has trivial correctable algebra. Furthermore, $\eta$ coincides with the second eigenvalue of channel $\CZ:= \CP_{\tau,\,\CN}\circ\CN$, where $\CP_{\tau,\,\CN}$ is the Petz recovery map~\cite{10.1093/qmath/39.1.97} as in Eq.(\ref{eq:Petz}) with the completely mixed state $\tau$ as reference.
\end{thm}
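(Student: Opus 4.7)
The plan is to write $\rho_\ell$ as the sum of a factored reference state $\rho_{AB_1\cdots B_\ell}\otimes\tau_{C_\ell}$ and a deviation $\Delta_\ell$ that is traceless on $C_\ell$, and to show that $\Delta_\ell$ contracts geometrically under each application of $\CN$ via a spectral analysis of the Petz-composed self-map $\CZ$ on $\cB(\cH_C)$. First I would set
\begin{equation*}
\Delta_\ell \;:=\; \rho_\ell - \rho_{AB_1\cdots B_\ell}\otimes\tau_{C_\ell},\qquad \tr_{C_\ell}\Delta_\ell = 0,
\end{equation*}
and use bistochasticity $\CN[\tau_C] = \tau_B\otimes\tau_C$, which reproduces the factored piece after one application of $\id_{AB^\ell}\otimes\CN$, to obtain the clean recursion
\begin{equation*}
\Delta_{\ell+1} \;=\; \bigl(\id_{AB^\ell}\otimes\Pi^{\perp}_{C_{\ell+1}}\!\circ\CN\bigr)[\Delta_\ell],\qquad \Pi^{\perp}_{C}[X]:=X-\tau_C\otimes\tr_C X.
\end{equation*}

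The core of the argument is to quantify how much $\CN$ shrinks a traceless input on $C$. In the bistochastic case the Petz map at $\tau$ simplifies to $\CP_{\tau,\CN}[\,\cdot\,] = d_B\CN^\dagger[\,\cdot\,]$, so $\CZ = \CP_{\tau,\CN}\circ\CN = d_B\CN^\dagger\CN$ is a self-adjoint, positive semidefinite superoperator on $\cB(\cH_C)$ with spectrum in $[0,1]$. By Proposition~\ref{invar_cor} its unit eigenspace coincides with the correctable algebra $\CA(\CN)$, so $\lambda_2(\CZ)<1$ precisely when $\CA(\CN)$ is trivial. A second-order Taylor expansion of $D(\,\cdot\,\|\tau)$ around $\tau$ reduces the ratio defining $\eta$ to the Rayleigh quotient $\langle X,\CZ[X]\rangle/\|X\|_2^2$ maximized over traceless $X$, and hence identifies $\eta=\lambda_2(\CZ)$. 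Combined with the identity $\|\CN[X]\|_2^2 = \tfrac{1}{d_B}\langle X,\CZ[X]\rangle$ and an expansion of $\Delta_\ell$ in the $\CZ$-eigenbasis on $C$, this yields a geometric contraction of the $\tau$-weighted Hilbert--Schmidt norm $\|Y\|_{2,\tau}:=\sqrt{\tr(\tau^{-1}Y^\dagger Y)}$ in which the dimension factor $d_B$ picked up by $\CN$ is absorbed into the weighting.

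Finally, iterating the contraction and using the Cauchy--Schwarz-type bound $\|Y\|_1\leq\|Y\|_{2,\tau}$ gives $\|\Delta_\ell\|_1=O(\eta^\ell)$, which is exactly \eqref{eq:1 close to pd}; Alicki--Fannes--Winter continuity of the CMI applied to the reference state $\rho_{AB^\ell}\otimes\tau_C$, which has vanishing CMI because $C$ is in product, then yields $I(A:C|B^\ell)_{\rho_\ell}=O(\ell\eta^\ell)$, with the extra factor of $\ell$ coming from the logarithm of the total dimension $\sim d_B^\ell$. The main obstacle I anticipate lies precisely in this norm-balancing: because $\CN$ strictly increases the output dimension, naive Hilbert--Schmidt contraction estimates accumulate factors of $d_B$ at each step, and only the combination of the Petz simplification (which is specific to bistochastic channels), the $\tau$-weighting of the HS norm, and the projection onto the traceless-on-$C$ subspace makes those factors cancel cleanly, leaving only the intrinsic spectral gap $1-\eta$ of $\CZ$ as the per-step decay rate.
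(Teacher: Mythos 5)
Your proposal is correct and follows essentially the same route as the paper: the same decomposition into $\rho_{AB}\otimes\tau_C$ plus a $C$-traceless deviation, the same per-step contraction of that deviation in the normalized ($\tau$-weighted) Hilbert--Schmidt norm with the rate identified as $\lambda_2(\CP_{\tau,\CN}\circ\CN)$ via a second-order expansion of the relative entropy, the same Cauchy--Schwarz passage to the trace norm, and a continuity-of-entropy step to reach the CMI (the paper proves its own continuity lemma rather than citing Alicki--Fannes--Winter, but this is cosmetic). No gaps; the dimension-factor cancellation you flag as the main obstacle is handled identically in the paper's Lemma on the DPI for the normalized 2-norm.
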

\noindent{\bf Remark.} Here we only consider a particular tripartition $ABC$ to show the decay of the CMI. However, under additional moderate assumptions, the exponential decay of CMI of this tripartition implies the CMI is also small for other tripartitions like $A':=AB_1...B_n, B':=B_{n+1}...B_{n+m}, C':=B_{n+m+1}...B_\ell C_\ell$. To see this, we use the chain rule of CMI: $I(A':C'|B')_\rho=I(A:C_\ell|B_1...B_\ell)_\rho+I(A:B_{n+m+1}...B_\ell|B_1...B_{n+m})_\rho+I(B_1...B_n:B_{n+m+1}...B_\ell C_\ell|B_{n+1}...B_{n+m})_\rho$. The second term is upper bounded by $I(A:C_{n+m}|B_1...B_{n+m})_\rho$ by the monotonicity of the CMI, and thus the first and the second terms obey the exponential decay as $m$ grows. Bounding the third term may require an additional assumption. One way to bound the last term is by explicit calculation. For instance, it is 0 for the bistochastic case since $\rho_{BC}=\tau_{BC}$. Another way to bound it is assuming the zero correlation length condition (bottom, Fig.~\ref{fig:chain of Y-shaped}), under which the state can be generated by the inverse channel ${\tilde \CN}:C\to BA$ as well as $\CN$. This guarantees $I(B_1...B_n:B_{n+m+1}...B_\ell C_\ell|B_{n+1}...B_{n+m})_\rho$ is upper bounded by $I(A_n:C_{n+m}|B_{n+1}...B_{n+m})_\rho$, which is the CMI for $\rho_m$ up to the translation shift.

In this paper, we pay less attention to the case of periodic MPDO where output $C$ of channel got fed into the input $A$, as this gluing step might be tricky. In the bistochastic case this is tractable (Appendix~\ref{sec:periodic bistochastic MPDO}) and the state becomes even simpler as the maximally mixed state plus an exponentially decaying global operator. There, the parent Hamiltonian would be trivial in the thermodynamic limit.

A crucial property of bistochastic Y-shaped channels is they act as a contraction onto certain operator subspace. This implies
Eq.~\eqref{eq:1 close to pd} at large $\ell$, and then the CMI bound follows from the continuity of the entropy. An example of bistochastic Y-shaped channel is given as follows. 

\noindent{\bf Example:} A bistochastic $Y-$shaped channel $\CE_p:\cB({\mathbb C}^2)\to \cB({\mathbb C}^2\ot{\mathbb C}^2)$ defined by Kraus operators
\begin{align}
    K_1=\left[
\begin{array}{cc}
\frac{\sqrt{1-p}}{2} &\frac{\sqrt{1-p}}{2} \\
 0&0 \\
 \frac{\sqrt{1-p}}{2}&-\frac{\sqrt{1-p}}{2} \\
 0&0\\
\end{array}
\right],
K_2=\left[
\begin{array}{cc}
\frac{\sqrt{p}}{2} &\frac{\sqrt{p}}{2} \\
 0&0 \\
 -\frac{\sqrt{p}}{2}&\frac{\sqrt{p}}{2} \\
 0&0\\
\end{array}
\right],
K_3=\left[
\begin{array}{cc}
0&0\\\frac{\sqrt{1-p}}{2} &\frac{\sqrt{1-p}}{2} \\
 0&0 \\
- \frac{\sqrt{1-p}}{2}&\frac{\sqrt{1-p}}{2} \\
\end{array}
\right],
K_4=\left[
\begin{array}{cc}
0&0\\\frac{\sqrt{p}}{2} &\frac{\sqrt{p}}{2} \\
 0&0 \\
\frac{\sqrt{p}}{2}&-\frac{\sqrt{p}}{2} \\
\end{array}
\right],
\end{align}
has exponentially decaying CMI for $p\in(0,1/2)\cup(1/2,1)$. Otherwise,  $I(A:C|B)_{\rho_\ell}=0$  for $p=1/2$ and $I(A:C|B)_{\rho_\ell}=1$ for $p=0,1$. 

While being bistochastic is crucial to provide a decay rate defined by relative entropy in Theorem~\ref{thm: bistochastic}, we can work with slightly general channels with a likely suboptimal decay rate. See Sec.~\ref{sec: sketch partially invariant} for the proof. 
\begin{thm} \label{thm: partially invariant}
For a channel such that $\CN[\nu_{C_1}] = \sigma_{B_1}\otimes \nu_{C_2}$ for some $\sigma$ and $\nu$, and the maximally entangled input state $\sigma_{A\bar{A}}$, the resulting chain
\begin{equation}
    \rho_{AB_1\cdots B_{\ell} C_{\ell}}:=\left(\id_\cH\ot\id_\cK^{\ot (\ell-1)}\ot\cN\right)\circ...\circ\left(\id_\cH\ot\id_{\cK}\ot\cN\right)\circ\cN[\sigma_{A\bar{A}}]
\end{equation}
satisfies
\begin{equation}
\lV \rho_{AB_1\cdots B_{\ell} C_{\ell}} - \rho_{AB_1\cdots B_{\ell}} \otimes \nu_C \rV_1 = O(\eta^\ell).
\end{equation}
This implies the CMI is bounded as
\begin{align}
I(A: C|B_{1}\cdots B_{\ell})_{\rho_\ell} &=  O( \ell \eta^{\ell} ).
\end{align}
The contraction ratio $\eta$ is given as 
\begin{equation}
    \eta := \min (1, 16 d_C \eta_{1,C}),
\end{equation}
where $\eta_{1,C}$ is the trace norm contraction ratio defined on system C only
\begin{equation}\label{contract_ratio}
\eta_{1,C}:=\sup_{\rho_C,\rho'_C}\frac{\lVert\CE_C[\rho_{C}]-\CE_C[\rho'_{C}]\lVert_1 }{\lVert\rho_{C}-\rho'_{C}\lVert_1}.
\end{equation}
\end{thm}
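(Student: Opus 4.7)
\textit{Proof proposal.} My approach parallels Theorem~\ref{thm: bistochastic}, but replaces the relative-entropy/Hilbert--Schmidt contraction used there by a trace-norm one; the price of this switch is the factor $d_C$, explaining the suboptimal decay rate here. I track the Hermitian deviation $\Delta_\ell := \rho_\ell - \alpha_\ell \otimes \nu_{C_\ell}$ with $\alpha_\ell := \tr_{C_\ell}\rho_\ell$; by construction $\tr_{C_\ell}\Delta_\ell = 0$ and $\|\Delta_\ell\|_1 \leq 2$. Applying $\id \otimes \CN$ on $C_\ell$ and using the hypothesis $\CN[\nu_C] = \sigma_B \otimes \nu_{C'}$, a short algebraic rearrangement yields the recursion
\begin{equation*}
\Delta_{\ell+1} \;=\; (\id_{AB_1\cdots B_\ell}\otimes \Psi)[\Delta_\ell],\qquad \Psi[X] \;:=\; \CN[X] - \bigl(\tr_{C'}\CN[X]\bigr)\otimes \nu_{C'},
\end{equation*}
where $\Psi[\nu_C]=0$ and $\tr_{C'}\Psi[X]=0$, so the recursion stays inside the traceless-on-$C$ subspace.

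Next I bound the trace-norm action of $\Psi$, first on bare inputs and then in the ancilla-attached form. Writing $\Psi = (\id-\cQ)\circ \CN$ with $\cQ[\cdot]:=(\tr_{C'}\cdot)\otimes \nu_{C'}$ a CPTP replacer channel, and decomposing any Hermitian traceless $X$ as $X=t(\rho_+-\rho_-)$ with $\rho_\pm$ states and $\|X\|_1=2t$, the definition of $\eta_{1,C}$ combined with the easy bound $\|\id-\cQ\|_{1\to 1,\mathrm{Herm}}\leq 2$ gives $\|\Psi[X]\|_1\leq 2\eta_{1,C}\|X\|_1$. The genuinely hard step is the ancilla-lift: unlike relative-entropy contraction, trace-norm contraction does \emph{not} tensorize, as saturated by the transpose map $T$ on $\mathcal{B}(\mathbb{C}^{d_C})$, which satisfies $\|\id\otimes T\|_{1\to 1}=d_C\|T\|_{1\to 1}$. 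Invoking the standard Paulsen--Smith-type estimate $\|\id\otimes \Phi\|_{1\to 1,\mathrm{Herm}}\leq c\,d_{\mathrm{in}}\|\Phi\|_{1\to 1,\mathrm{Herm}}$, valid for every Hermiticity-preserving $\Phi$ with a small universal constant $c$, and applied with $d_{\mathrm{in}}=d_C$, one obtains the per-step bound
\begin{equation*}
\|\Delta_{\ell+1}\|_1 \;\leq\; 16\,d_C\,\eta_{1,C}\;\|\Delta_\ell\|_1.
\end{equation*}

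Iterating and using the a priori cap $\|\Delta_\ell\|_1\leq 2$ yields $\|\Delta_\ell\|_1 = O(\eta^\ell)$ with $\eta = \min(1, 16\,d_C\,\eta_{1,C})$, which is the first displayed inequality. Since $I(A:C|B_1\cdots B_\ell)$ vanishes on the product state $\alpha_\ell\otimes\nu_{C_\ell}$, an Alicki--Fannes--Winter continuity bound for CMI (paying $\log$ of the joint Hilbert-space dimension, which scales linearly in $\ell$) delivers the claimed $I(A:C|B_1\cdots B_\ell)_{\rho_\ell} = O(\ell\eta^\ell)$. The main technical obstacle is the ancilla-lift in the middle paragraph: the factor $d_C$ appears unavoidable via any purely trace-norm route since the transpose-type pathology already saturates it, and removing it would require either a tensor-stable divergence (as in Theorem~\ref{thm: bistochastic}) or an additional structural assumption on $\CN$ beyond the single-state input fixed point $\CN[\nu]=\sigma\otimes\nu$.
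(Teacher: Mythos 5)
Your proposal is correct and follows essentially the same route as the paper: the deviation $\Delta_\ell$ is exactly the paper's $G_{ABC}=(\id-\BP_\nu)\circ\CN\circ\cdots\circ(\id-\BP_\nu)\circ\CN[G_{A\bar A}]$ written as a step-by-step recursion, the per-step bound is the paper's Lemma~\ref{cb eta} (the factor $d_C$ coming from the same Paulsen-type estimate $\|\id\otimes\Phi\|_{1\to1}\le d_{\rm in}\|\Phi\|_{1\to1}$ of Lemma~\ref{lem:cb to 1-1}), and the passage to CMI is the same continuity argument. The only additions are cosmetic (the explicit transpose-map remark justifying why the $d_C$ factor is unavoidable on this route), and the bookkeeping of the crude constant $16$ is equally loose in both versions.
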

The decay of CMI for more general tripartition follows from the same argument as the bistochastic case (Theorem~\ref{thm: bistochastic} ). The distinction, though, is that our bound for the contraction ratio suffers from extra factor of dimension $d_C$ due to conversion to the completely bounded norm (Lemma~\ref{lem:cb to 1-1}), so the bound is meaningful only for the noisy regime $\eta_{1,C}<(16d_c)^{-1}$.

\subsubsection{The trace norm CMI}\label{sec:trace norm CMI}
In some cases, the trace norm (or the trace distance) has more applicable tools than the relative entropy. We now introduce a trace-norm variant of CMI.
 We expect it to help understanding the qualitative behavior of the CMI.
\begin{defn}[The Trace norm CMI] \label{def:trace norm CMI}
\begin{equation}
I_1(A:C|B) := \lV \rho_{ABC} -\rho_{A}\ot\rho_{BC} \rV_1 -\lV \rho_{AB} -\rho_{A}\ot\rho_{B} \rV_1
\end{equation}
\end{defn}
The definition is motivated by the form of CMI as $I(A:C|B)=I(A:BC)-I(A:B)$ and replace each mutual information by the trace distance (without the normalization for the simplicity). The mutual information and the trace distance bound each other (in a non-linear way) by the quantum Pinsker inequality~\cite[Theorem ~11.9.1]{wilde2011classical} and the continuity of entropy, and thus we expect this trace-norm variant of CMI partially captures common characteristics of the states as the CMI. Indeed, in the special cases analyzed in the following two theorems (Theorem~\ref{thm: bistochastic}, Theorem~\ref{thm: partially invariant}), we arrive at nearly same bound on decay rate for both CMI and the trace norm CMI, up to fixed overheads and logarithmic correction.
\begin{prop}\label{prop:also trace norm CMI}
The trace norm CMI is bounded as
\begin{equation}
    I_1(A: C|B_{1}\cdots B_{\ell})_{\rho_\ell} =  O( \eta^{\ell} )
\end{equation}
for both states defined in Theorem~\ref{thm: bistochastic} and Theorem~\ref{thm: partially invariant}, with $\eta$ defined respectively.
\end{prop}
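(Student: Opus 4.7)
The plan is to derive the trace-norm CMI bound as an almost-algebraic corollary of the factorization bounds already established in Theorem~\ref{thm: bistochastic} and Theorem~\ref{thm: partially invariant}, invoking only the triangle inequality and monotonicity of $\lVert \cdot \rVert_1$ under partial trace. Write $B = B_1 \cdots B_\ell$ and let $\omega_C$ stand for $\tau_C$ in the bistochastic case and for $\nu_C$ in the setting of Theorem~\ref{thm: partially invariant}; the common hypothesis then reads $\lVert \rho_{ABC} - \rho_{AB} \otimes \omega_C \rVert_1 = O(\eta^\ell)$.

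First I would promote this to $\lVert \rho_{ABC} - \rho_{AB} \otimes \rho_C \rVert_1 = O(\eta^\ell)$. Partial-tracing $AB$ on the hypothesis immediately gives $\lVert \rho_C - \omega_C \rVert_1 = O(\eta^\ell)$, and a triangle inequality against $\rho_{AB} \otimes (\omega_C - \rho_C)$ absorbs that discrepancy. A further partial trace over $A$ then yields $\lVert \rho_{BC} - \rho_B \otimes \rho_C \rVert_1 = O(\eta^\ell)$ at no extra cost.

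Next I would bound the unconditional tensor-defect by the telescoping identity
\begin{equation}
\rho_{ABC} - \rho_A \otimes \rho_{BC} = \left(\rho_{ABC} - \rho_{AB} \otimes \rho_C\right) + \left(\rho_{AB} - \rho_A \otimes \rho_B\right) \otimes \rho_C + \rho_A \otimes \left(\rho_B \otimes \rho_C - \rho_{BC}\right).
\end{equation}
Applying $\lVert \cdot \rVert_1$ and using multiplicativity of the trace norm on tensor products together with $\lVert \rho_A \rVert_1 = \lVert \rho_C \rVert_1 = 1$ gives $\lVert \rho_{ABC} - \rho_A \otimes \rho_{BC} \rVert_1 \leq \lVert \rho_{AB} - \rho_A \otimes \rho_B \rVert_1 + O(\eta^\ell)$. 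Rearranging reproduces exactly the definition of $I_1(A:C|B)$, yielding the upper bound $I_1(A:C|B) \leq O(\eta^\ell)$. The matching lower bound $I_1(A:C|B) \geq 0$ is immediate, since $\rho_{AB} - \rho_A \otimes \rho_B$ is the partial trace of $\rho_{ABC} - \rho_A \otimes \rho_{BC}$ over $C$ and $\lVert \cdot \rVert_1$ contracts under the partial trace.

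I do not anticipate a genuine obstacle: unlike in the CMI bound itself, no continuity-of-entropy estimate or dimension-dependent prefactor needs to enter, and the bound is tight up to a universal constant coming from the three uses of the triangle inequality. The proof should therefore fit into a short paragraph for each of the two theorems, or be stated uniformly by introducing the dummy reference state $\omega_C$ as above.
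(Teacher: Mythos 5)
Your proof is correct and rests on exactly the same ingredients as the paper's: the $O(\eta^\ell)$ factorization bound $\lVert\rho_{ABC}-\rho_{AB}\otimes\omega_C\rVert_1=O(\eta^\ell)$ from Theorems~\ref{thm: bistochastic} and~\ref{thm: partially invariant}, the triangle inequality, and monotonicity of $\lVert\cdot\rVert_1$ under partial trace. The paper packages the argument as a continuity lemma for $I_1$ relative to the reference state $\rho_{AB}\otimes\omega_C$ (whose trace norm CMI vanishes, using that the deviation is traceless on $C$), whereas you telescope through $\rho_{AB}\otimes\rho_C$; the two routes are interchangeable and give the same bound up to constants.
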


Unfortunately, we do not know to what extent the properties of CMI carry over, such as the connection to Markov states and recoverability. We leave further analysis on this function to future works.

\subsection{Decay of CMI for Y-shaped channels with a forgetful component}
\label{sec:forgetful}
Although we have shown the upper bound of CMI for bistochastic Y-shaped channels, they are only a small portion of the space of channels. Here, we show another suggestive result towards Question~\ref{question:main}. 

\begin{prop}
\label{prop:forgetful}
Consider a Y-shaped channel $\CN:C \rightarrow B_1 C$ such that there is a constant $0\le\eta <1$, a forgetful channel $F$ and another channel $\cN'$ such that 
\begin{equation}\label{eq:decF}
    \CN = (1-\eta) F + \eta \CN'\,,
\end{equation}
where $F[\rho] = \sigma $ maps any state to the same state. 
Then for any tripartite state $\rho_{A'B'C}$, the CMI contracts after applying $\CN$,
\begin{equation}
 I(A':C|B'B_1)_{\CE(\rho)} \le \eta  I(A':C|B')_\rho\,,
\end{equation}
\begin{equation}
 I_1(A':C|B'B_1)_{\CE(\rho)} \le \eta  I_1(A':C|B')_\rho\,. 
\end{equation}

Hence the resulting chain 
has exponentially decaying CMI and trace norm CMI for any tripartition 
\begin{align}
I(AB_1\ldots B_n: B_{n+m+1}\ldots B_\ell C_\ell|B_{n+1}\cdots B_{n+m})_{\rho_\ell} &=  O(\eta^{m})\,,\\
I_1(AB_1\ldots B_n: B_{n+m+1}\ldots B_\ell C_\ell|B_{n+1}\cdots B_{n+m})_{\rho_\ell} &=  O(\eta^{m})\,.
\end{align}

\end{prop}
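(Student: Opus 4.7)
The plan is to prove the stronger one-shot inequality
\[
I(A':B_1C\mid B')_{\CN(\rho)} \;\le\; \eta\, I(A':C\mid B')_\rho
\]
and deduce the proposition's bound from the CMI chain rule $I(A':B_1C\mid B') = I(A':B_1\mid B') + I(A':C\mid B'B_1)$ together with $I(A':B_1\mid B')\ge 0$; the $I_1$ version inherits the same implication because the DPI for the trace distance gives $I_1(A':C\mid B'B_1)\le I_1(A':B_1C\mid B')$. The reason for going through this stronger form is that $I(A':B'B_1C)_{\CN(\rho)}$ is a single relative entropy, which lets one invoke joint convexity in both arguments simultaneously.

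For the one-shot bound, I would start from $I(A':C\mid B') = I(A':B'C) - I(A':B')$. Because $\CN$ is local to $C$, the marginal on $A'B'$ is preserved and $I(A':B')$ drops out, so it suffices to bound $I(A':B'B_1C)_{\CN(\rho)} = D\!\left(\CN(\rho)_{A'B'B_1C}\,\|\,\rho_{A'}\otimes\CN(\rho)_{B'B_1C}\right)$. Substituting $\CN = (1-\eta)F + \eta\CN'$ writes both arguments as convex combinations with the \emph{same} weights, so joint convexity of $D$ yields
\[
I(A':B'B_1C)_{\CN(\rho)} \;\le\; (1-\eta)\, D(\rho_{A'B'}\otimes\sigma_{B_1C}\,\|\,\rho_{A'}\otimes\rho_{B'}\otimes\sigma_{B_1C}) + \eta\, I(A':B'B_1C)_{\CN'(\rho)}.
\]
The first term collapses to $I(A':B')_\rho$ by additivity of relative entropy across the common $\sigma_{B_1C}$, while the second is bounded by $I(A':B'C)_\rho$ through the standard DPI for mutual information under $\CN'$ acting on $C$. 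Subtracting $I(A':B')_\rho$ gives $I(A':B_1C\mid B')_{\CN(\rho)} \le \eta\, I(A':C\mid B')_\rho$. The trace-norm analog is obtained by the identical bookkeeping with the triangle inequality for $\|\cdot\|_1$ in place of joint convexity of $D$, and DPI of the trace distance in place of DPI of $D$.

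For the chain-level exponential decay, I would iterate the one-shot bound along the middle segment one site at a time. At the $k$-th step the channel acts on $C_{n+k-1}$ with the conditioning system $B' = B_{n+1}\cdots B_{n+k-1}$ absorbing the previously generated middle sites, so the strong form together with the CMI chain rule yields
\[
I(AB_1\cdots B_n\!:\!C_{n+k}\mid B_{n+1}\cdots B_{n+k})_{\rho_{n+k}} \;\le\; \eta\, I(AB_1\cdots B_n\!:\!C_{n+k-1}\mid B_{n+1}\cdots B_{n+k-1})_{\rho_{n+k-1}}.
\]
Iterating $m$ times contracts the CMI by $\eta^m$ relative to the bounded initial term $I(AB_1\cdots B_n\!:\!C_n)_{\rho_n}\le 2\log d_C$. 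The remaining Y-shaped channels extending $C_{n+m}$ into $B_{n+m+1}\cdots C_\ell$ only process the $C$-output further, so the DPI for CMI preserves the bound, giving the stated $O(\eta^m)$ decay. The $I_1$ case goes through identically.

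The conceptual hurdle is that a naive bound applied termwise to $S(AB)+S(BC)-S(B)-S(ABC)$ via concavity of the von Neumann entropy produces a binary-entropy correction $H(\eta)$ and cannot close at $\eta\,I(A':C\mid B')_\rho$. Sidestepping that correction is precisely what forces the reformulation as a difference of two joint mutual informations before invoking joint convexity of $D$; once both sides of the relative entropy are mixtures with matching weights, the ``$F$-branch'' contributes zero CMI and only the $\CN'$-branch survives with the factor $\eta$.
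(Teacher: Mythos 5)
Your proposal is correct and follows essentially the same route as the paper: rewrite the post-channel mutual information as a single relative entropy whose two arguments are convex mixtures with matching weights $(1-\eta,\eta)$, apply joint convexity so the forgetful branch collapses to $I(A':B')_\rho$ and the $\CN'$ branch is handled by DPI, subtract $I(A':B')_\rho$, and then iterate the one-shot contraction along the middle segment (with the identical triangle-inequality bookkeeping for $I_1$). No substantive differences.
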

  Remarkably, each application of $\CN$ contracts the CMI by $\eta$, from simple application of convexity of relatively entropy and monotonicity of CMI (proof in Sec.~\ref{proof:forgetful}). Intuitively the contraction ratio is $\eta$ because the forgetful channel necessarily removes any correlation with system $AB$. The step wise decay implies the decay of CMI for any tripartition with a long conditioned system $B$, i.e. the Y-shaped channel is applied many times. 

Note that any strictly positive Y-shaped channel $\CN(\rho)>0$ always has a decomposition~\eqref{eq:decF} with $F$ being the completely depolarizing channel. 
Moreover, for any Y-shaped channel $\CN'$ and any $0\leq\eta<1$, we can always construct a perturbed channel $\CN$ in Eq.~\eqref{eq:decF} satisfying $\|\CN-\CN'\|_{cb}\leq2(1-\eta)$.  

In the sense that any channel can be perturbed to have a forgetful component, Proposition~\ref{prop:forgetful} is true for most of quantum channels.  
We should note that the appropriate choice of `generic' MPDO certainly depends on the constraint of the problem at hand, for example if we restrict the number of Kraus operators to be small, then strict positivity would not hold generically.

A channel with forgetful component necessarily has trivial correctable algebra, but not vice versa. We can see this fact through an explicit example. 

{\bf Example:} The classical channel given by the following Kraus operators has trivial correctable algebra, but has no forgetful component. Input states $\ket{1}\bra{1}, \ket{2}\bra{2}, \ket{3}\bra{3}$ are mapped to $\frac{\ket{2}\bra{2}+ \ket{3}\bra{3}}{2}, \frac{\ket{1}\bra{1}+ \ket{3}\bra{3}}{2}, \frac{\ket{1}\bra{1}+ \ket{2}\bra{2}}{2}$, whose intersection is empty and thus cannot have a shared forgetful component.
\begin{align}
 K_1 = \frac{1}{\sqrt{2}} \left[
\begin{array}{ccc}
0  & 0 & 0 \\
1  & 0 & 0 \\
1  & 0 & 0 \\
\end{array}
\right], 
K_2 = \frac{1}{\sqrt{2}} \left[
\begin{array}{ccc}
0  & 1 & 0 \\
0  & 0 & 0 \\
0  & 1 & 0 \\
\end{array}
\right],  
K_3 = \frac{1}{\sqrt{2}} \left[
\begin{array}{ccc}
0  & 0 & 1 \\
0  & 0 & 1 \\
0  & 0 & 0 \\
\end{array}
\right].
\end{align}

\vspace{0.2 cm}

\subsection{A completely contractive DPI of CMI that implies exponential decay of CMI}\label{main DPI for CMI conjecture}
We have shown the channels we considered in Sec.~\ref{sec:bistochastic}, \ref{sec:forgetful} incur the exponential decay of CMI. For channels with a forgetful component (Proposition~\ref{prop:forgetful}), the exponential decay simply comes from a contraction occurring at each application of channel, and it is tempting to ask whether this is the general case. This motivates the following discussion on the data processing inequality.

First, we show that a single application of any channel with trivial correctable algebra always induces strict decay of CMI. 
\begin{prop}\label{prop:strictDPI} Let $\CE:C\to C'$ be a CPTP map which  has trivial correctable algebra. Consider a tripartite system  $A\ot B\ot C$. Then, for any state $\rho_{ABC}$ with $I(A:C|B)_\rho>0$, we have 
\begin{align}\label{eq:strictDPI}
    I(A:C'|B)_{\CE(\rho)} < I(A:C|B)_\rho\,.
\end{align}
\end{prop}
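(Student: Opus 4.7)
I would prove the contrapositive: assume $I(A:C|B)_\rho = I(A:C'|B)_{\CE(\rho)}$ and deduce $I(A:C|B)_\rho = 0$. The strategy combines Petz's equality condition for the DPI of relative entropy with the operator-algebra characterization of correctable algebras in Proposition~\ref{invar_cor}. As a first step, I would reduce the CMI equality to a mutual-information equality: using $I(A:C|B) = I(A:BC) - I(A:B)$ together with the fact that $\CE_C$ does not touch $AB$, the hypothesis becomes $I(A:BC)_\rho = I(A:BC')_{(\id_{AB}\otimes\CE_C)(\rho)}$. Rewriting $I(A:BC)_\rho = D(\rho_{ABC}\|\rho_A\otimes\rho_{BC})$ casts this as an equality case of the DPI for relative entropy under the channel $\id_{AB}\otimes\CE_C$ with reference $\rho_A\otimes\rho_{BC}$. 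Petz's theorem then supplies a CPTP recovery of the form $\id_A\otimes\tilde\CR$, where $\tilde\CR:BC'\to BC$ is the Petz map $\CP_{\rho_{BC},\,\id_B\otimes\CE_C}$, satisfying both $\tilde\CR(\tilde\rho_{BC'}) = \rho_{BC}$ and $(\id_A\otimes\tilde\CR)(\CE(\rho_{ABC})) = \rho_{ABC}$. The $\id_A$ factorization follows from a direct computation: because $\rho_A\otimes\rho_{BC}$ is a product across $A|BC$, the Petz Kraus operators reduce to $I_A\otimes A_k$ with $A_k = \rho_{BC}^{1/2}(I_B\otimes K_k^\dagger)\tilde\rho_{BC'}^{-1/2}$.

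Next, I would analyze the composite channel $\Phi_{BC} := \tilde\CR\circ(\id_B\otimes\CE_C)$, which is CPTP on $BC$, fixes $\rho_{BC}$, and by the recovery identity satisfies $(\id_A\otimes\Phi_{BC})(\rho_{ABC}) = \rho_{ABC}$. This is where Proposition~\ref{invar_cor} enters, in its general-reference form: the fixed-point algebra of $\Phi_{BC}$ equals the correctable algebra $\CA(\id_B\otimes\CE_C)$. Since $\CE_C$ has trivial correctable algebra by hypothesis, one shows $\CA(\id_B\otimes\CE_C) \subseteq \CB(B)\otimes\{c\, I_C\}$. Expanding $\rho_{ABC} = \sum_i M_i^A\otimes N_i^{BC}$ in an operator Schmidt decomposition, the fixed-point identity $(\id_A\otimes\Phi_{BC})(\rho_{ABC}) = \rho_{ABC}$ forces each $N_i^{BC}$ into this subalgebra, which yields the quantum Markov structure equivalent to $I(A:C|B)_\rho = 0$.

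The principal technical obstacle is the algebraic step above, because Proposition~\ref{invar_cor} as stated uses the maximally mixed reference $\tau$, whereas the Petz recovery here uses the state-dependent reference $\rho_{BC}$. Establishing that the fixed-point algebra of $\Phi_{BC}$ still equals $\CA(\id_B\otimes\CE_C)$ requires either (i) invoking the modular-theoretic formulation of Petz recoverability, under which the Petz fixed-point algebra is reference-independent up to modular conjugation by $\rho_{BC}^{it}$, or (ii) adapting the proof of Proposition~\ref{invar_cor} to arbitrary reference states by a direct Knill--Laflamme-style commutation check on the Kraus operators $A_k$ defined above. Once this identification is secured, the remaining step---from ``fixed point of $\id_A\otimes\Phi_{BC}$ with trivial correctable algebra'' to ``$\rho_{ABC}$ is a quantum Markov chain''---is a routine Koashi--Imoto-type decomposition of fixed-point algebras of CPTP maps, so I do not expect further obstructions.
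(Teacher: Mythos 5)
Your plan is essentially the paper's own proof: reduce the CMI equality to an equality of mutual informations, invoke the Petz/recoverability equality condition to get a recovery of the form $\id_A\otimes\CP_{\rho_{BC},\CE}$, observe that $\rho_{ABC}$ is then fixed by $\id_A\otimes\Phi_{BC}$ with $\Phi_{BC}=\CP_{\rho_{BC},\CE}\circ(\id_B\otimes\CE)$, and use the Koashi--Imoto/Lindblad fixed-point structure together with $\CA(\id_B\otimes\CE)=\CB(B)\otimes\CA(\CE)$ to force the Markov structure. The one ``principal technical obstacle'' you flag is actually not needed: the argument only uses the inclusion of the fixed-point algebra of $\Phi_{BC}^\dagger$ \emph{into} the correctable algebra (any observable satisfying $\CE^\dagger\circ\CR^\dagger(O)=O$ is by definition correctable, for an arbitrary recovery $\CR$ and hence for the $\rho_{BC}$-referenced Petz map), not the reverse equality, so no reference-independence or modular-theoretic input is required.
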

The proof is given in Sec.~\ref{proof:strictDPI}. By regarding $\CE$ as Y-shaped channel $\CN_{C_{i-1}\to B_iC_i}$, we see that $I(A:C|B)_{\rho_k}<I(A:C|B)_{\rho_{k-1}}$ holds for each $k$ (we abuse notation $BC$ for two different lengths). Note that having trivial correctable algebra provides only a sufficient condition to have a strict decay of the CMI. 
There is a Y-shaped channel with non-trivial correctable algebra which obeys Eq.~\eqref{eq:strictDPI} (see Ref.~\cite{KB20} for a complementary result). 

Unfortunately, recursively applying Eq.~\eqref{eq:strictDPI} may not imply the exponential decay of CMI of $\rho_\ell$ as the decay might get slower as system $B$ gets larger. We therefore propose the following conjecture to explicitly include a contraction ratio as a strong DPI constant.  
\begin{conj}[Completely Contractive DPI for the  CMI]\label{ccDPI}
For any channel $\CE: C \rightarrow C'$ with trivial correctable algebra $\CA(\CE)= \BC I$, there exists a constant $\eta < 1$ such that for any tripartite system $ABC$ and any state $\rho_{ABC}$, it holds that
\begin{equation}
 I(A:C'|B)_{\CE(\rho)} \le \eta I(A:C|B)_\rho.
\end{equation}
\end{conj}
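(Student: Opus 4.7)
The plan is to use the Stinespring dilation of $\CE$ to recast the conjecture as a quantitative information-leakage bound. Let $V:C\to C'E$ be an isometry with $\CE(\cdot)=\tr_E(V\cdot V^\dagger)$ and set $\rho':=(I_{AB}\otimes V)\rho_{ABC}(I_{AB}\otimes V^\dagger)$. Since $V$ is an isometry, and by the CMI chain rule,
\begin{equation}
I(A:C|B)_\rho = I(A:C'E|B)_{\rho'} = I(A:C'|B)_{\CE(\rho)} + I(A:E|BC')_{\rho'},
\end{equation}
so the conjecture is equivalent to the existence of a constant $\eta<1$, depending only on $\CE$, such that $I(A:E|BC')_{\rho'} \ge (1-\eta)\, I(A:C|B)_\rho$ holds for all $\rho_{ABC}$. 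In words: a fixed fraction of the conditional correlation must always leak into the environment, and the trivial-correctable-algebra assumption is precisely what should prevent this leakage from vanishing.

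I would first attack the bipartite ($B$ trivial) version. There the target becomes the statement $\eta_{AC}(\CE)<1$ with a bound independent of $\dim A$. Proposition~\ref{prop:strictDPI} combined with Proposition~\ref{invar_cor} already yields the strict inequality $I(A:C')<I(A:C)$ whenever $I(A:C)>0$. To upgrade strictness to a uniform gap, I would attempt a $\chi^2$-divergence linearization: the second-order expansion of $D(\cdot\|\cdot)$ near a product state $\rho_A\otimes\sigma_C$ reduces the contraction problem to the spectral gap of a linear operator built from $\CP_{\tau,\CE}\circ\CE$, in the spirit of the bistochastic analysis of Theorem~\ref{thm: bistochastic} but with a state-dependent reference inner product; one hopes the gap is uniform in $\dim A$. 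Controlling the non-perturbative regime is then a compactness problem, to be combined with a smoothing that ensures a uniform gap away from product states.

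For the conditional case, I would either lift the bipartite bound via the leakage formulation above, or invoke the Fawzi--Renner approximate-recovery framework $I(A:C|B)\ge -2\log F\!\left(\rho_{ABC},\CR_{B\to BC}(\rho_{AB})\right)$ and prove that applying $\CE$ on $C$ strictly reduces the fidelity to every such Markov approximation by a multiplicative factor determined by $\CE$ alone. An algebraic alternative uses the identity $I(A:C|B) = I(AB:C) - I(B:C)$ and tries to show the two bipartite terms contract with matched slacks; coupling the contractions seems hard but might be tractable via joint convexity of relative entropy.

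The main obstacle, and the reason the statement remains open, is the \emph{dimension-independence} of $\eta$. The classical proof of Theorem~\ref{thm:classicalSDPIc} crucially invokes the Carath\'eodory theorem to restrict the auxiliary system to cardinality $|C|+1$, and no quantum analogue of this reduction is currently known. A secondary obstacle is that quantum CMI does not admit a convex decomposition over $B$ when $B$ is quantum, ruling out a direct reduction to the bipartite case by averaging over conditional blocks. I expect that resolving the conjecture will require either an operator-valued Carath\'eodory-type bound for completely positive cones, or a new quantum maximal-correlation quantity whose contraction directly controls the CMI.
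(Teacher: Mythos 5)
The statement you are addressing is Conjecture~\ref{ccDPI}, which the paper explicitly leaves open --- there is no proof in the paper to compare against, and the authors note that even the bipartite special case ($B$ trivial) would be a breakthrough. Your submission is accordingly not a proof but a research plan, and you say as much ("the reason the statement remains open"). The preliminary reductions you make are correct: the Stinespring/chain-rule identity $I(A:C|B)_\rho = I(A:C'|B)_{\CE(\rho)} + I(A:E|BC')_{\rho'}$ is a valid reformulation of the conjecture as a uniform leakage bound, and your diagnosis of the central obstruction --- the absence of a quantum Carath\'eodory-type reduction that would make $\eta$ independent of $\dim A$ and $\dim B$ --- matches the paper's own discussion in Sec.~\ref{main DPI for CMI conjecture} almost exactly.

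The genuine gap is that none of your proposed routes closes. The $\chi^2$-linearization plus compactness argument for the bipartite case fails at precisely the point you flag: the supremum defining $\eta_{AC}(\CE)$ ranges over auxiliary systems $A$ of unbounded dimension, so there is no compact set of states over which to extract a uniform gap, and the second-order spectral analysis (which does work in the bistochastic setting of Theorem~\ref{thm: bistochastic} because the reference state is fixed at $\tau$) gives a state-dependent inner product whose gap you cannot control uniformly. Proposition~\ref{prop:strictDPI} gives strictness for each fixed state but, as the paper itself warns, strictness does not accumulate into a geometric rate when the conditioning system grows. The Fawzi--Renner and $I(AB:C)-I(B:C)$ alternatives are stated as hopes without even a sketch of why the two terms' contractions would be matched. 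In short: your reformulation and obstacle analysis are sound and consistent with the paper, but the statement remains a conjecture after your attempt, and you should present this as a problem analysis rather than a proof.
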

Invoking standard monotonicity inequalities(Sec.~\ref{proof:onestep_decay}), the contraction accumulates and thus 
Conjecture~\ref{ccDPI} implies the desired exponential decay of CMI: 
\begin{prop}
\label{prop:DPItoCMI}
If Conjecture~\ref{ccDPI} is true, then for any $\rho_\ell(\CN)$ constructed by Y-shaped channel $\cN$ with trivial correctable algebra,
\begin{equation}
I(A':C'|B')_\rho = \CO( e^{-(\ln\eta)m} )
\end{equation}
for each tripartition $A'B'C'=ABC$ where $B'$ separates $A'$ from $C'$ with distance $m$. 
\end{prop}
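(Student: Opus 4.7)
The plan is to grow the conditioning region $B'$ one Y-shaped output at a time, apply Conjecture~\ref{ccDPI} at each step, and close with standard monotonicity to handle the remaining tail of $C'$.

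Concretely, fix the tripartition $A' = AB_1\cdots B_n$, $B' = B_{n+1}\cdots B_{n+m}$, $C' = B_{n+m+1}\cdots B_\ell C_\ell$. Since $\rho_\ell$ is obtained from $\rho_{n+m}$, a state on $A'B'C_{n+m}$, by further applications of the Y-shaped channel on the $C$ register, the ordinary DPI for CMI under a local channel on the third party gives
\begin{equation}
I(A':C'|B')_{\rho_\ell} \le I(A':C_{n+m}|B')_{\rho_{n+m}},
\end{equation}
so it suffices to bound the right-hand side by $O(\eta^m)$.

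For $j=0,1,\ldots,m$, set $I_j := I(A':C_{n+j}|B_{n+1}\cdots B_{n+j})_{\rho_{n+j}}$, with the convention that the empty conditioning at $j=0$ makes $I_0 = I(A':C_n)_{\rho_n}$. At step $j+1$, the state $\rho_{n+j+1}$ is produced from $\rho_{n+j}$ by applying $\CN:C_{n+j}\to B_{n+j+1}C_{n+j+1}$ to the $C$ register. Identifying $\CN$ with the channel $\CE$ in Conjecture~\ref{ccDPI}, and the conjecture's conditioning register with $B_{n+1}\cdots B_{n+j}$, yields
\begin{equation}
I(A':B_{n+j+1}C_{n+j+1}\,|\,B_{n+1}\cdots B_{n+j})_{\rho_{n+j+1}} \le \eta\, I_j.
\end{equation}
The CMI chain rule then splits the left-hand side,
\begin{equation}
I(A':B_{n+j+1}C_{n+j+1}\,|\,B_{n+1}\cdots B_{n+j}) = I_{j+1} + I(A':B_{n+j+1}\,|\,B_{n+1}\cdots B_{n+j}),
\end{equation}
and dropping the non-negative second term gives $I_{j+1} \le \eta I_j$. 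Iterating over $j$ produces $I_m \le \eta^m I_0$.

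Finally, $I_0$ is the unconditional mutual information between $A'$ and the single bond register $C_n$, hence bounded by $2\log d_C$ where $d_C$ is the fixed bond dimension. Combining gives $I(A':C'|B')_{\rho_\ell} = O(\eta^m)$, which is exponential in $m$ since $\eta<1$. There is no substantive obstacle assuming Conjecture~\ref{ccDPI}: the argument is essentially bookkeeping, and the only delicate choice is to expand the chain rule in the direction that discards a non-negative summand, so that Conjecture~\ref{ccDPI} directly upgrades the one-step contraction into the desired geometric decay. One should also note that the hypothesis that the Y-shaped $\CN$ has trivial correctable algebra is exactly what is required to invoke Conjecture~\ref{ccDPI} with $\CE = \CN$ viewed as a map $C\to BC$.
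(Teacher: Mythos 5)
Your proof is correct and follows essentially the same route as the paper's: one application of Conjecture~\ref{ccDPI} per added conditioning site, with the newly emitted $B$ output discarded via the chain rule (the paper phrases this same step as monotonicity of mutual information), iterated $m$ times and closed with ordinary DPI on the tail. The only difference is that you spell out the reduction of $C'=B_{n+m+1}\cdots B_\ell C_\ell$ to $C_{n+m}$ and the base-case bound $I_0\le 2\log d_C$ explicitly, which the paper leaves implicit.
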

Conjecture~\ref{ccDPI} holds for the channels with a forgetful component (Proposition~\ref{prop:forgetful}), where $\eta_{ABC}(\CE)$ does not approach 1 as $AB$ grows. 
In the classical case generally an unbounded auxiliary system can be reduced to have bounded dimension only depending on $C$, and for the strong DPI constants the auxiliary system can be ignored $\eta_{ABC} = \eta_{AC}= \eta_C$ (Theorem~\ref{thm:classicalSDPIc}). In the quantum case we do not know whether $AB$ can be reduced to a bounded dimension. Hence, the conjecture is open in general quantum systems and even showing its simpler variant for the mutual information (setting $B$ trivial) would be a breakthrough. Very recently the case when $A$ is classical and $B$ is trivial was proven with a contraction ratio analogous to the classical case~\cite{hirche2020contraction}.

A similar DPI can be proposed for the trace norm CMI.
\begin{conj}[completely contractive DPI for the trace norm CMI] \label{DPI trace norm}
For any channel $\CE: C \rightarrow C'$ with local contraction ratio
\begin{equation}
\eta_{1,C}:=\sup_{\rho_C,\rho'_C}\frac{\lVert\CE_C[\rho_{C}]-\CE_C[\rho'_{C}]\lVert_1 }{\lVert\rho_{C}-\rho'_{C}\lVert_1} <1
\end{equation}
There exists a global constant $\eta < 1$ such that for any tripartite system $ABC$ and any state $\rho_{ABC}$, it holds that
\begin{equation}
 I_1(A:C'|B)_{\CE(\rho)} \le \eta I_1(A:C|B)_\rho.
\end{equation}
\end{conj}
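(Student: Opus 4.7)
Let $T_1 := \lV \rho_{ABC} - \rho_A \otimes \rho_{BC} \rV_1$ and $T_2 := \lV \rho_{AB} - \rho_A \otimes \rho_B \rV_1$, so that $I_1(A:C|B) = T_1 - T_2$. Since $\CE$ acts only on $C$, the quantity $T_2$ is invariant under $\id_{AB} \otimes \CE$, and the conjecture reduces to showing $T_1' - T_2 \le \eta(T_1 - T_2)$, where $T_1'$ is the post-channel analogue of $T_1$. The plan is to decompose the correlation operator $\rho_{ABC} - \rho_A \otimes \rho_{BC}$ into a piece that $\CE_C$ maps to something whose trace norm is exactly $T_2$, plus a residual that is traceless in $C$ and hence strictly contracted by $\CE_C$.

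Fix a reference state $\sigma_C$ on $C$ and write
\begin{equation}
\rho_{ABC} - \rho_A \otimes \rho_{BC} = (\rho_{AB} - \rho_A \otimes \rho_B) \otimes \sigma_C + W_{\sigma_C},
\end{equation}
where a direct check using $\tr \sigma_C = 1$ shows $\tr_C W_{\sigma_C} = 0$ for every choice of $\sigma_C$; the choice $\sigma_C = \rho_C$ additionally yields $\tr_A W_{\rho_C} = 0$. Applying $\CE_C$ and using the triangle inequality,
\begin{equation}
T_1' \le \lV (\rho_{AB} - \rho_A \otimes \rho_B) \otimes \CE(\sigma_C) \rV_1 + \lV \CE_C(W_{\sigma_C}) \rV_1 = T_2 + \lV \CE_C(W_{\sigma_C}) \rV_1.
\end{equation}
Because $W_{\sigma_C}$ lies in $\CB(AB) \otimes \{X \in \CB(C) : \tr X = 0\}$, the strategy of Theorem~\ref{thm: partially invariant} applies: via Lemma~\ref{lem:cb to 1-1}, the local trace-norm contraction $\eta_{1,C}$ lifts to the completely bounded bound $\lV \CE_C(W_{\sigma_C}) \rV_1 \le c\, d_C\, \eta_{1,C} \lV W_{\sigma_C} \rV_1$ for an absolute constant $c$. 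Minimizing over $\sigma_C$ yields
\begin{equation}
T_1' - T_2 \le c\, d_C\, \eta_{1,C} \min_{\sigma_C} \lV W_{\sigma_C} \rV_1.
\end{equation}

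The remaining step, and where I expect the main obstacle to lie, is to bound $\min_{\sigma_C} \lV W_{\sigma_C} \rV_1$ by a constant multiple of $I_1 = T_1 - T_2$. Applying the triangle inequality to the decomposition only yields $\lV W_{\sigma_C} \rV_1 \ge T_1 - T_2 = I_1$ for every $\sigma_C$, and the reverse direction is subtle: $\lV W_{\sigma_C} \rV_1$ can inflate well beyond $I_1$ if the two pieces cancel significantly. Establishing $\min_{\sigma_C} \lV W_{\sigma_C} \rV_1 \le C \cdot I_1$ likely requires a dual-norm argument, a variational characterization, or explicit use of the two-sided traceless structure $\tr_A W_{\rho_C} = \tr_C W_{\rho_C} = 0$, and is the technical heart of the conjecture.

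This obstruction mirrors the one behind Conjecture~\ref{ccDPI} for the relative-entropy CMI: the quantity does not cleanly split into a channel-invariant part plus a strictly contracted residual, and handling the $AB$-correlation requires controlling the auxiliary system size. Classically this is resolved by the Carath\'eodory argument underlying Theorem~\ref{thm:classicalSDPIc}, but no quantum analogue is known. If one is content with an $\eta$ that depends on $d_C$ (as in Theorem~\ref{thm: partially invariant}), the plan above should succeed once the geometric step is carried out; a truly dimension-free $\eta$ appears out of reach of this approach and likely requires genuinely new tools.
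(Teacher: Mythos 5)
First, be aware that the statement you are addressing is a \emph{conjecture} in the paper: the authors do not prove it, and they explicitly leave open even whether the global constant $\eta$ must carry factors of $d_C$. So there is no proof in the paper to compare against; the closest established results are Proposition~\ref{prop:trace MI contraction} (the bipartite trace-norm mutual information, with a $4 d_C\eta_{1,C}$ factor) and Proposition~\ref{prop:forgetful} (channels with a forgetful component). Your first two steps essentially reproduce that machinery: the decomposition with $\tr_C W_{\sigma_C}=0$, the invariance of $T_2$, and the bound $\lV\CE_C(W_{\sigma_C})\rV_1\le c\,d_C\,\eta_{1,C}\lV W_{\sigma_C}\rV_1$ via Lemma~\ref{lem:cb to 1-1} and Lemma~\ref{cb eta} are all correct and are exactly how the paper handles the bipartite case.

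The step you flag as ``the technical heart'' is, however, not merely subtle: the inequality $\min_{\sigma_C}\lV W_{\sigma_C}\rV_1\le C\cdot I_1$ that your plan needs is false. Take $A,B,C$ qubits, $\sigma^1\neq\sigma^2$ full-rank states on $C$, and $\rho_{ABC}=\tfrac12\bigl(\ketbra{00}{00}_{AB}\ot\sigma^1+\ketbra{11}{11}_{AB}\ot\sigma^2\bigr)$. Then $X:=\rho_{ABC}-\rho_A\ot\rho_{BC}=\tfrac14(\ketbra{00}{00}-\ketbra{10}{10})\ot\sigma^1+\tfrac14(\ketbra{11}{11}-\ketbra{01}{01})\ot\sigma^2$ has positive and negative parts supported on orthogonal $AB$-subspaces, so $T_1=\lV X\rV_1=\lV\tr_C X\rV_1=T_2=1$ and $I_1(A:C|B)_\rho=0$; yet $W_{\sigma_C}=X-(\tr_C X)\ot\sigma_C$ satisfies $\lV W_{\sigma_C}\rV_1=\tfrac12\lV\sigma^1-\sigma_C\rV_1+\tfrac12\lV\sigma^2-\sigma_C\rV_1\ge\tfrac12\lV\sigma^1-\sigma^2\rV_1>0$ for every $\sigma_C$. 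Hence your chain $T_1'-T_2\le c\,d_C\,\eta_{1,C}\min_{\sigma_C}\lV W_{\sigma_C}\rV_1$ cannot be closed against $I_1$: the residual $W_{\sigma_C}$ also contains $B$--$C$ correlations that a conditional quantity must subtract off, and the single product ansatz $(\tr_C X)\ot\sigma_C$ cannot absorb them. (In this example the conjecture itself survives, since one checks $T_1'=T_2$ directly by the same orthogonality, so the failure is in the method, not the statement.) Any successful argument must compare $\lV\CE_C(X)\rV_1$ with $\lV\tr_C X\rV_1$ in a way that is sensitive to how the $C$-marginal varies conditionally on $AB$; this is the same obstruction the paper identifies for Conjecture~\ref{ccDPI}, namely the lack of a quantum analogue of the conditioning/Carath\'eodory reduction behind Theorem~\ref{thm:classicalSDPIc}.
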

We do not know if extra constants or factor of dimension $d_C$ should be present between $\eta_{1,C}$ and $\eta$ like the crude bound in Theorem~\ref{thm: partially invariant}.
The trace norm seems more tractable than mutual information in the bipartite case.
\begin{prop}[Contraction of the trace norm mutual information]\label{prop:trace MI contraction}
For any channel $\CN: C \rightarrow C'$, it holds that 
\begin{align}
\lV \CN[\rho_{BC} -\rho_{B}\ot\rho_{C}] \rV_1 \le 4\eta_{1,C} d_C \lV \rho_{BC} -\rho_{B}\ot\rho_{C} \rV_1
\end{align} 
for any state $\rho_{BC}$.
\end{prop}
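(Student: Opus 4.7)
The plan is to exploit the vanishing partial trace $\tr_C(\rho_{BC}-\rho_B\otimes\rho_C)=0$ to reduce the problem to estimating the contraction of a ``traceless-restricted'' modification of $\CN$, and then to promote this single-system bound to the bipartite output via Lemma~\ref{lem:cb to 1-1}. Setting $\Delta:=\rho_{BC}-\rho_B\otimes\rho_C$, I introduce the auxiliary map
\[
\tilde\CN(Y):=\CN\!\left(Y-\tfrac{\tr Y}{d_C}\,I_C\right),
\]
which annihilates $I_C$ and coincides with $\CN$ on traceless inputs. A direct computation gives $(\id_B\otimes\CN-\id_B\otimes\tilde\CN)(\Delta)=(\tr_C\Delta/d_C)\otimes\CN(I_C)=0$, so $\CN$ can be replaced by $\tilde\CN$ inside the left-hand trace norm at no cost.

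Next, I would bound $\tilde\CN$ as a $1\to 1$ map using $\eta_{1,C}$. For a Hermitian $Y$ on $C$, the traceless part $Y_0:=Y-(\tr Y/d_C)\,I_C$ satisfies $\|Y_0\|_1\le\|Y\|_1+|\tr Y|\le 2\|Y\|_1$, and can be written as $(\|Y_0\|_1/2)(\sigma_+-\sigma_-)$ for two states $\sigma_\pm$. The definition of $\eta_{1,C}$ then yields
\[
\|\tilde\CN(Y)\|_1=\|\CN(Y_0)\|_1\le \eta_{1,C}\|Y_0\|_1\le 2\eta_{1,C}\|Y\|_1.
\]
Decomposing a general (non-Hermitian) operator into its Hermitian and anti-Hermitian parts loses at most another factor of $2$, so $\|\tilde\CN\|_{1\to 1}\le 4\eta_{1,C}$.

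Finally, Lemma~\ref{lem:cb to 1-1} applied on the input system $C$ converts this into $\|\tilde\CN\|_{cb}\le 4d_C\eta_{1,C}$. Since $\|\id_B\otimes\tilde\CN\|_{1\to 1}\le\|\tilde\CN\|_{cb}$ for any ancilla $B$, one obtains
\[
\|\CN[\Delta]\|_1=\|(\id_B\otimes\tilde\CN)[\Delta]\|_1\le 4d_C\eta_{1,C}\|\Delta\|_1,
\]
which is the claim. The main source of slack, and the expected obstacle to any tighter constant, is precisely this $cb$-versus-$1\!\to\!1$ conversion, whose $d_C$ overhead is known to be unavoidable in general; removing it would require an argument that exploits the bipartite product-deviation structure of $\Delta$ directly rather than passing through the diamond norm, paralleling the open status of Conjecture~\ref{DPI trace norm}.
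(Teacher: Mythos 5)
Your proof is correct and follows essentially the same route as the paper's: you insert a vanishing term exploiting $\tr_C(\rho_{BC}-\rho_B\otimes\rho_C)=0$ (the paper subtracts $\CN\circ(\rho_C\tr_C)$ where you subtract $\CN\circ(\tau_C\tr_C)$, an immaterial difference), bound the resulting $1\to1$ norm by $4\eta_{1,C}$, and then pay the factor $d_C$ via Lemma~\ref{lem:cb to 1-1} to pass to the completely bounded norm. Your accounting of the factor of $4$ (a factor $2$ from the traceless part and a factor $2$ from the Hermitian/anti-Hermitian split) is actually more explicit than the paper's.
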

Here the contraction ratio is controlled by a bounded factor, where the dimension factor comes from bounding the completely bounded superoperator $1-1$ norm (Lemma~\ref{lem:cb to 1-1}).

\subsection{Sufficient conditions for exponential decay of classically-conditioned mutual information} 
In this section we consider the CMI of MPDO after performing the local measurement on the computational basis of $B$. Here we do not require the original map $\cN$ in Eq.~\eqref{def:family1} to be trace-preserving. 

\begin{thm}[Sufficient conditions for decay of CMI for measured MPDO]\label{mainthm}
For a B-measured MPDO as in Eq.~\eqref{eq:measured MPDO}, the following are implications in the order $1 \implies 2 \implies 3 \implies 4$:
\begin{enumerate}
\item (\textbf{Condition 1}) Each $\CM_s$ has $p$ Kraus operators $\{ E^s_1, \cdots, E^s_p\} $ such that $E^s_p$ is invertible, and \begin{equation*}
    \left\{\left(E^s_p\right)^{-1} E^s_1, \cdots, \left(E^s_p\right)^{-1}E^s_{p-1} \right\}
\end{equation*}
generate full matrix algebra $M_D(\mathbb{C})$ by addition and multiplication.

\item (\textbf{Condition 2}) There exists uniform coarse-graining length $\xi$ such that 
any sequence $\CM'_{t}:=\CM_{s_{k +l}}\cdots \CM_{s_k}$, where $t=(s_k,...,s_{k+l})$, is strictly positive map for all $ l\geq\xi$. In other words, 
 \begin{equation}
 {\rm det}(\CM'_t[\rho]) \ne 0\,,\quad \forall t, \forall\rho
 \end{equation}
 whenever $|t|\geq\xi$.

\item The mutual information of $A,C$ observing any outcome $b$ is bounded by a uniform decay rate $c_1>0$
\begin{equation}\label{eq:aawdecayMI}
I(A:C)_b \le c_0e^{-c_1\ell}.
\end{equation}

\item The measured conditional mutual information decays exponentially with the length $\ell$ of system $B$
\begin{equation}\label{decay_barCMI}
I(A:C|\bar{B}) \le c_0e^{-c_1\ell}.
\end{equation}

\end{enumerate}

\end{thm}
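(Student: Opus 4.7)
The proof naturally splits into three implications, which I would handle in increasing order of difficulty.

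For $3 \Rightarrow 4$, the conditioning register $\bar B$ is purely classical by construction, so the state decomposes as $\rho_{A\bar BC}=\sum_b p(b)\,|b\rangle\langle b|_{\bar B}\otimes\rho_{AC,b}$, and direct manipulation of the block-diagonal entropies gives $I(A:C|\bar B)=\sum_b p(b)\,I(A:C)_b$. Averaging Eq.~\eqref{eq:aawdecayMI} over $b$ produces Eq.~\eqref{decay_barCMI} with the same constants.

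For $1 \Rightarrow 2$, I would use the algebraic characterization that a CP map $\CM[\rho]=\sum_I K_I\rho K_I^\dagger$ is strictly positive iff $\mathrm{span}\{K_I|\psi\rangle\}$ equals the full output space for every $|\psi\rangle\neq 0$, and this is in turn implied by $\mathrm{span}\{K_I\}=M_D(\mathbb{C})$. The Kraus operators of $\CM'_t=\CM_{s_{k+l}}\cdots\CM_{s_k}$ are all length-$(l+1)$ words $E^{s_{k+l}}_{j_{k+l}}\cdots E^{s_k}_{j_k}$. Using invertibility of each $E^{s_i}_p$ one factors out invertible prefactors and reduces the question to whether iterated products of the generators $\{(E^{s_i}_p)^{-1}E^{s_i}_j\}_{j<p}$ span $M_D(\mathbb{C})$. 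Since each single-letter generating set algebraically generates $M_D(\mathbb{C})$, a word length at most $\xi=D^2$ suffices by dimension counting, uniformly over which letters appear.

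The heart of the proof is $2 \Rightarrow 3$, handled via Hilbert's projective metric on the positive cone. Under Condition~2 the map $M_b=\CM_{s_\ell}\cdots\CM_{s_1}$ factors into $\lfloor\ell/\xi\rfloor$ strictly positive blocks $\CM'_t$. By the Birkhoff--Hopf theorem in its CP-map form, each such block contracts the projective metric by a factor $\tanh(\Delta/4)<1$, where $\Delta$ is the projective diameter of its image; compactness over the finitely many length-$\xi$ words yields a uniform contraction constant $\lambda<1$. Compounding over all blocks, the operators $M_b[|i\rangle\langle j|]$ on the different input matrix units---which together assemble the (unnormalised) Choi-type state $\rho_{AC,b}$---become proportional to a single common operator up to projective-metric error $O(\lambda^{\ell/\xi})$. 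Standard inequalities between projective metric and trace distance, followed by Alicki--Fannes--Winter continuity applied in the fixed-dimensional system $A$ (whose dimension equals the bond dimension $D$), then convert this into $I(A:C)_b \le c_0 e^{-c_1 \ell}$.

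The main obstacle lies in $2 \Rightarrow 3$: extracting a single uniform contraction constant $\lambda<1$ that is independent of the particular outcome sequence $b$, and propagating it cleanly through projective metric $\to$ trace distance $\to$ mutual information without incurring dimension factors that would wash out the exponential decay. The algebraic step $1 \Rightarrow 2$ is elementary once the invertibility of $E^{s}_p$ is exploited, and $3 \Rightarrow 4$ is purely bookkeeping.
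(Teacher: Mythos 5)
Your implications $3\Rightarrow 4$ and $2\Rightarrow 3$ follow essentially the same route as the paper: the measured CMI is the average $\sum_b p(b) I(A:C)_b$ over classical outcomes, and the heart of the argument is the Birkhoff--Hopf contraction of the strictly positive coarse-grained blocks in Hilbert's projective metric, with a uniform contraction ratio obtained by taking a supremum over the finitely many length-$\xi$ words, followed by conversion to trace distance and the Alicki--Fannes--Winter bound. The dimension factors you worry about do appear (the paper pays a factor $4 d_{C'}\lV\sigma_{C'}^{-1}\rV$ when passing from contraction on positive inputs to closeness of the Choi-type state $\rho_{AC,b}$ to a product state), but they are harmless against the exponential decay, exactly as you anticipate.

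The genuine gap is in $1\Rightarrow 2$. The Kraus operators of $\CM'_t$ are interleaved words $E^{s_{k+l}}_{j_{k+l}}\cdots E^{s_k}_{j_k}$, and you cannot ``factor out invertible prefactors'' to reduce their span to that of products of the generators $(E^{s_i}_p)^{-1}E^{s_i}_j$: writing $E^{s_2}_{j_2}E^{s_1}_{j_1}=E^{s_2}_p\bigl[(E^{s_2}_p)^{-1}E^{s_2}_{j_2}\bigr]E^{s_1}_p\bigl[(E^{s_1}_p)^{-1}E^{s_1}_{j_1}\bigr]$ leaves the factor $E^{s_1}_p$ trapped in the middle, so the words are not products of the single-letter generators, and the hypothesis that each single-letter set generates $M_D(\mathbb{C})$ does not directly control mixed words. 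Moreover, even granting such a reduction, ``word length $D^2$ suffices by dimension counting'' presupposes that the span of accumulated words strictly grows at every step, which is precisely the nontrivial claim. The paper closes this by a Burnside-type lemma: if the current span $\{T_j\}$ contains an invertible element and is not all of $M_D(\mathbb{C})$, then left-multiplying by $\{E^s_1,\dots,E^s_p\}$ strictly increases its dimension --- otherwise $E^s_i\{T_j\}\subset E^s_p\{T_j\}$ for all $i<p$, so after inverting $E^s_p$ the span is invariant under all $(E^s_p)^{-1}E^s_i$, and Burnside's theorem (in the left-regular-representation form, using the invertible element) forces $\{T_j\}=M_D(\mathbb{C})$, a contradiction. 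This yields the uniform bound $\xi=D^2-p+1$ independent of the sequence, together with the observation (which you state correctly) that a full Kraus span implies strict positivity. Your sketch needs this invariant-subspace argument to be a proof.
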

Condition 2 (strict positivity of all $\CM_s$) is unlikely a necessary condition for the decay of CMI, but it is convenient to state. We include coarse-graining in the statement as it is sometimes needed to ensure strict positivity of all $\CM_s$. For example, if the original Y-shaped CP map comes from tracing out a system much smaller than the bond dimension, then it may not be strictly positive.

A similar bound for the coarse-graining length $\xi$ for iteration of a single channel was given by the quantum Wielant inequality~\cite{5550282}: for every primitive channel $\CN$ (with Kraus rank $p$ and Hilbert space dimension $D$), $\xi= {D^2(D^2-p+1)}$ iteration guarantees full-rank output. Our Condition 1 is a multi-channel generalization: it is a sufficient condition for all possible sequences of $\CM_{s_\ell}\circ \cdots \circ\CM_{s_2} \circ\CM_{s_1}$ generated by CP maps $\{\CM_{1}, \CM_{2}, \cdots\CM_{n}\}$ to become strictly positive for coarse graining length $ l\geq\xi:=D^2-p+1$. See Sec.~\ref{sec:proof measured} for the proof. 

Although strict positivity (Condition 2) is NP-hard to check in general~\cite{gaubert2014checking},
we prove that Condition 1 can be verified in a polynomial time (see Sec.~\ref{proof:algorithm} for the details). 
\begin{prop}
\label{prop:algorithm}
Condition 1 can be checked in time polynomial in the bond dimension $D$ and the number of Kraus operators $p$.
\end{prop}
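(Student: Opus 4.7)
The plan is to reduce verification of Condition 1 to a small number of standard linear-algebra subroutines and then bound their cost. For each outcome $s$, one first tests whether $E^s_p$ is invertible (Gaussian elimination, $O(D^3)$) and, if so, explicitly forms the $p-1$ matrices $A^s_i := (E^s_p)^{-1} E^s_i$ in time $O(p D^3)$. The nontrivial step is then the canonical algebra-generation problem: decide whether a given collection $\{A^s_1,\ldots,A^s_{p-1}\} \subset M_D(\mathbb{C})$ generates $M_D(\mathbb{C})$ as a unital $\mathbb{C}$-algebra.

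For that, I would build up a basis of the generated algebra by a breadth-first closure procedure. Initialize $V_0 := \mathrm{span}\{I, A^s_1, \ldots, A^s_{p-1}\}$, viewed as a subspace of $\mathbb{C}^{D^2}$ via vectorization, and recursively define
\begin{equation*}
V_{k+1} := V_k + \sum_{i=1}^{p-1} A^s_i\, V_k.
\end{equation*}
This is an ascending chain of subspaces of $M_D(\mathbb{C})$, and each non-terminal step strictly increases the dimension, so the chain stabilizes after at most $D^2$ iterations. A short induction shows that once $V_{k+1}=V_k$ the subspace is invariant under left multiplication by every generator; combined with $I \in V_k$ this forces $V_k$ to contain every word in the generators, hence $V_k$ equals the full generated algebra. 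Condition 1 is therefore equivalent to $\dim V_\infty = D^2$, which can be read off from the echelon form.

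Concretely, I would maintain the basis of $V_k$ in row-reduced echelon form inside $\mathbb{C}^{D^2}$, and at each iteration compute the candidate vectors $A^s_i M$ for every generator $A^s_i$ and every current basis element $M$, updating the echelon form and retaining only vectors not already spanned. Each iteration does at most $(p-1)D^2$ matrix products of size $D \times D$ and a Gaussian-elimination update against a basis of size $\le D^2$, all polynomial in $D$ and $p$; since there are at most $D^2$ iterations and at most polynomially many outcomes $s$ (bounded by $\dim \CK$), the total runtime is polynomial. The main conceptual step is the stabilization argument for $V_{k+1}=V_k$---one must confirm that closure under a single additional multiplication already certifies closure under all higher-order products---and this is exactly the invariance-plus-unit argument above; everything else is routine linear algebra.
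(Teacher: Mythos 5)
Your proposal is correct and takes essentially the same approach as the paper: iteratively close the linear span under left multiplication by the generators, terminate when the dimension stabilizes (at most $D^2$ iterations since each non-terminal step strictly increases the dimension), and check linear independence by Gaussian elimination. If anything your version is slightly more carefully justified, since seeding the closure with $I$ and all generators (rather than the paper's $\mathrm{span}\{m_1\}$ alone) is what guarantees the stabilized subspace is the full generated algebra rather than merely a left ideal.
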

We numerically verify Condition 1 (generating full algebra) for Y-shaped channels generated by Haar-random Stinespting unitary (Fig.~\ref{fig:chain of Y-shaped}, bottom). The unitaries are the same random sample $U(D^2)$, and we have checked for computationally tractable local Hilbert space dimensions $D\le 8.$~\footnote{\cite{githubcode} is the repository of the jupyter code. The code checks condition 1 for arbitary MPDO given the Kraus operators.}. 
Both conditions imply Eq.~\eqref{eq:aawdecayMI}, which further implies our original goal, Eq.~\eqref{decay_barCMI}. 

\section{Proofs of main theorems}\label{sec:proof}
Here we present proofs of the main results by putting the key lemmas together, whose proofs we postpone in Sec.~\ref{sec:proof of lemmas}. 
\subsection{ Theorem~\ref{thm: bistochastic}: decay of CMI for bistochastic channel}
\label{sec: sketch bistochastic} 
The bistochastic assumption greatly simplifies the structure of state. The state can be decomposed into a uncorrelated state $\rho_{AB}\otimes \tau_C$ plus a deviation $G_{ABC}$ that is traceless on system $C$. At each application of channel in the generation of $\rho_\ell$, there are two key implications of being bistochastic: 1) the uncorrelated state $\rho_{AB}\otimes \tau_C$ is mapped to an uncorrelated state where system $C$ remains the maximally mixed state $\tau_C$, 2) the deviation $G_{ABC}$ contracts w.r.t the normalized Hilbert-Schimdt norm. The rest of the proofs are obtained by standard conversion between norms. The proofs of the lemmas are shown independently in Sec.~\ref{sec:remaining proof bistochastic}.

\begin{proof}
Let us take an operator basis containing identity which is orthogonal in the Hibert-Schmidt inner product. Then any operator on $C$ can be decomposed into the maximally mixed state $\tau_C$ and the traceless part $K_C$. 
After applying bistochastic Y-shaped channel $\CN:C_i\to B_{i+1}C_{i+1}$, the traceless part is mapped to
\begin{align}
\CN[K_C] &=O_B \otimes \tau_C + \sum_j O^j_B\otimes K^j_C \notag
\end{align}
where we abused the notation $O, O^j$ to denote general operators and $K,K^j$ to denote traceless operators. 
In the following we will use the curly brackets $\{O_B\otimes K_C\}:=\sum_jO_B^j\ot K^j_B$ as a shorthand notation for this type of linear combinations. Iteratively decomposing system C into the traceless and maximally mixed components after applying $\CN$, we can write down the structure of the state $\rho_{AB_1\cdots B_{\ell} C_{\ell}}$ explicitly. Starting with the maximally entangled input state $\sigma_{A\bar{A}} = \tau_{A}\otimes\tau_{\bar{A}}+ \{K_{A}\otimes K_{\bar{A}}\}$, we obtain that
\begin{align}
\rho_{AB_1C_1} &= \CN[\tau_{A\bar{A}}+\{K_{A}\otimes K_{\bar{A}}\}] = \tau_{AB_1}\otimes\tau_{C_1} + O_{AB_1}\otimes \tau_{C_1} + \{O_{AB_1} \otimes K_{C_1} \}\notag\\
&= \rho_{AB_1}\otimes \tau_{C_1} + \{ O_{AB_1} \otimes K_{C_1}\}, \\
\rho_{AB_1B_2C_2} &= \CN[\rho_{AB_1C_1} ] = \rho_{B_1}\otimes \tau_{B_2}\otimes \tau_{C_2} + O_{AB_1} \otimes O_{B_2} \otimes \tau_{C_2} + \{O_{AB_1} \otimes O_{B_2} \otimes K_{C_l}\} \notag\\
&=\rho_{AB_1B_2}\otimes\tau_{C_2} +\{ O_{AB_1B_2} \otimes K_{C_2}\},\\
\rho_{AB_1\cdots B_{\ell} C_{\ell}} & =  \tau_{AB_1} \otimes\cdots \tau_{C_\ell}\\
& +  O_{AB_1} \otimes\tau_{B_2}\cdots \tau_{C_\ell}\\
& + \cdots\\
& +  \{O_{AB_1} \otimes O_{B_2}\cdots \otimes K_{C_\ell}\}\\
& = \rho_{AB} \otimes \tau_{C_\ell}+ \{O_{AB}\otimes K_{C_\ell} \}\,,\label{eq:bOKk}
\end{align}
where we simply denote system $B_1\ldots B_\ell$ by $B$.  
The second term $\{O_{AB}\otimes K_{C_\ell} \}$ in Eq.~\eqref{eq:bOKk} is exponentially suppressed in the normalized Hilbert-Schmidt norm. To see this, we use the following lemma for bistochastic channels.
\begin{lem}[DPI for normalized HS-norm]\label{DPI 2norm}
For bistochastic channel $\CN_C[\tau_C] = \tau_{C'}$, the normalized Hilbert-Schimdt norm of operators $K_{EC}$ traceless on $C$ contracts. More precisely, if $Tr_C(K_{EC}) = 0$ 
then
\begin{equation}
\tr(d_{EC'}\CN_C[K]^2_{EC}) \le \eta  \tr(d_{EC}K^2_{EC})
\end{equation}
with the coefficient $\eta$ defined as
\begin{equation}
\eta := \limsup_{\rho_C \rightarrow \tau_C } \frac{D(\CN[\rho]\|\tau_{C'})}{D(\rho_C\|\tau_C)}.
\end{equation}
\end{lem}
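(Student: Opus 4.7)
The plan is to rewrite $\eta$ as a normalized Hilbert--Schmidt contraction constant of $\CN$ on traceless Hermitian operators via the second-order expansion of relative entropy at $\tau_C$, and then lift this single-site contraction to operators on $EC$ by a Gram-matrix argument. First I would Taylor-expand the relative entropy at the completely mixed state. Writing $\rho_C = \tau_C + \epsilon K_C$ for Hermitian traceless $K_C$ (so that $\rho_C$ is a valid state for small $\epsilon$), the standard integral representation of the Hessian of $D(\cdot\|\tau_C)$ gives
\begin{equation*}
D(\tau_C + \epsilon K_C \,\|\, \tau_C) = \frac{\epsilon^2 d_C}{2\ln 2}\,\tr(K_C^2) + O(\epsilon^3).
\end{equation*}
Bistochasticity $\CN[\tau_C]=\tau_{C'}$ lets me apply the same expansion to the image, obtaining $D(\CN[\rho_C]\|\tau_{C'}) = \tfrac{\epsilon^2 d_{C'}}{2\ln 2}\,\tr(\CN[K_C]^2) + O(\epsilon^3)$. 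Approaching $\tau_C$ along an arbitrary direction $K_C$ and taking the $\limsup$ over directions then identifies
\begin{equation*}
\eta \;=\; \sup_{\substack{K_C^\dagger=K_C \neq 0\\ \tr K_C = 0}} \frac{d_{C'}\,\tr(\CN[K_C]^2)}{d_C\,\tr(K_C^2)}\,.
\end{equation*}
This also cleanly matches the description of $\eta$ in Theorem~\ref{thm: bistochastic} as the second eigenvalue of the unital self-adjoint CP self-map $\CZ=\CP_{\tau,\CN}\circ\CN$ on $\CB(C)$.

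Second, I would extend this estimate to operators $K_{EC}$ with $\tr_C K_{EC} = 0$. Decomposing $K_{EC}$ in an operator Schmidt form $K_{EC}=\sum_\alpha F_\alpha \otimes G_\alpha$ where $\{G_\alpha\}$ is an orthonormal family of traceless Hermitian operators on $C$ (which exists because the traceless Hermitian part of $\CB(C)$ has dimension $d_C^2-1$) and $F_\alpha$ are Hermitian on $E$, orthonormality gives $\tr(K_{EC}^2)=\sum_\alpha\tr(F_\alpha^2)=\tr(P)$ with $P_{\alpha\beta}:=\tr(F_\alpha F_\beta)$ positive semidefinite. A short computation then yields
\begin{equation*}
\tr\bigl((\id_E\otimes\CN_C)[K_{EC}]^2\bigr) \;=\; \sum_{\alpha,\beta}\tr(F_\alpha F_\beta)\,M_{\alpha\beta}, \qquad M_{\alpha\beta}:=\tr\bigl(\CN[G_\alpha]\CN[G_\beta]\bigr).
\end{equation*}
The Gram matrix $M$ is PSD and, by the single-site contraction above, has operator norm at most $\eta\,d_C/d_{C'}$. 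The elementary trace inequality $\tr(MP)\le \|M\|_{\mathrm{op}}\tr(P)$ for PSD matrices now gives $d_{C'}\tr((\id_E\otimes\CN_C)[K_{EC}]^2) \le \eta\, d_C\tr(K_{EC}^2)$; multiplying by $d_E$ produces exactly the claimed inequality. Non-Hermitian $K_{EC}$ reduces to the Hermitian case by splitting into Hermitian and anti-Hermitian parts, using $\BC$-linearity of $\CN$ and orthogonality of the two parts under the HS inner product.

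The main obstacle I expect is exactly the step of passing from the ancilla-free definition of $\eta$ to an ancilla-enhanced HS contraction on $EC$, since quantum DPI constants are in general not known to be independent of the ancilla (as the paper itself emphasizes in Sec.~\ref{sec:prelimsdpi}). The saving feature at the tracial fixed point is that relative entropy degenerates to a dimension-weighted HS norm at second order, and this quadratic form is well-behaved under ancillas: the contraction inequality reduces to the Gram-matrix estimate $\tr(MP)\le\|M\|_{\mathrm{op}}\tr(P)$, which is elementary. A minor technicality is that the $\limsup$ must be realized along admissible perturbations staying inside the state space; strict positivity of $\tau_C$ makes this automatic for all sufficiently small $\epsilon$ along any Hermitian traceless direction.
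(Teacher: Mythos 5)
Your proposal is correct and follows essentially the same route as the paper: identify $\eta$ with the normalized Hilbert--Schmidt contraction on traceless operators via the second-order expansion of $D(\cdot\|\tau)$ at the completely mixed state, and then use the stability of this $2$--$2$ contraction under tensoring with an identity on $E$. The only difference is that the paper simply asserts that tensoring with the identity does not change the leading singular value of $\CN\circ\BP_K$ in the $2$--$2$ norm, whereas your Gram-matrix estimate $\tr(MP)\le\|M\|_{\mathrm{op}}\tr(P)$ supplies an explicit proof of that step.
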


Let $\BP_K$ be the orthogonal projection onto the traceless operator subspace on $C$, $G_{ABC}:=\{O_{AB}\ot K_C\}$ and $G_{A\bar{A}}:=\{K_A\ot K_{\bar A}\}$. Then, we obtain an expression 
\begin{align}
G_{ABC}= \BP_K\circ\cN\circ\BP_K\circ\cN...\BP_K\circ\cN[G_{A\bar{A}}]\,.
\end{align}  
By Lemma~\ref{DPI 2norm}, the normalized HS norm of $G_{ABC}$ contracts under $\BP_K\circ\CN$ since the projection does not increase the HS norm. We thus obtain 
\begin{align}
    d_{ABC}\tr(G_{ABC})^2 &= d_A d_{B_1...B_l} d_C \tr(\BP_K\circ\CN\circ\BP_K\circ\CN\circ...\CN(G_{A\bar{A}}))^2\\
    & \le d_A d_{B_1...B_l} d_C \tr(\CN\circ\BP_K\circ\CN\circ...\CN(G_{A\bar{A}}))^2\\
    & \le \eta\cdot d_AD_{B_{B_1...B_{\ell-1}}C} \tr(\BP_K\circ\CN\circ...\CN(G_{A\bar{A}}))^2\\
    &\vdots\\
    &\leq d_Ad_C\eta^l\tr(G_{A\bar A})^2\,.
\end{align}

By Cauchy-Schawrtz inequality, we can bound the trace norm of $G_{ABC}$ by the HS norm. 
\begin{equation}
\lV G_{ABC}\rV_1\le  d_{ABC}\tr(G_{ABC})^2 \le \eta^\ell d_A^2 \tr(G_{A\bar{A}})^2.
\end{equation}
Here we used $d_A=d_C$. 

Since $\|G_{ABC}\|_1$ is exponentially suppressed, $\rho_{ABC}$ is close to $\rho_{AB}\ot\tau_C$ for large $\ell$. The next step is to convert to CMIs, and clearly $\rho_{AB}\ot\tau_C$ has zero CMI. 
From the continuity of entropy, the difference of the CMI of two states are bounded as follows. 
\begin{lem}[Continuity of CMI]\label{cont of CMI}
If $\rho_{ABC} = \sigma_{ABC}+G_{ABC}$ and $d_{ABC}\tr(G_{ABC})^2 \le \epsilon \le 1/\mathrm{e}$,
then 
\begin{equation}
|I(A:C|B)_\rho -I(A:C|B)_\sigma |\le 4 \log(d_{ABC})\epsilon - 4 \log(\epsilon)\epsilon.
\end{equation}
\end{lem}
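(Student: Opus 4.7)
The plan is to reduce continuity of the CMI to continuity of four marginal entropies and then apply a Fannes--type continuity bound. First I would expand $I(A:C|B) = S(AB)+S(BC)-S(B)-S(ABC)$ and use the triangle inequality,
\[
|I(A:C|B)_\rho - I(A:C|B)_\sigma| \le \sum_{X \in \{AB,\, BC,\, B,\, ABC\}} |S(X)_\rho - S(X)_\sigma|,
\]
reducing the task to bounding four marginal entropy differences, each involving a marginal of $G_{ABC} = \rho_{ABC} - \sigma_{ABC}$.

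Next I would convert the Hilbert--Schmidt hypothesis into a trace-norm bound via Cauchy--Schwarz, obtaining $\|G_{ABC}\|_1 \le \sqrt{d_{ABC}\,\tr(G_{ABC}^2)}$. Since the partial trace is a contraction in trace norm, $\|G_X\|_1 \le \|G_{ABC}\|_1$ for each marginal $X$. I then apply the Fannes inequality, in the form $|S(X)_\rho - S(X)_\sigma| \le \|G_X\|_1 \log d_X + \eta(\|G_X\|_1)$ with $\eta(x) = -x\log x$, valid for $\|G_X\|_1 \le 1/\mathrm{e}$, and use $d_X \le d_{ABC}$. Summing over the four marginals and invoking monotonicity of $\eta$ on $(0, 1/\mathrm{e}]$ produces a bound of the form $4 \log(d_{ABC})\cdot\|G\|_1 - 4 \log(\|G\|_1)\cdot \|G\|_1$, matching the shape stated in the lemma.

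The only delicate point is matching the precise constants: the hypothesis is phrased as a Hilbert--Schmidt bound, while the Fannes input is a trace-norm bound. Depending on whether one identifies the symbol $\epsilon$ in the conclusion directly with $\|G\|_1$ or propagates $\sqrt{\epsilon}$ through from the Cauchy--Schwarz step, the final numerical coefficients shift (as does the argument of the logarithm inside $\eta$). The structural argument --- the four-term triangle inequality together with Fannes continuity applied to each marginal --- is robust and independent of this normalization choice, so the main content of the proof is clear even before these bookkeeping details are fixed.
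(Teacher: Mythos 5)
Your proposal follows essentially the same route as the paper: expand the CMI into the four marginal entropies, bound each difference with the Fannes-type continuity bound, and control the marginals of $G_{ABC}$ --- the only cosmetic difference being that the paper propagates the \emph{normalized} Hilbert--Schmidt norm through the partial traces (via its own monotonicity lemma) rather than converting to the trace norm first. The normalization concern you flag is genuine and is in fact present in the paper's own continuity-of-entropy lemma: the honest Cauchy--Schwarz step gives $\lVert G\rVert_1\le\sqrt{\epsilon}$ and hence a bound of order $\sqrt{\epsilon}\,\log(d_{ABC}/\epsilon)$ rather than $\epsilon\,\log(d_{ABC}/\epsilon)$, which is weaker than stated but still suffices for the exponential-decay applications downstream.
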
 

By setting $\sigma_{ABC}=\rho_{AB}\ot\tau_C$ and taking large enough $\ell$ such that the deviation $G_{ABC}$ is small enough $\eta^\ell d_A^2 \tr(G_{A\bar{A}})^2\le 1/\mathrm{e}$, we obtain the exponential decay of the CMI:
\begin{align}
I(A:C|B)_\rho &\le 4(\log(d_{B_1})\ell+2\log(d_A))\eta^\ell d_A^2 \tr(G_{A\bar{A}})^2 \notag\\
&\hspace{3.5cm}- 4 \left(\ell \log(\eta) +\log(d^2_A\tr(G_{A\bar{A}})^2)\right)\eta^\ell d_A^2 \tr(G_{A\bar{A}})^2\\
&\le O(\ell\eta^{\ell}).
\end{align}

Lastly, we alternatively characterize $\eta$ by a spectral property:
\begin{lem}\label{lem:second singular}
For a bistochastic channel $\CE$, the second largest singular value (which coincides with eigenvalue due to being self-adjoint) of the Petz-recovery channel composed with the channel $\CP_{\tau, \CE}\circ\CE$ is exactly the contraction ratio $\eta$.
    \begin{equation}
        \eta = \lambda_2(\CP_{\tau, \CE}\circ\CE) = \limsup_{\rho \rightarrow \tau } \frac{D(\CE[\rho]\|\tau')}{D(\rho\|\tau)}\,.
    \end{equation}
\end{lem}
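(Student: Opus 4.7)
The plan is to reduce the limsup on the right-hand side to a Hilbert--Schmidt Rayleigh quotient by Taylor expanding the relative entropy about $\tau$, and then recognize that quotient as the top eigenvalue of $\CP_{\tau,\CE}\circ\CE$ restricted to the traceless subspace. First I would compute the Petz map explicitly for bistochastic $\CE$: since $\tau=I/d$ and $\CE[\tau]=\tau'=I/d'$, both $\tau^{1/2}$ and $\CE(\tau)^{-1/2}$ are scalar, so
\begin{equation}
\CP_{\tau,\CE}\circ\CE \;=\; \tfrac{d'}{d}\,\CE^\dagger\circ\CE,
\end{equation}
which is manifestly self-adjoint in the Hilbert--Schmidt inner product, fixes $I$ (equivalently $\tau$) with eigenvalue $1$ by unitality of $\CE$ and $\CE^\dagger$, and has top eigenspace equal to the correctable algebra $\CA(\CE)$ by Proposition~\ref{invar_cor}.

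Next, I would parameterize nearby states by $\rho=\tau+\epsilon K$ with $K$ traceless Hermitian, and Taylor-expand $\log(I+\epsilon d K)$ to second order. A direct computation yields
\begin{equation}
D(\rho\|\tau) \;=\; \tfrac{d}{2}\,\tr(K^2)\,\epsilon^{2} + O(\epsilon^{3}),
\qquad
D(\CE[\rho]\|\tau') \;=\; \tfrac{d'}{2}\,\tr(\CE[K]^{2})\,\epsilon^{2} + O(\epsilon^{3}),
\end{equation}
where I use that $\CE[\tau]=\tau'$ and that $\CE[K]$ is traceless. Taking the ratio and letting $\epsilon\to 0$ along every fixed direction $K$ shows
\begin{equation}
\limsup_{\rho\to\tau}\frac{D(\CE[\rho]\|\tau')}{D(\rho\|\tau)}
\;=\;\sup_{\substack{K=K^\dagger,\;K\neq 0 \\ \tr K=0}}
\frac{\tfrac{d'}{d}\,\langle K,\CE^\dagger\CE[K]\rangle_{HS}}{\langle K,K\rangle_{HS}},
\end{equation}
which by the Courant--Fischer characterization of self-adjoint operators equals the largest eigenvalue of $\CP_{\tau,\CE}\circ\CE$ on the orthogonal complement of $\mathbb{C}I$. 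Since $I$ is a $+1$ eigenvector, this restricted supremum is $\lambda_{2}(\CP_{\tau,\CE}\circ\CE)$: it equals $1$ exactly when the $+1$ eigenspace is degenerate, i.e. when $\CA(\CE)\neq\mathbb{C}I$, and is strictly smaller otherwise.

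The only technical point I expect to spend care on is the exchange of the limsup with the supremum over $K$: I must verify the $O(\epsilon^{3})$ Taylor remainders are controlled uniformly in $K$ on the unit Hilbert--Schmidt sphere (which follows from explicit integral remainders plus the bound $\|\epsilon d K\|<1$ that keeps $\rho\succ 0$), and check that every traceless Hermitian direction $K$ is realized by an admissible curve of genuine quantum states $\rho = \tau + \epsilon K$ for all sufficiently small $\epsilon$. Both steps are routine, so the main content of the lemma is the algebraic identification $\CP_{\tau,\CE}\circ\CE=(d'/d)\CE^\dagger\CE$ together with the quadratic expansion, after which the spectral claim is immediate.
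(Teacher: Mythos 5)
Your proposal is correct and follows essentially the same route as the paper: the paper likewise reduces $\CP_{\tau,\CE}\circ\CE$ to $\tfrac{d'}{d}\CE^\dagger\circ\CE$ using that $\tau$ and $\CE[\tau]$ are scalar multiples of the identity, and identifies $\lambda_2$ with the $2$--$2$ norm of $\CE\circ\BP_K$ on the traceless subspace, which it converts to the relative-entropy limsup via the same second-order expansion $D(\rho\|\tau)=\tfrac{1}{2}\tr(dK^2)+O(K^3)$ (carried out in its proof of Lemma~\ref{DPI 2norm}). Your write-up is somewhat more explicit about the uniformity of the Taylor remainder and the Courant--Fischer step, but there is no substantive difference in approach.
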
 
Therefore $\eta<1$ if and only if the largest singular value (or equivalently the eigenvalue) $\lambda_1=1$ is unique. By Proposition~\ref{invar_cor}, the unit eigenvalue subspace of $\lambda_1$ being one-dimensional is equivalent to the correctable algebra being trivial.
\begin{equation}
    \CA(\CE) = \BC I \iff   \lambda_2(\CP_{\tau, \CE}\circ\CE) = \eta <1\,.
\end{equation}
This completes the second statement of the proof.
\end{proof}

\subsection{Theorem~\ref{thm: partially invariant}: decay of trace norm CMI for partially invariant channel}
\label{sec: sketch partially invariant}
The proof structure is analogous to the previous Section~\ref{sec: sketch bistochastic}, with the key distinction being the norm and the associated tools. Under the trace norm we have results for a slightly general family of channels satisfying $\CN[\nu_C] = \sigma_{B}\otimes \nu_{C}$. Though, unlike the HS norm, the trace norm suffers from tensoring with an auxiliary system by a bounded factor of dimension of system $C$ (Lemma~\ref{lem:cb to 1-1}). The contraction ratio obtained this way is likely not the most stringent, but at least it implies strict contraction of the CMI when the channel is sufficiently noisy. The proofs for the key lemmas are shown in Sec.~\ref{sec:remaining proof partially invariant}.
\begin{proof}

First, we decompose the state into a product state $\rho_{AB}\otimes \nu_C$ plus some deviation $G_{ABC}$ traceless on system $C$.
Suppose 
\begin{align}
\CN[K_C] &= \BP_\nu \circ\CN[K_C]+ (\id- \BP_\nu)\circ \CN[K_C] = O_B \otimes \nu_C + \{O_B\otimes K_C\},
\end{align}
where $\BP_\nu:=\nu_C Tr_C[\cdot]$ is the forgetful channel on subsystem C. Again we used $K_C$ to denote an operator traceless on system $C$, but we should note the difference from Sec.~\ref{sec: sketch bistochastic} that the projection $(\id-\BP_\nu)$ is not an orthogonal projection under HS norm. Writing the maximally entangled state as $\sigma_{A\bar{A}} = \tau_A \otimes\nu_{\bar{A}} + \{O_A\otimes K_{\bar{A}}\}$, we obtain that 
\begin{align}
\rho_{AB_1C_1} &= \CN[\tau_A\otimes \nu_{\bar{A}}+\{O_A\otimes K_{\bar{A}}\}] =\tau_A\otimes \sigma_{B_1}\otimes\nu_{C_1} + O_{AB_1} \otimes\nu_{C_1} +\{ O_{AB_1} \otimes K_{C_1}\}\notag\\
&= \rho_{AB_1}\otimes\nu_{C_1} + \{O_{AB_1} \otimes K_{C_1}\},
\\
\rho_{AB_1B_2C_1} &= \CN[\rho_{AB_1C_1} ]\\ 
&=\rho_{AB_1} \otimes\sigma_{B_2}\otimes\nu_{C_2} + O_{AB_1} \otimes O_{B_2}\otimes \nu_{C_2} + \{O_{AB_1} \otimes O_{B_2} \otimes K_{C_2}\} \notag\\
&=\rho_{AB_1B_2}\otimes\nu_{C_2} + \{O_{AB_1B_2} \otimes K_{C_2} \},\\
\rho_{AB_1\cdots B_{\ell} C_{\ell}} & =  \tau_A\otimes\sigma_{B_1} \otimes\sigma_{B_2}\cdots \nu_{C_\ell}\notag\\
& +  O_{AB_1}\otimes \sigma_{B_2}\cdots \nu_{C_\ell}\notag\\
& +\cdots\notag \\
& +  \{O_{AB_1} \otimes O_{B_2}\cdots \otimes K_{C_\ell}\}\notag\\
& = \rho_{AB} \otimes \nu_{C_\ell}+\{ O_{AB}\otimes K_{C_\ell} \}\\
&= \rho_{AB}\otimes \nu_{C_\ell}+ (\id-\BP_\nu)\circ\CN...(\id-\BP_\nu)\circ\CN(G_{A\bar{A}}).
\end{align}
Let $G_{ABC} := (\id-\BP_\nu)\circ\cN...\circ(\id-\BP_\nu)\circ\cN(G_{A\bar{A}})$, which is the output of iteration of the map $(\id-\BP_\nu)\circ\cN\circ (\id-\BP_\nu)$, where we multiply a copy of $(\id-\BP_\nu)$ as its square is equal to itself. The contraction ratio can be obtained by 
\begin{equation}\label{eq:tr1gabc}
\lV G_{ABC}\rV_1 \le \lV(\id-\BP_\nu)\circ \CN\circ (\id-\BP_\nu)\rV_{cb}^\ell \lV G_{A\bar{A}}\rV_1,
\end{equation}
where the completely bounded norm for a super-operator $\phi :\CM_d \rightarrow \CM_{d'}$ is defined by\footnote{This is 1-1 super-operator norm with arbitrarily large ancilla.} 
  \begin{align}
  \lV\phi \rV_{cb}:=\sup_{k, X} \frac{\lV( \phi\otimes id_k )[X]\rV_1}{\lV X \rV_1}.
  \end{align}
Unfortunately, the completely bounded norm of the difference of channels is not less than one in general. However, we do obtain a meaningful contraction bound for $\eta_{1,C}$ small enough. 
\begin{lem}
\label{cb eta}
The completely bounded norm of the difference between channels is bounded by 
\begin{equation}
\lV(\id-\BP_\nu)\circ \CN\circ (\id-\BP_\nu)\rV_{cb} \le 2\lV  \CN - \CN\circ\BP_\nu\rV_{cb} \le 16 d_C \eta_{1,C} ,
\end{equation}
where $\eta_{1,C}$ is the trace norm contraction ratio
\begin{equation}
\eta_{1,C}:=\sup_{\rho_C,\rho'_C}\frac{\lVert\CN[\rho_{C}]-\CN[\rho'_{C}]\lVert_1 }{\lVert\rho_{C}-\rho'_{C}\lVert_1}.
\end{equation}

\end{lem}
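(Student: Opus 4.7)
The proof splits into the two inequalities stated in the lemma, each combining triangle inequalities with a standard norm-conversion step. The key algebraic observation powering both steps is that $\CN-\CN\circ\BP_\nu$ maps an operator $X$ to $\CN(X-(\tr X)\nu)$, i.e.\ to $\CN$ evaluated on a traceless perturbation; consequently the whole superoperator inherits the contractivity of $\CN$ on traceless inputs encoded by $\eta_{1,C}$.

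For the first inequality, my plan is to algebraically factor
\[
(\id-\BP_\nu)\circ \CN\circ (\id-\BP_\nu) \;=\; (\id-\BP_\nu)\circ(\CN-\CN\circ\BP_\nu),
\]
where the outer $(\id-\BP_\nu)$ is understood as $\id_B\otimes(\id-\BP_\nu)$ acting only on the $C$-part of the output of $\CN$ (tensoring with $\id_B$ does not change the cb-norm). Submultiplicativity of the cb-norm then reduces everything to bounding $\|\id-\BP_\nu\|_{cb}$. Since $\BP_\nu[\cdot]=\nu\,\tr(\cdot)$ is CPTP we have $\|\BP_\nu\|_{cb}=1$, and the triangle inequality yields $\|\id-\BP_\nu\|_{cb}\le 2$, which is exactly the factor of $2$ in the first inequality.

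For the second inequality, my plan is a two-step reduction: first convert the cb-norm of $\CN-\CN\circ\BP_\nu$ to its $1$-$1$ superoperator norm via Lemma~\ref{lem:cb to 1-1}, paying a factor of order $d_C$; then bound the $1$-$1$ norm directly by $\eta_{1,C}$. The bound on the $1$-$1$ norm proceeds as follows: for any Hermitian $X$, the operator $Y:=X-(\tr X)\nu$ is Hermitian and traceless with $\|Y\|_1\le 2\|X\|_1$ by the triangle inequality; its Jordan decomposition writes $Y=(\|Y\|_1/2)(\rho_+-\rho_-)$ with $\rho_\pm$ orthogonal-support states, so $\|\rho_+-\rho_-\|_1=2$ and the definition of $\eta_{1,C}$ gives $\|\CN(Y)\|_1\le\eta_{1,C}\|Y\|_1\le 2\eta_{1,C}\|X\|_1$. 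Extending from Hermitian to general $X=X_1+iX_2$ costs another factor of $2$ from the triangle inequality. Multiplying together the constants of order $2$ from each step, together with the dimension factor $d_C$ from the cb-conversion, then gives the advertised bound $16\,d_C\,\eta_{1,C}$.

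The only substantive obstacle is the dimension loss in the cb-to-$1$-$1$ conversion. By Choi--Jamio\l{}kowski duality the inner system over which the cb-norm optimizes can be taken of dimension $d_C$, and the bound $\|\Phi\|_{cb}\le d_C\|\Phi\|_{1\to 1}$ cannot be improved for generic Hermiticity-preserving $\Phi$; no further structural feature of $\CN-\CN\circ\BP_\nu$ seems available to eliminate this factor. This is precisely what forces Theorem~\ref{thm: partially invariant} to be informative only in the noisy regime $\eta_{1,C}<(16d_C)^{-1}$.
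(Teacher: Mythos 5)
Your proposal is correct and follows essentially the same route as the paper: factor out $(\id-\BP_\nu)$ and use $\|\id-\BP_\nu\|_{cb}\le 2$, convert the cb-norm to the $1$-$1$ norm at the cost of $d_C$ via Lemma~\ref{lem:cb to 1-1}, and then reduce the $1$-$1$ norm of $\CN-\CN\circ\BP_\nu$ to $\eta_{1,C}$ by feeding $\CN$ a traceless input. The only (harmless) difference is bookkeeping in the last step --- your Jordan-decomposition argument actually yields $\|\CN-\CN\circ\BP_\nu\|_{1-1}\le 4\eta_{1,C}$ and hence the slightly sharper constant $8d_C\eta_{1,C}$, whereas the paper decomposes a general trace-class operator into four positive parts and lands on $16d_C\eta_{1,C}$; either way the stated bound holds.
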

The constant factor $16$ is crude, and the factor of dimension $d_C$ is due to converting 1-1 superoperator norm to the diamond norm.
From this lemma Eq.~\eqref{eq:tr1gabc} reduces to 
\begin{equation}
    \lV G_{ABC}\rV_1 \le \left(16 d_C \eta_{1,C} \right)^\ell \lV G_{A\bar{A}}\rV_1. 
\end{equation}
By the triangle inequality we show that $\rho_{ABC}$ becomes close to a product state in trace norm
\begin{equation}
\left\|\rho_{ABC} - \rho_{AB}\otimes \nu_C\right\|_1 \leq O\left((16\eta_C d_C)^\ell\right).
\end{equation}
We conclude the proof by calling the continuity of CMI (Lemma~\ref{cont of CMI}) again.
\end{proof}

\subsection{Thoerem~\ref{mainthm}: decay of $B$-measured CMI in MPDO}\label{sec:proof measured}

First, $3 \implies 4 $ is immediate from that the measured CMI is equal to the expectation of mutual information between $A$ and $C$ over measurement outcomes $b$ in $\bar{B}$, 
\begin{align}\label{eq:expectation over MI}
I(A:C|\bar{B})_\rho =  \sum_{b}p_{\bar B}(b)I(A:C)_b,
\end{align}
where $I(A:C)_b$ is the mutual information of state $\rho_{AC|\,{\bar B}=b}\propto \tr_{\bar B}(|b\>\<b|\rho_{ABC})$, defined on the Hilbert space of system $A$ and $C$ only, as we can individually discuss each classical outcome $\bar{B}=b$.

We show the remaining two indications $1\implies2$ and $2\implies3$ in the following. 

\subsubsection{$2 \implies 3$:  uniformly bounding every sequence quantum operations by contraction in Hilbert's projective metric }
\begin{proof}
We show Condition 2 implies the mutual information $I(A:C)_b$ decays exponentially with a uniform decay rate $c$ for every outcome $b$. As showed in the above, this implies the decay of CMI without touching the probability distribution $p(b)$.  

Recall that we have a set of CP-maps $\cM_{s_k}:C\to C$ for each outcome $s_k$.  These maps $\CM_{s_{k}}$(or the coarse-grained maps $\CM'_t:=\CM_{s_{k+\xi}}\circ\ldots\circ\CM_{s_{k}}$) are all contractions in 
the \textit{Hilbert's projective metric} $h(a,b)$. A self-contained introduction and supporting theorems are included in Section~\ref{hilbert}. The contraction ratio associated to a CP-map $\CM$ 
is given by
\begin{align}\label{contraction ratio}
\eta_\CM := \sup \limits_{a,b \in S_+} \frac{h(\CM(a),\CM(b))}{h(a,b)} \le 1,
\end{align}
where the supremum is taken over $S_+$, the set of unnormalized positive semi-definite  operators.  Condition 2 guarantees that $\{\CM'_t\}$ has a strict contraction ratio at Eq.~\eqref{contraction ratio}. After repeatedly applying $\CM'_t$, any state is mapped to a fixed state (up to rescaling) with exponentially small error. Accounting the normalization and the conversion to trace norm,  we obtain the following lemma.
\begin{lem}\label{lem:chpmttn}
\label{lem: hilbert to trace}
If all $\{\CM_{1}, \CM_{2}, \ldots,\CM_{n}\}$ are CP self-maps that map any state to a full rank state, then for arbitrary sequence of $b = s_\ell,\ldots, s_1 \in \{1,\ldots, n\}^\ell$ and all $\rho_1, \rho_2$
\begin{align}
\left\| \frac{\CM_b[\rho_1]}{\tr(\CM_b[\rho_1])} - \frac{\CM_b[\rho_2]}{\tr(\CM_b[\rho_2])} \right\|_1 = O(e^{-c \ell}) ,
\end{align}
where the exponent $c$ is independent of $b$. 
\end{lem}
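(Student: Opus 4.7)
The plan is to use Hilbert's projective metric $h(\cdot,\cdot)$ on the positive cone together with the fact that a CP map sending every state to a full-rank output is a strict contraction in $h$ (the quantum analogue of Birkhoff's theorem, recalled in Section~\ref{hilbert}). Denoting by $\eta_{\CM_j}<1$ the contraction ratio of $\CM_j$ defined in Eq.~\eqref{contraction ratio}, I would set
\begin{equation*}
\eta := \max_{j \in \{1,\ldots,n\}} \eta_{\CM_j} < 1,
\end{equation*}
which is well-defined and strictly less than one since the pool of maps is finite and each $\CM_j$ is strictly positive by hypothesis.

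Next, I would iterate the contraction along the sequence $b=(s_\ell,\ldots,s_1)$. The subtlety is that $h(\rho_1,\rho_2)$ may be infinite when $\rho_1,\rho_2$ have disjoint supports, so I would peel off the first step: after a single application of any strictly positive $\CM_{s_1}$ the outputs are full rank, and their Hilbert distance is uniformly bounded by a constant $h_0<\infty$ depending only on the pool $\{\CM_1,\ldots,\CM_n\}$ and on $\dim C$, because the normalized images $\CM_j[\rho]/\tr \CM_j[\rho]$ lie in a compact subset of the interior of the state space (their condition numbers are uniformly bounded). This gives
\begin{equation*}
h(\CM_b[\rho_1],\CM_b[\rho_2]) \le \eta^{\ell-1}\, h(\CM_{s_1}[\rho_1],\CM_{s_1}[\rho_2]) \le \eta^{\ell-1} h_0,
\end{equation*}
with $h_0$ independent of both the inputs and the outcome sequence $b$.

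Finally, I would convert Hilbert's metric into the trace norm for normalized states. A standard comparison inequality (to be recorded in Section~\ref{hilbert}) states that whenever two positive operators $a,b$ share support, $\lV a/\tr a - b/\tr b\rV_1$ is controlled by $h(a,b)$ up to a constant depending only on the ambient dimension. Combining this with the previous display yields
\begin{equation*}
\lV \CM_b[\rho_1]/\tr\CM_b[\rho_1] - \CM_b[\rho_2]/\tr\CM_b[\rho_2]\rV_1 = O(\eta^{\ell-1}) = O(e^{-c\ell}),
\end{equation*}
with $c=-\ln\eta>0$ independent of $b$, as required.

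I expect the main technical obstacle to be the uniformity of the constants: one must verify that both the contraction ratio $\eta$ and the peeling constant $h_0$ can be chosen independently of the input states and of the sequence $b$. This relies on the compactness of the image cones of the strictly positive maps together with the finiteness of the map pool, and on showing that the contraction ratio in Eq.~\eqref{contraction ratio} is a continuous functional of a strictly positive CP map so that the maximum over the finite family is strictly below one. The passage from Hilbert's metric to the trace norm is standard once the supports coincide, so the bulk of the argument is really the Birkhoff-type contraction plus a careful bookkeeping of the first step.
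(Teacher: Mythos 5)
Your proposal is correct and follows essentially the same route as the paper's proof: take the maximum of the Birkhoff--Hopf contraction ratios over the finite pool (each strictly less than one because the projective diameter of a strictly positive map is finite), peel off the first application to get a uniform bound on the Hilbert distance of the images, iterate the contraction, and convert back to the trace norm via the standard comparison inequality. The only cosmetic difference is that the paper obtains $\eta_{\CM_s}<1$ directly from the identity $\eta_{\CM_s}=\tanh(\Delta(\CM_s)/4)$ rather than via a continuity argument for the contraction ratio.
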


This statement implies the bipartite state $\rho_{AC,b}$ being close to a product state in the trace norm.
\begin{lem}
\label{forget to MI}
Suppose the CP map $\CM_b: C' \rightarrow C$ is contractive in the sense that for all $\rho_1, \rho_2$,
\begin{equation}
    \left\lV \frac{\CM_b[\rho_1]}{\tr(\CM_b[\rho_1])} - \frac{\CM_b[\rho_2]}{\tr(\CM_b[\rho_2])} \right\rV_1 \le \epsilon.
\end{equation}
Then the state $\rho_{AC} := \CM_b[\sigma_{AC'}]/\tr(\CM_b[\sigma_A])$ is close to the product state in the trace distance
\begin{align}
2T_b := \lV \rho_{AC,b} - \rho_{A,b}\otimes\rho_{C,b} \rV_1 \le 4d_{C'}\lV \sigma_{C'}^{-1}\rV \epsilon.
\end{align}
\end{lem}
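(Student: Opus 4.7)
The plan is to isolate the discrepancy $\rho_{AC,b}-\rho_{A,b}\otimes\rho_{C,b}$ as the image of $\sigma_{AC'}$ under a single linear superoperator and then control its trace norm through the completely bounded norm machinery of Lemma~\ref{lem:cb to 1-1}. Specifically, define
\begin{equation}
\CN[X] := \CM_b[X] - \tr(\CM_b[X])\,\rho_{C,b}, \qquad Z := \tr(\CM_b[\sigma_{C'}]),
\end{equation}
which is linear in $X$ and satisfies $\tr(\CN[X])=0$ for every $X$. Using $\rho_{C,b}=\CM_b[\sigma_{C'}]/Z$ together with the identity $\tr_C[(\id_A\otimes\CM_b)[\sigma_{AC'}]]=\tr_{C'}[(I_A\otimes \CM_b^*(I_C))\sigma_{AC'}]$, a short manipulation yields the exact relation
\begin{equation}
\rho_{AC,b} - \rho_{A,b}\otimes \rho_{C,b} \;=\; \frac{1}{Z}\,(\id_A\otimes \CN)[\sigma_{AC'}].
\end{equation}
Hence it suffices to bound $\|(\id_A\otimes \CN)[\sigma_{AC'}]\|_1$ and lower-bound $Z$.

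For the numerator I would invoke the contractivity hypothesis with $\rho_2=\sigma_{C'}$, so that the reference state on the right-hand side matches $\rho_{C,b}$ exactly: for any input state $\rho_1$,
\begin{equation}
\|\CN[\rho_1]\|_1 = \tr(\CM_b[\rho_1])\,\Big\|\tfrac{\CM_b[\rho_1]}{\tr(\CM_b[\rho_1])}-\rho_{C,b}\Big\|_1 \leq \tr(F\rho_1)\,\epsilon \leq \|F\|\,\epsilon,
\end{equation}
where $F:=\CM_b^*(I_C)\geq 0$. Extending to arbitrary Hermitian inputs via the Jordan decomposition $X=X_+-X_-$ yields $\|\CN\|_{1\to 1}=O(\|F\|\epsilon)$, and Lemma~\ref{lem:cb to 1-1} then upgrades this to $\|\CN\|_{cb}=O(d_{C'}\|F\|\epsilon)$. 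Since $\|\sigma_{AC'}\|_1=1$, this immediately gives $\|(\id_A\otimes \CN)[\sigma_{AC'}]\|_1=O(d_{C'}\|F\|\epsilon)$.

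For the denominator, $\sigma_{C'}\geq \|\sigma_{C'}^{-1}\|^{-1}\, I_{C'}$ combined with $\tr(F)\geq \|F\|$ (elementary for positive semi-definite $F$) gives
\begin{equation}
Z = \tr(F\sigma_{C'}) \geq \|\sigma_{C'}^{-1}\|^{-1}\,\tr(F) \geq \|\sigma_{C'}^{-1}\|^{-1}\,\|F\|.
\end{equation}
Dividing the numerator bound by $Z$ produces $\|\rho_{AC,b}-\rho_{A,b}\otimes\rho_{C,b}\|_1=O(d_{C'}\|\sigma_{C'}^{-1}\|\epsilon)$, with the explicit constant $4$ recoverable after tracking the factors incurred by the Hermitian-to-general extension and the cb lifting. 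The main subtlety I foresee is exactly this cb-norm step: the hypothesis only controls $\CN$ on physical states of $C'$, and promoting it to an estimate on the $A$-entangled positive operator $\sigma_{AC'}$ is precisely where the dimension factor $d_{C'}$ must enter; all remaining manipulations are bookkeeping around the definition of $\CN$ and the two positivity inequalities.
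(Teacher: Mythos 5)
Your proposal is correct and follows essentially the same route as the paper's proof: both write the deviation $\rho_{AC,b}-\rho_{A,b}\otimes\rho_{C,b}$ as $Z^{-1}(\id_A\otimes\CN)[\sigma_{AC'}]$ for the traceless map $\CN[X]=\CM_b[X]-\tr(\CM_b[X])\rho_{C,b}$, control $\|\CN\|_{1\to 1}$ on states via the contractivity hypothesis (with the same factor of $4$ from passing to general operators), lift to the completely bounded norm at the cost of the dimension factor $d_{C'}$ via Lemma~\ref{lem:cb to 1-1}, and absorb the normalization $Z=\tr(\CM_b[\sigma_{C'}])$ using $\sigma_{C'}\geq \|\sigma_{C'}^{-1}\|^{-1}I$. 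The only cosmetic difference is that you lower-bound $Z$ against $\|F\|$ with $F=\CM_b^*(I_C)$ while the paper bounds $\sup_{\rho_1}\tr(\CM_b[\rho_1])/Z$ directly; since $\sup_{\rho_1}\tr(F\rho_1)=\|F\|$, these are the same estimate.
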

Plugging $\sigma_{C'} = \tau_{C'}$, and by the Alicki–Fannes–Winter inequality~\cite{winter2015tight} (see also \cite[Theorem~11.10.3]{wilde2011classical}), 
we convert the $O(e^{-c\ell})$ bound on the trace norm to the mutual information with some constant overhead.
\end{proof}

\subsubsection{$1 \implies 2$: uniformly bounding coarse graining length by generalizing quantum Wielandt's inequality to family of CP self-maps}

Recall our goal is to show that there exists a coarse-graining length $\xi$ such that for any sequence $t=(s_{\ell},...,s_{1})$, the coarse grained CP map $\CM'_{t}:=\CM_{s_{l}}\cdots \CM_{s_1}$ is strictly positive for all $ l\geq\xi$. First, we convert strict positivity to having full Kraus rank.
\begin{lem}\label{fulldimkraus}
If ${\rm span}\{K_1, \cdots, K_p \} = Mat(D,D)$ then $\CN[\rho]: = \sum_i K_i \rho K_i^\dagger > 0, \forall \rho$.
\end{lem}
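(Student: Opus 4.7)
My plan is a short contradiction argument working directly from the definition of positivity. Suppose, towards a contradiction, that for some nonzero positive operator $\rho$ the output $\CN[\rho]$ fails to be strictly positive: equivalently, there is a nonzero vector $\ket{v}$ annihilated by $\CN[\rho]$, so that $\bra{v}\CN[\rho]\ket{v} = 0$. Plugging in the Kraus form of $\CN$ together with the spectral decomposition $\rho = \sum_j p_j \ket{\psi_j}\bra{\psi_j}$ with all $p_j > 0$, I obtain
\begin{equation*}
0 \;=\; \bra{v}\CN[\rho]\ket{v} \;=\; \sum_{i,j} p_j\, \bigl|\bra{v}K_i\ket{\psi_j}\bigr|^2,
\end{equation*}
which, as a sum of non-negative terms vanishing, forces $\bra{v}K_i\ket{\psi_j} = 0$ for every $i$ and every index $j$ appearing in the spectral decomposition of $\rho$.

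The second step is to promote this identity from the specific generators to their entire linear span: by linearity, $\bra{v}M\ket{\psi_j} = 0$ for every $M \in \mathrm{span}\{K_1,\ldots,K_p\}$ and every $\ket{\psi_j}$ in the support of $\rho$. If the spanning hypothesis $\mathrm{span}\{K_1,\ldots,K_p\} = \mathrm{Mat}(D,D)$ holds, I may specialize $M$ to the rank-one matrix $\ket{v}\bra{\psi_j}$, which certainly lies in $\mathrm{Mat}(D,D)$ and satisfies $\bra{v}M\ket{\psi_j} = \lV v\rV^{2}\lV \psi_j\rV^{2} > 0$, directly contradicting the previous display. This closes the contrapositive and proves the lemma.

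\textbf{Where the work really lives.} The lemma itself is essentially a direct linear-algebra identity, and I do not anticipate any substantive obstacle in its proof; the only point requiring care is tracking the support of $\rho$ when unpacking the sum-of-squares identity, since the hypothesis only yields information on pairs $(\ket{v},\ket{\psi_j})$ with $\ket{\psi_j}$ in $\mathrm{supp}(\rho)$, which is exactly what the argument then needs. The delicate step in the broader program is the companion fact (needed to pass from Condition~1 to the hypothesis of this lemma) that after a sufficient coarse-graining length $\xi$ the Kraus operators of \emph{every} composition $\CM_{s_{k+\xi}} \circ \cdots \circ \CM_{s_k}$ do in fact span all of $\mathrm{Mat}(D,D)$, uniformly in the outcome sequence; that is a quantum Wielandt-type statement for families of CP maps and is where the algebra-generating hypothesis in Condition~1 is meant to do the real work feeding into the present lemma.
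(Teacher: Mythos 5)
Your proof is correct and follows essentially the same route as the paper's: both reduce $\bra{v}\CN[\rho]\ket{v}=0$ to $\bra{v}K_i\ket{\psi}=0$ for all $i$ and then contradict the spanning hypothesis via the non-degeneracy of the trace pairing (the paper phrases this as $\tr(K_i\ket{u}\bra{v})=0$ for all $i$ being impossible, while you exhibit the explicit witness $M=\ket{v}\bra{\psi_j}$). The only difference is cosmetic: you carry a general spectral decomposition of $\rho$ where the paper works directly with a pure state.
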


Our strategy to a uniform bound on $\xi$ now relies on demanding for each $s_k$,  $\CM_{s_k}[\cdot] = \sum_i E_i^{s_k} \cdot E_i^{s_k\dagger}$ must increase the dimension of span of Kraus operators untill reaching full rank. 
\begin{lem}[condition 1 implies the increment of span of Kraus operators]\label{increase span}
Consider a CP self-map $\CM$ with $p>2$ Kraus operators $\CM\sim \{ E_1,\cdots, E_p\} $ such that $E_p$ is invertible. Suppose that $\left((E_p)^{-1} E_1,\cdots,(E_p)^{-1}  E_{p-1} \right)$ generate full matrix algebra $Mat_\BC(D,D)$ by addition and multiplication. Then for any set of Kraus operators $\{T_j\}$ containing an invertible element and whose Kraus rank is not full, the dimension of span must increase after applying $\CM\sim \{ E_1,\cdots, E_p\} $
\begin{equation}
{\rm dim} ({\rm span}\{E_i T_j\}) > {\rm dim} ({\rm span}\{T_j\}). 
\end{equation}
\end{lem}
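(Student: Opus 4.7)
The plan is a proof by contradiction. Set $V := \mathrm{span}\{T_j\}$ and $V' := \mathrm{span}\{E_i T_j\}$. The hypothesis is that $V \subsetneq M_D(\mathbb{C})$ (not full Kraus rank) and that $V$ contains an invertible element $T_1$; my goal is $\dim V' > \dim V$, so suppose for contradiction that $\dim V' = \dim V$.

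First I would exploit the invertibility of $E_p$. The subspace $V' = \sum_{i=1}^{p} E_i V$ contains $E_p V$, which already has dimension $\dim V$ because $E_p$ is invertible. Hence equality of the two dimensions forces $V' = E_p V$, so $E_i V \subseteq E_p V$ for every $i < p$. Multiplying on the left by $E_p^{-1}$ yields $F_i V \subseteq V$ for all $i = 1, \ldots, p-1$, where $F_i := E_p^{-1} E_i$. Thus $V$ is closed under left multiplication by each $F_i$, and therefore by every element of the subalgebra the $F_i$'s generate under addition and multiplication.

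The second step invokes Condition~1: that subalgebra is all of $M_D(\mathbb{C})$. Consequently $M_D(\mathbb{C}) \cdot V \subseteq V$. Invertibility of $T_1 \in V$ now closes the argument, because for any $A \in M_D(\mathbb{C})$ we may write $A = (A T_1^{-1}) T_1 \in M_D(\mathbb{C}) \cdot V \subseteq V$, so $V = M_D(\mathbb{C})$, contradicting the assumption that $\{T_j\}$ does not have full Kraus rank.

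The only real obstacle is conceptual: noticing that failure of the span to strictly grow is precisely a left-invariance statement for $V$ under all of the ``rescaled'' operators $F_i = E_p^{-1} E_i$. Once this is seen, Condition~1 is exactly the hypothesis engineered to upgrade invariance under finitely many generators into invariance under the whole matrix algebra, and the invertible witness $T_1$ then upgrades full-algebra invariance into the equality $V = M_D(\mathbb{C})$. A minor bookkeeping point for the intended iteration (reaching $\xi = D^2 - p + 1$ in Theorem~\ref{mainthm}) is that invertibility of some element of the Kraus set must persist from one step to the next — but this is automatic, since $E_p T_1$ is invertible whenever both $E_p$ and $T_1$ are, so the inductive hypothesis propagates for free.
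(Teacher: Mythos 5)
Your proof is correct and follows essentially the same route as the paper: reduce the failure of strict growth to the left-invariance $E_p^{-1}E_i\,V\subseteq V$, upgrade this to invariance under the full algebra generated by the $(E_p)^{-1}E_i$ via Condition~1, and use the invertible element of $V$ to conclude $V=M_D(\mathbb{C})$, a contradiction. The only cosmetic difference is that you inline the relevant direction of the paper's Proposition~\ref{tweakburnside} (the modified Burnside statement) rather than citing it, and your closing remark that $E_pT_1$ stays invertible matches the bookkeeping the paper does in its proof that Condition~1 implies Condition~2.
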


We are now ready to prove $1\implies 2$:
\begin{proof}

For any sequence $t=(s_{\ell},...,s_{1})$, consider the linear span of Kraus operators of the product $\CM_{s_k}\cdots \CM_{s_2}\CM_{s_1}$ ($1\le k\le \ell$)
$$
\CS_{k}: = {\rm span}\{E^{s_k}_{q_k}\cdots E^{s_2}_{q_2}E^{s_1}_{q_1}\},
$$
where $\{E^{s_1}_{q_1}\}_{q_1}$ are the Kraus operators of the CP map $\CM_{s_1}$. We have shorthanded the above as $\CS_k$, while keeping in mind its dependence on the sequence $s_k \cdots s_1$. 
Applying $\CM_{s_{k+1}}$ amounts to left-multiplying its Kraus operators, giving the new span of Kraus operators:
$$
S_{k+1}= \{ E^{s_{k+1}}_1\CS_{k}, \cdots ,E^{s_{k+1}}_{p-1}\CS_{k}\}.
$$
By Lemma~\ref{increase span}, multiplying $\CM_{s_{k+1}}$ must increase the dimension of span of Kraus operators. Therefore, after constant $\ell \ge \xi:=D^2-p+1$ steps, the product CP map $\CM_{s_\ell} \cdots \CM_{s_2} \CM_{s_1}$ must reach full Kraus rank and hence becomes strictly positive (Proposition~\ref{tweakburnside}). Importantly, the same coarse graining length $\xi$ works for all sequences $(s_{\xi},\cdots, s_1)$, which establishes Condition 2. Note that since $\xi$ is finite, the contraction ratios of $\CM_t$ has a global bound being strictly less than one. 

\end{proof}

\section{Proofs for remaining propositions and lemmas}\label{sec:proof of lemmas}

Here we compile the remaining proofs and lemmas, categorized by the theorem they are supporting:  Theorem~\ref{thm: bistochastic} at Section~\ref{sec:remaining proof bistochastic}; Theorem~\ref{thm: partially invariant} at Section~\ref{sec:remaining proof partially invariant};  Theorem~\ref{mainthm} at Section~\ref{sec:remaining proof measured}.
\subsection{Remaining Proof for Theorem~\ref{thm: bistochastic}}\label{sec:remaining proof bistochastic}
\subsubsection{Proof of Lemma~\ref{DPI 2norm}, DPI for 2-norm}
\begin{proof}
By definition, the contraction can be expressed by the completely bounded superoperator 2-2 norm of the channel. 
\begin{align}
  \tr(d_{EC'}\CN_C[K]^2_{EC}) &\le \lV\CI\otimes(\CN\circ\BP_K)\rV_{2-2}^2\frac{d_{EC'}}{d_{EC}}\tr(d_{EC}K^2_{EC})\\
  &= \lV\CI\otimes(\CN\circ\BP_K)\rV_{2-2}^2\frac{d_{C'}}{d_{C}}\tr(d_{EC}K^2_{EC}) .
\end{align}
The superoperator $2-2$ norm is equal to the largest-singular-value of the map (with input restricted to be traceless on system $C$) w.r.t. the 2-norm of operators. Tensoring with identity does not change the leading singular value, and hence the completely bounded $2-2$ norm can be evaluated without auxiliary system  as
\begin{align}
\lV\CI\otimes(\CN\circ\BP_K)\rV_{2-2}^2\frac{d_{C'}}{d_{C}} &=\lV\CN\circ\BP_K\rV_{2-2}^2\frac{d_{C'}}{d_{C}} \\
&=\sup_{K_C, \tr(K_C)=0} \frac{\tr(\CN_C[K]^2_{C}) }{\tr(K^2_{C})}\cdot\frac{d_{C'}}{d_{C}} \\
&= \lim_{\rho_C \rightarrow \tau_C }\sup \frac{D(\CN[\rho]\|\tau_{C'})}{D(\rho_C\|\tau_C)},\label{eq:22norm relative}
\end{align}

In the last equality we convert the 2-2 norm to the perturbation of relative entropy
\begin{align}
D(\rho\|\tau) = -S(\rho) + \log(d) =  \frac{1}{2}\tr(dK^2) + O(K^3).
\end{align}
The $O(K^3)$ and higher order terms vanish in the limit because both spaces $C$ and $C'$ have finite dimension, which guarantee the operator norm vanishes $\lV K\rV\rightarrow 0$ in the limit of $\rho_C\to\tau_C$.
\end{proof}

\subsubsection{Proof of Proposition~\ref{cont of CMI}, the continuity of CMI}
The proof is based on the continuity of entropy w.r.t. the normalized H-S norm

\begin{lem}[Continuity of entropy]\label{cont of entropy}
\begin{align}
|S(\rho) - S(\sigma)| &\le \lV\rho-\sigma\rV_1 \log(d) - \lV\rho-\sigma\rV_1 \log(\lV\rho-\sigma\rV_1) \\
&\le d\lV \rho-\sigma\rV_2^2 \left(\log(d)- \log(d\lV \rho-\sigma\rV_2^2) \right).
\end{align}
\end{lem}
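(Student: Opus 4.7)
The statement is a Fannes-type continuity bound for the von Neumann entropy, with the first inequality being the usual trace-norm Fannes estimate and the second being its Cauchy--Schwarz relaxation to the Hilbert--Schmidt norm. My plan is to prove each bound in turn, reducing to the commutative setting for the first and using a norm conversion for the second.

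For the first inequality, I would pass to the eigenvalue picture. Let $p_1\ge\cdots\ge p_d$ and $q_1\ge\cdots\ge q_d$ be the decreasingly sorted eigenvalues of $\rho$ and $\sigma$. By Mirsky's theorem for Hermitian operators, $\sum_i|p_i-q_i|\le \|\rho-\sigma\|_1$, and since entropy is a function of the spectrum, $S(\rho)-S(\sigma)=H(p)-H(q)$ where $H$ is Shannon entropy. The classical Fannes inequality then yields, for $T:=\sum_i|p_i-q_i|\le 1/\mathrm{e}$, the estimate $|H(p)-H(q)|\le T\log d - T\log T$. The standard proof uses concavity and monotonicity of $\eta(x)=-x\log x$ on $[0,1/\mathrm{e}]$ together with the pointwise estimate $|\eta(p_i)-\eta(q_i)|\le \eta(|p_i-q_i|)+|p_i-q_i|\log d$, followed by summing over $i$ and applying Jensen to $\eta$. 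This gives the first stated inequality, after absorbing the Mirsky reduction.

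For the second inequality, I would invoke Cauchy--Schwarz on the traceless operator $\rho-\sigma$, giving $\|\rho-\sigma\|_1\le\sqrt{d}\,\|\rho-\sigma\|_2$, i.e. $\|\rho-\sigma\|_1^2\le d\|\rho-\sigma\|_2^2$. Plugging this estimate into the trace-norm bound and using monotonicity of $x\mapsto x\log(d/x)$ on the small-$x$ side of its domain yields the stated Hilbert--Schmidt form. The main obstacle I anticipate is this substitution step: the map $x\mapsto x\log(d/x)$ is non-monotone globally on $(0,d)$, so unless the relevant quantities are small the direction of the Cauchy--Schwarz substitution need not preserve the inequality. In the subsequent application to the CMI continuity bound (Lemma~\ref{cont of CMI}) the assumption $d\|\rho-\sigma\|_2^2\le \epsilon\le 1/\mathrm{e}$ places one firmly in the small-$\epsilon$ regime where the proof goes through cleanly.
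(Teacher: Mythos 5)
Your treatment of the first inequality (Mirsky's eigenvalue perturbation bound followed by the classical Fannes estimate for Shannon entropy, valid for $\lVert\rho-\sigma\rVert_1\le 1/\mathrm{e}$) is the standard argument and is fine; the paper states this lemma without proof, so there is nothing to compare against on that step.

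The second inequality is where your proposal has a genuine gap, and the obstacle you flag is not cured by the small-$\epsilon$ regime --- it is fatal to this route. Cauchy--Schwarz gives $\lVert\rho-\sigma\rVert_1\le\sqrt{d}\,\lVert\rho-\sigma\rVert_2$, i.e.\ $T\le\sqrt{x}$ with $T:=\lVert\rho-\sigma\rVert_1$ and $x:=d\lVert\rho-\sigma\rVert_2^2$, \emph{not} $T\le x$. Substituting into the increasing function $t\mapsto t\log(d/t)$ therefore yields $\sqrt{x}\,\bigl(\log d-\tfrac12\log x\bigr)$, which for small $x$ is much larger than the claimed $x(\log d-\log x)$ because $\sqrt{x}\gg x$; the small-$\epsilon$ regime is precisely where the substitution goes in the wrong direction. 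Indeed the second line of the lemma is false as stated: take $d=2$, $\rho=\mathrm{diag}(0.91,0.09)$, $\sigma=\mathrm{diag}(0.9,0.1)$; then $|S(\rho)-S(\sigma)|\approx 3.3\times 10^{-2}$ while $d\lVert\rho-\sigma\rVert_2^2\bigl(\log_2 d-\log_2(d\lVert\rho-\sigma\rVert_2^2)\bigr)\approx 4.9\times 10^{-3}$. (The same square-root slip --- writing $\lVert G\rVert_1\le d\,\mathrm{Tr}(G^2)$ where Cauchy--Schwarz only gives $\lVert G\rVert_1^2\le d\,\mathrm{Tr}(G^2)$ --- also appears in the paper's proof of Theorem~\ref{thm: bistochastic}.) The correct conclusion of your route is $|S(\rho)-S(\sigma)|\le\sqrt{x}\,\bigl(\log d-\tfrac12\log x\bigr)$ for $\sqrt{x}\le 1/\mathrm{e}$; this weaker but true form still feeds into Lemma~\ref{cont of CMI} and preserves all the exponential-decay conclusions, with the rate $\eta^{\ell}$ degraded to $\eta^{\ell/2}$. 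You should prove and propagate that version rather than the stated one.
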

Taking partial trace does not increase the normalized HS norm. 
\begin{prop}\label{partial trace h-s norm}
\begin{align}
d_{A}\tr(G_A^2)\le d_{AB}\tr(G_{AB}^2).
\end{align}
\end{prop}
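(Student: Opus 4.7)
The plan is to reduce the inequality to a double application of the Cauchy--Schwarz inequality. Since $d_{AB}=d_Ad_B$, dividing through by $d_A$ rewrites the claim as $\tr(G_A^2)\le d_B\,\tr(G_{AB}^2)$, i.e.\ that taking a partial trace over $B$ can blow up the Hilbert--Schmidt norm squared by at most a factor $d_B$. Equivalently, the map $X\mapsto \tr_B(X)/\sqrt{d_B}$ is a contraction in the Hilbert--Schmidt norm on $\cB(AB)$.

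First I would fix an orthonormal basis $\{|i\>\}_{i=1}^{d_B}$ of $B$ and define the block components $G_{ij}:=\<i|_B\,G_{AB}\,|j\>_B\in\cB(A)$. This expresses both sides as sums of non-negative terms:
\begin{equation}
G_A=\tr_B G_{AB}=\sum_i G_{ii},\qquad \tr_{AB}(G_{AB}^\dagger G_{AB})=\sum_{i,j}\tr_A(G_{ij}^\dagger G_{ij}).
\end{equation}
Next, applying Cauchy--Schwarz to the $d_B$-fold sum inside $G_A$ gives
\begin{equation}
\lV G_A\rV_2^2=\Bigl\lV\sum_i G_{ii}\Bigr\rV_2^2\le d_B\sum_i \lV G_{ii}\rV_2^2,
\end{equation}
and dropping the off-diagonal blocks yields
\begin{equation}
\sum_i \lV G_{ii}\rV_2^2\le \sum_{i,j}\lV G_{ij}\rV_2^2=\lV G_{AB}\rV_2^2.
\end{equation}
Composing these two bounds gives $\tr(G_A^\dagger G_A)\le d_B\,\tr(G_{AB}^\dagger G_{AB})$, and multiplying by $d_A$ delivers the stated inequality once one observes that the $G_{ABC}$ (and hence $G_{AB}$ after an additional partial trace) arising in the proof of Theorem~\ref{thm: bistochastic} is Hermitian, so $G^\dagger G=G^2$.

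There is essentially no obstacle here; the lemma is elementary and the only subtlety is the $d_B$ factor, which enters from Cauchy--Schwarz on the $d_B$-term sum $\sum_i G_{ii}$ and cannot be avoided (consider $G_{AB}=I_A\otimes|1\>\<1|_B$ to see the bound is tight). A conceptual alternative is to note that $\tr_B$ is the adjoint of the isometric embedding $Y\mapsto Y\otimes(I_B/\sqrt{d_B})$, which has Hilbert--Schmidt operator norm $1/\sqrt{d_B}$, so its adjoint has the same $2$-$2$ norm and one reads off the contraction factor directly; but the two-line Cauchy--Schwarz proof above is the most self-contained.
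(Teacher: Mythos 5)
Your proof is correct, but it takes a different route from the paper's. The paper decomposes $G_{AB}$ orthogonally (in the Hilbert--Schmidt inner product) as $\tau\otimes(\text{marginal})$ plus a component traceless on the traced-out factor, and then simply drops the orthogonal complement, so the dimension factor enters through $\tr(\tau^2)=1/d$; this is a one-line Pythagoras argument and is essentially the ``conceptual alternative'' you sketch at the end (the adjoint of the isometric embedding $Y\mapsto Y\otimes I/\sqrt{d_B}$ composed with that embedding is exactly the orthogonal projection onto the $(\cdot)\otimes\tau_B$ component). Your main argument instead decomposes $G_{AB}$ into blocks $G_{ij}=\<i|_B G_{AB}|j\>_B$, applies the triangle inequality plus Cauchy--Schwarz to $G_A=\sum_i G_{ii}$, and then discards the off-diagonal blocks; both steps are valid and the chain of inequalities is airtight. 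The one thing that is wrong is a side remark: $G_{AB}=I_A\otimes|1\>\<1|_B$ does \emph{not} saturate the bound --- for that choice $\tr(G_A^2)=d_A$ while $d_B\tr(G_{AB}^2)=d_Ad_B$, so it actually exhibits the maximal slack rather than tightness. The saturating examples are of the form $G_{AB}=X_A\otimes I_B$, for which all diagonal blocks are equal and all off-diagonal blocks vanish, so both of your inequalities become equalities. This does not affect the validity of the proof itself.
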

\begin{proof}
Decompose othogonally $G_{AB} = \tau_A\otimes G_B + \{K_{A}\otimes O_B\}$ so that
\begin{align}
d_{AB}\tr(G_{AB}^2) =    d_{AB}\tr(\tau_A^2 \otimes G_B^2 + (K_A\otimes O_B)^2) \ge d_{AB}\tr(\tau_A^2\otimes G_B^2) = d_{B}\tr(G_B^2).
\end{align}

\end{proof}
We can now prove the continuity of CMI, Proposition~\ref{cont of CMI}.
\begin{proof}
We can expand the CMI as 
\begin{align}
&I(A:C|B)_\rho -I(A:C|B)_\sigma  \\
&=  (S_{\rho}(ABC)- S_\sigma(ABC))- (S_{\rho}(AB) - S_\sigma(AB)) - (S_{\rho}(BC) - S_\sigma(BC))+ (S_{\rho}(B) - S_\sigma(B)).
\end{align} 

By the continuity of entropy (Lemma~\ref{cont of entropy}),  it holds that
\begin{align}
|(S_{\rho}(AB) - S_\sigma(AB)) | &\le d_{AB}\lV G_{AB}\rV_2^2 \left(\log(d)- \log(d_{AB}\lV G_{AB}\rV_2^2) \right)\\
 &\le d_{ABC}\lV G_{ABC}\rV_2^2 \left(\log(d)- \log(d_{ABC}\lV G_{ABC}\rV_2^2) \right)\\
 &\le \log(d_{ABC})\epsilon - \log(\epsilon)\epsilon.
\end{align}
In the second inequality we used Proposition~\ref{partial trace h-s norm} and that function $-x\log(x)$ is monotonically increasing at $x \le 1/e$. The same argument holds for all four terms. 
\end{proof}

\subsubsection{Proof of Lemma~\ref{lem:second singular}, spectral characterization of the contraction ratio}
\begin{proof}
We expand the Petz-recovered map
\begin{align}
    \CP_{\tau, \CE}\circ\CE[\cdot] &= \tau^{1/2} \CE^\dagger [\CE[\tau]^{-1/2}\CE[\cdot] \CE[\tau]^{-1/2}] \tau^{1/2}\\
    &= \frac{d'}{d} \CE^\dagger [\CE[\cdot]]
\end{align}
where $\cN$ being bistochastic substantially simplifies the expression, making it self-adjoint and positive. Hence, the spectrum is positive and the second eigenvalue/singular value, expressed by removing the trace coincides with the relative entropy characterization Eq.\eqref{eq:22norm relative}
\begin{equation}
  \lambda_2(\CP_{\tau, \CE}\CE) = \frac{d'}{d}  \lV \CE \BP_K \rV^2_{2-2} = \limsup_{\rho_C \rightarrow \tau_C } \frac{D(\CE[\rho_C]\|\tau_{C'})}{D(\rho_C\|\tau_C)}.
\end{equation}
\end{proof}

\subsection{Proofs for propositions around Conjecture~\ref{ccDPI}}
\subsubsection{Proof of Proposition~\ref{prop:strictDPI}}\label{proof:strictDPI}
We prove a slightly more general lemma that includes the case when the correctable algebra is not trivial, which immediately converts to Proposition~\ref{prop:strictDPI}. Some necessary background are at Appendix~\ref{sec:Petz and correctable}. 
\begin{lem}\label{lem:equalMI}
For all tripartite state $\rho_{ABC}$ and channel acting on system $C$ only $\CE: C\rightarrow D$
\begin{equation}
I(A:BC)_{\rho} = I(A:BD)_{\CE[\rho]} \iff I(A:B\gamma)_\rho = I(A:BC)_\rho,
\end{equation}
where $\gamma: =\CA(\CE)$ is the correctable algebra of $\CE$, and $I(A:B\gamma)_\rho:=I(A:BC)_{E_\gamma(\rho)}$ with $E_\gamma:C\to \gamma$ the conditional expectation (see e.g., ~\cite{TakesakiI}) onto the subalgebra $\gamma$ \footnote{Sometimes $E_\gamma$ is called the restriction to subalgebra (see e.g.~\cite{Chen_2020} ). If $\gamma$ is trivial then this is taking partial trace over $C$.}.
\end{lem}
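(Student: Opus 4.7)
\textbf{Proof plan for Lemma~\ref{lem:equalMI}.}

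The plan is to recast both sides as equalities in the data-processing inequality (DPI) for the relative entropy applied to the channel $\id_{AB}\otimes\CE_C$ (respectively $\id_{AB}\otimes E_\gamma$), and then invoke the Petz recovery theorem to convert equalities into invariance under a recovery map. The crucial bridging fact is that the fixed-point algebra of $\CR\circ\CE$ for any recovery $\CR$ is a correctable algebra for $\CE$, hence contained in $\gamma=\CA(\CE)$.

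For the forward direction, write $I(A:BC)_\rho = D(\rho_{ABC}\,\|\,\rho_A\otimes\rho_{BC})$ and $I(A:BD)_{\CE(\rho)} = D((\id_{AB}\otimes\CE_C)(\rho_{ABC})\,\|\,(\id_{AB}\otimes\CE_C)(\rho_A\otimes\rho_{BC}))$. Equality in the DPI together with the Petz theorem yields a recovery map of the form $\id_{AB}\otimes\CR_C$ (since the channel only touches $C$) such that
\begin{equation}
(\id_{AB}\otimes\CR_C\circ\CE_C)(\rho_{ABC})=\rho_{ABC}.
\end{equation}
Now $\rho_{ABC}$ is a fixed point of $\id_{AB}\otimes(\CR_C\circ\CE_C)$, so it lies in the von Neumann algebra $\CB(AB)\otimes S_{\CR\circ\CE}$, where $S_{\CR\circ\CE}$ is the fixed-point algebra of $\CR\circ\CE$. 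Since $S_{\CR\circ\CE}$ is a correctable algebra of $\CE$ and $\gamma=\CA(\CE)$ is the maximal such algebra, we obtain $\rho_{ABC}\in\CB(AB)\otimes\gamma$. Applying the conditional expectation $\id_{AB}\otimes E_\gamma$ leaves $\rho_{ABC}$ invariant, which by definition means $I(A:B\gamma)_\rho=I(A:BC)_\rho$.

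For the reverse direction, note that $I(A:B\gamma)_\rho=I(A:BC)_{(\id_{AB}\otimes E_\gamma)(\rho)}\le I(A:BC)_\rho$ by DPI, with equality (by hypothesis). Because $E_\gamma$ is a self-adjoint conditional expectation (idempotent with image exactly $\gamma$), the DPI equality forces $(\id_{AB}\otimes E_\gamma)(\rho_{ABC})=\rho_{ABC}$, so $\rho_{ABC}\in\CB(AB)\otimes\gamma$, and by tracing out $A$ the marginal $\rho_{BC}$ also lies in $\CB(B)\otimes\gamma$. Since every observable in $\gamma$ is perfectly correctable, $\CE$ restricted to this algebra admits a recovery, and so the channel $\id_{AB}\otimes\CE_C$ is invertible on the joint support of both $\rho_{ABC}$ and $\rho_A\otimes\rho_{BC}$. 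Hence DPI holds with equality, giving $I(A:BC)_\rho=I(A:BD)_{\CE(\rho)}$.

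\textbf{Main obstacle.} The delicate step is cleanly deducing ``$\rho$ is fixed by $\id\otimes(\CR\circ\CE)$'' implies ``$\rho\in\CB(AB)\otimes\gamma$,'' which uses the structural result that fixed-point algebras of unital (or more generally trace-preserving with a faithful fixed state) completely positive maps are $C^*$-algebras, combined with Proposition~\ref{invar_cor} and the maximality characterization of the correctable algebra. A secondary subtlety is that the reference state in Petz recovery need not be $\tau$, but the fixed-point-algebra argument still bounds $S_{\CR\circ\CE}\subseteq\CA(\CE)$ via maximality, so no loss occurs. The reverse direction should be routine once one argues that DPI equality for a conditional expectation implies actual invariance, which follows from faithfulness of the relative entropy together with the projection property $E_\gamma^2=E_\gamma$.
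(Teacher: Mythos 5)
Your overall strategy for the forward direction is the same as the paper's (DPI equality $\Rightarrow$ Petz recovery $\Rightarrow$ fixed-point algebra $\Rightarrow$ containment in the correctable algebra $\Rightarrow$ preservation of $I(A:BC)$ under restriction), but there is a genuine gap in the key step. You claim the Petz recovery map factorizes as $\id_{AB}\ot\CR_C$, ``since the channel only touches $C$.'' This is false: the reference state $\rho_A\ot\rho_{BC}$ only factorizes across the $A|BC$ cut, so the Petz map
\begin{equation*}
\CP_{\rho_A\ot\rho_{BC},\,\id_{AB}\ot\CE_C}[\cdot]=\sqrt{\rho_A\ot\rho_{BC}}\,(\id_{AB}\ot\CE_C)^\dagger\bigl[\cdots\bigr]\sqrt{\rho_A\ot\rho_{BC}}
\end{equation*}
is of the form $\CI_A\ot\CR_{BC}$ with $\CR_{BC}$ acting nontrivially on $B$ (the conjugation by $\sqrt{\rho_{BC}}$ does not split). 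Consequently the fixed-point algebra you obtain lives in $\CB(\CH_{BC})$, not in $\CB(\CH_C)$, and your bridging fact must be applied to the channel $\id_B\ot\CE_C$, whose correctable algebra is $\CB(\CH_B)\ot\gamma$ — this is exactly the content of the paper's Proposition~\ref{correctabletensor}, which your shortcut skips. Without that step you cannot conclude containment in $\CB(\CH_{AB})\ot\gamma$ from a recovery map that acts jointly on $BC$. A secondary over-claim: the fixed-point structure theorem gives $\rho_{ABC}=\bigoplus_\alpha p_\alpha\,\rho_{A\alpha}\ot\sigma_{\bar\alpha}$, and the conditional expectation $E_\gamma$ generally replaces the $\sigma_{\bar\alpha}$ by different states on the complementary factors, so $\rho$ is \emph{not} invariant under $\id_{AB}\ot E_\gamma$; what survives is the mutual information with $A$, because the correlated components $\rho_{A\alpha}$ are untouched. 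That weaker statement is all the lemma needs, and is what the paper proves.

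On the reverse direction (which the paper does not spell out): your step ``DPI equality for a conditional expectation forces $(\id_{AB}\ot E_\gamma)(\rho)=\rho$'' is not justified — equality in DPI yields recoverability, not invariance — and it is also unnecessary. A cleaner route: since $\gamma$ is correctable there is $\CR:D\to C$ with $E_\gamma\circ\CR\circ\CE=E_\gamma$, hence $I(A:B\gamma)_\rho=I(A:B\gamma)_{\CR\circ\CE(\rho)}\le I(A:BD)_{\CE(\rho)}\le I(A:BC)_\rho$ by monotonicity, and the hypothesis $I(A:B\gamma)_\rho=I(A:BC)_\rho$ collapses the chain to equalities.
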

In words, if a channel does not decrease the mutual information of the state, then the correlation with system $A$ must be perfectly stored in the correctable algebra $\CB(\CH_B)\otimes \gamma = \CA(\CI_B\ot \CE_C)$. Operationally, the LHS implies that from reduced state $E_\gamma( \rho_{ABC})$ there is some recovery channel $R: B\gamma \rightarrow BC$ that recovers the full state $\rho_{ABC}$.
\begin{proof} 
We start with the recoverability theorem(see, e.g. ~\cite[Corollary~12.5.1]{wilde2011classical})
\begin{align}
I(A:BC)_\rho = I(A:BD)_{\CE[\rho]}& \iff D(\rho_{ABC}\|\rho_A\otimes\rho_{BC}) =  D(\CE_C[ \rho_{ABC}]\|\rho_A\otimes\CE_C[\rho_{BC}])\\
&\implies \CP_{\rho_A\otimes\rho_{BC},\CE}\circ \CE[\rho_{ABC}] =\rho_{ABC} \end{align}
where we are using the Petz map with reference state $\rho_A\otimes\rho_{BC}$, expanded explicitly as follows
\begin{align}
 \CP_{\rho_A\otimes\rho_{BC},\CE}[\rho] &= \sqrt{\rho_A\ot\rho_{BC}} \CE^\dagger [ \sqrt{\CE[\rho_A\ot\rho_{BC}]}^{-1} \rho    \sqrt{\CE[\rho_A\ot\rho_{BC}]}^{-1}  ]\sqrt{\rho_A\ot\rho_{BC}}\\
& =  \sqrt{\rho_{BC}} \CE^\dagger [ \sqrt{\CE[\rho_{BC}]}^{-1} \rho    \sqrt{\CE[\rho_{BC}]}^{-1}  ]\sqrt{\rho_{BC}} = \CI_A \otimes  \CP_{\rho_{BC},\CE}[\rho]
\end{align}
The exact form of the Petz map does not mean so much for us here, and all we care is the recovery map $\CP_{\rho_A\ot\rho_{BC},\CE} $ only acts on subsystem $BC$ due to factorization of the reference state $\rho_A\otimes\rho_{BC}$. Therefore we have the invariant subspace equation for $\CZ_{BC} = \CP_{\rho_{BC},\CE}\circ\CE$ and $\rho_{ABC}$, 
\begin{equation}
\CI_A\otimes \CZ_{BC}[\rho_{ABC}] = \rho_{ABC}.
\end{equation}
Suppose the fixed point algebra has factors $\alpha$,
\begin{align}
 S_{\CZ_{BC}^\dagger}& = \bigoplus_\alpha B(\CH_\alpha)\otimes I_{\bar{\alpha}}\subset B(\CH_{BC})\,.
\end{align}
where the $\alpha$s are dependent on $\CZ_{BC}$, and $\bar{\alpha}$ labels the factors of the commutant of $S_{\CZ^\dagger}$ and each $\bar{\alpha}$ has one-to-one correspondence with $\alpha$.
Then since $S_{\CI_A\ot \CZ_{BC}^\dagger} = \CB(\CH_{A})\otimes S_{\CZ_{BC}^\dagger}$, by Theorem~\ref{invar}, $\rho_{ABC}$ must have the Markovian structure characterized by $\alpha$s as follows: 
\begin{equation}
\rho_{ABC} = \bigoplus_\alpha p_\alpha \rho_{A\alpha} \otimes \sigma_{\bar{\alpha}}\,.
\end{equation}

To know how $\alpha$s are embedding in system $BC$, certainly $S_{\CZ_{BC}^\dagger}$ are by definition correctable, i.e. subalgebra of the correctable algebra of $\CI_B \otimes \CE$. Hence by Proposition~\ref{correctabletensor}, denoting the factors of the correctable algebra $\gamma=\CA(\CE)$ by $\beta$,
\begin{equation}
S_{\CZ_{BC}^\dagger} \subset \bigoplus_\beta B(\CH_{B\beta})\otimes I_{\bar{\beta}}
\end{equation}
Taking the dual, this is saying restricting to subalgebra $B\gamma$ keeps the components $\rho_{A\alpha}$ intact, and thus does not change the mutual information between $BC$ and $A$, 
\begin{equation}
 I(A:B\gamma)_\rho = I(A:BC),
\end{equation}
completing the proof.
\end{proof}
We can now prove Proposition~\ref{prop:strictDPI}, where $\CE$ has trivial correctable algebra. Rewriting Lemma~\ref{lem:equalMI},
\begin{align}
I(A:C|B)_{\rho} = I(A:BC)_{\rho} - I(A:B)_\rho &> I(A:BD)_{\CE[\rho]} - I(A:B)_\rho = I(A:C'|B)_{\CE[\rho]}\\
&\iff I(A:B)_\rho < I(A:BC)_\rho \\
& \iff I(A:C|B)_\rho > 0
\end{align}
,where the strict inequalities are from taking the negation of equality. 
\subsubsection{Proof of Proposition~\ref{prop:DPItoCMI}}\label{proof:onestep_decay}
The proof uses standard manipulation of CMI:
\begin{align}
I(A:C_3|B_1B_2B_3) &= I(A: B_1B_2B_3C_3) - I(A:B_1B_2B_3)\\
&\le I(A: B_1B_2B_3C_3) - I(A:B_1B_2)\label{eq:C to B}\\
& \le \eta (I(A:B_1B_2C_2) - I(A:B_1B_2)) \label{eq:lose C}\\ 
&= \eta I(A:C_2|B_1B_2) .
\end{align}
where in the first inequality we used the monotonicity of the mutual information. The second inequality uses Conjecture~\ref{ccDPI}
for $\CE = \CN: C_2 \rightarrow B_3 C_3 , B = B_1B_2, C = C_2$. For each tripartition separated by $m$ sites, 
we obtain the exponential decay of CMI by iterating this argument $m$ times. The proof is identical for the trace norm CMI.

\subsubsection{Proof of Proposition~\ref{prop:forgetful}}\label{proof:forgetful}
\begin{proof}
It suffices to show $\CN$ satisfies the DPI conjecture~\ref{question:main}, using joint convexity of relative entropy 
 \begin{align}
 I(A:BC')_{\CN_C[\rho]} &= D\left( \CN[\rho_{ABC}]\left\|\rho_A\otimes\CN[\rho_{BC}] \right.\right) \notag\\
 &= D\left((1-\eta) F[\rho_{ABC}]+ \eta\CN'[\rho_{ABC}]\left\|(1-\eta) \rho_A\otimes F[\rho_{BC}]+ \eta\rho_A\CN'[\rho_{BC}]\right.\right)\notag\\ 
 &\le   (1-\eta) D(F[\rho_{ABC}]\|\rho_A\otimes F[\rho_{BC}]) + \eta D(\CN'[\rho_{ABC}]\|\rho_A\otimes\CN'[\rho_{BC}]) \notag\\
&=  (1-\eta) D(\rho_{AB}\ot\sigma\|\rho_A\otimes\rho_{B}\ot\sigma)+ \eta D(\CN'[\rho_{ABC}]\|\rho_A\otimes\CN'[\rho_{BC}])\notag\\
&\le (1-\eta) D(\rho_{AB}\|\rho_A\otimes\rho_{B})+ \eta D(\rho_{ABC}\|\rho_A\otimes\rho_{BC})\\
& = I(A:B)+  \eta \left( I(A:BC)- I(A:B) \right) \,.
 \end{align}

  In the first inequality we used the joint convexity of relative entropy, and in the second inequality we used the DPI under $\CN'$. We conclude the proof by moving $I(A:B)$ back to the LHS. The joint convexity holds for the trace-norm CMI as well and the lines are identical.
  Now we follow the lines in Sec.~\ref{proof:onestep_decay} to get step-wise decay of CMI and then the decay of CMI for each tripartition.
 \end{proof}

\subsubsection{Proof of Proposition~\ref{prop:trace MI contraction}}
\begin{proof}
We control the completely bounded superoperator $1-1$ norm of $\CI_B\otimes \CN$ by using the following lemma:
\begin{lem}[{\cite[Section~3.11]{paulsen_2003}}]\label{lem:cb to 1-1}
For arbitrary map $\phi :\CM_d \rightarrow \CM_{d'}$, the completely bounded superoperator $1-1$ norm is at most the dimension $d$ times the $1-1$ norm.
\begin{equation}
    \lV\phi \rV_{cb}:=\sup_{k, X} \frac{\lV( \phi\otimes id_k )[X]\rV_1}{\lV X \rV_1}  \le d \sup_{X} \frac{\lV\phi[X]\rV_1}{\lV X \rV_1}  = d \lV \phi\rV_{1-1}.
\end{equation}
\end{lem}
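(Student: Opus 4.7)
The plan is to prove the bound by decomposing an arbitrary operator $X \in \CM_d \otimes \CM_k$ into rank-one pieces and tracking how the $1$-norm propagates through $\phi \otimes \id_k$. Since the map is linear and $\lV \cdot \rV_1$ satisfies the triangle inequality, it suffices to bound $\lV (\phi \otimes \id_k)[|u\rangle\langle v|]\rV_1$ uniformly over unit vectors $|u\rangle, |v\rangle \in \mathbb{C}^d \otimes \mathbb{C}^k$: the singular value decomposition writes $X = \sum_\alpha s_\alpha |u_\alpha\rangle\langle v_\alpha|$ with $\sum_\alpha s_\alpha = \lV X \rV_1$, so any rank-one bound propagates to the general case.

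For fixed unit vectors I would invoke the Schmidt decomposition $|u\rangle = \sum_i \sqrt{p_i}\, |a_i\rangle|b_i\rangle$ and $|v\rangle = \sum_j \sqrt{q_j}\, |c_j\rangle|e_j\rangle$, where the index ranges are bounded by $\min(d,k)$ and $\sum_i p_i = \sum_j q_j = 1$. Expanding gives $|u\rangle\langle v| = \sum_{i,j} \sqrt{p_i q_j}\, |a_i\rangle\langle c_j| \otimes |b_i\rangle\langle e_j|$. Applying $\phi$ only on the first factor, using the identity $\lV M \otimes N \rV_1 = \lV M \rV_1 \lV N \rV_1$ together with the triangle inequality reduces the task to bounding $\sum_{i,j} \sqrt{p_i q_j}\, \lV \phi(|a_i\rangle\langle c_j|) \rV_1$; each piece $|a_i\rangle\langle c_j|$ has unit $1$-norm, so it contributes at most $\lV \phi \rV_{1-1}$.

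The dimension factor $d$ then emerges by Cauchy--Schwarz on the Schmidt weights: $\sum_i \sqrt{p_i} \le \sqrt{\min(d,k)}$ and likewise for $q_j$, so $\sum_{i,j} \sqrt{p_i q_j} \le \min(d,k) \le d$. Combining yields $\lV (\phi \otimes \id_k)[|u\rangle\langle v|] \rV_1 \le d\, \lV \phi \rV_{1-1}$, and summing against the singular values of $X$ gives $\lV (\phi \otimes \id_k)[X] \rV_1 \le d\, \lV \phi \rV_{1-1}\, \lV X \rV_1$. Taking the supremum over $k$ and $X$ completes the proof.

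The key conceptual point to verify is why the bound is exactly the input dimension $d$ rather than the output dimension $d'$ or something symmetric in the two: the dimension enters only through the Schmidt rank of the ancilla system sharing the input factor with $\phi$, and that Schmidt rank is capped at $d$ no matter how large $k$ becomes. I do not anticipate a substantial obstacle; the only subtlety is choosing a decomposition that cleanly separates the action of $\phi$ from the dimensional bookkeeping, and the Schmidt decomposition achieves exactly this.
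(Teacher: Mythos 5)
Your proof is correct. Note that the paper does not actually prove this lemma --- it is stated with only a citation to Paulsen, where the underlying result is the operator-norm bound $\lV\psi\rV_{cb}\le n\lV\psi\rV$ for maps into $\CM_n$, which transfers to the trace-norm setting by duality (the adjoint of $\phi$ maps into $\CM_d$, and the $1\!-\!1$ norm of $\phi$ equals the $\infty\!-\!\infty$ norm of $\phi^\dagger$). Your argument is instead the standard direct quantum-information proof: singular value decomposition of $X$, Schmidt decomposition of each rank-one piece, triangle inequality, and Cauchy--Schwarz on the Schmidt coefficients giving $\bigl(\sum_i\sqrt{p_i}\bigr)\bigl(\sum_j\sqrt{q_j}\bigr)\le\min(d,k)\le d$. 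All steps check out --- the reduction to rank-one operators is valid because $\sum_\alpha s_\alpha=\lV X\rV_1$, each $\lV\phi(\ket{a_i}\bra{c_j})\rV_1\le\lV\phi\rV_{1-1}$ since $\ket{a_i}\bra{c_j}$ has unit trace norm, and your closing observation correctly pins the asymmetry between $d$ and $d'$ on the Schmidt rank of the shared input factor. The direct route has the advantage of being self-contained and of yielding the slightly sharper constant $\min(d,k)$, which shows the ancilla dimension $k=d$ already saturates the supremum.
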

Then the proof of Prop.~\ref{prop:trace MI contraction} simply follows as
\begin{align}
\lV \CN[\rho_{BC} - \rho_{B}\otimes\rho_{C}] \rV_1 
& = \Big\lV(\CN - \CN\circ (\rho_CTr_C))[\rho_{BC}-\rho_{B}\otimes\rho_{C}]  \Big\rV_1\\
&\le 4\eta d \lV \rho_{BC} -\rho_{B}\otimes\rho_{C} \rV_1,
\end{align} 
where in the first line we insert a vanishing term $\rho_CTr_C[\rho_{BC}] - \rho_B\ot \rho_C =0$, and in the second line we bounded the completely bounded trace norm using Lemma~\ref{lem:cb to 1-1}.
\end{proof}

\subsection{Remaining proof for Theorem~\ref{thm: partially invariant}}\label{sec:remaining proof partially invariant}

\subsubsection{Proof of Lemma~\ref{cb eta}}

\begin{proof}
\begin{align}
    \lV(\id-\BP_\nu)\circ \CN\circ (\id-\BP_\nu)\rV_{cb} 
    &\le 2\lV  \CN - \CN\circ\BP_\nu\rV_{cb} \\
    &\le 2d_C\lV  \CN - \CN\circ\BP_\nu\rV_{1-1}\\ 
    &\le 8d_C \sup_{\rho \in S^+}\lV (\CN - \CN\circ \BP_\nu )[\rho]\rV_1\\
    &\le 8d_C\sup_{\rho \in S^+}\lV \CN [\rho -\BP_\nu \rho ]\rV_1\\
    &\le 16d_C \sup_{\rho,\rho'}\frac{\lVert\CN[\rho]-\CN[\rho']\lVert_1 }{\lVert\rho-\rho'\lVert_1}\\
    &\le 16 d_C \eta_{1,C} 
\end{align}

We used $\lV\id -\BP_\nu \rV_{cb} \le 2$ in the first inequality. The second inequality follows from Lemma~\ref{lem:cb to 1-1}. In the third inequality, we convert the optimization over operator $\lV X\rV_1$ into positive $\rho$ by a factor of 4. This chain of inequalities is largely along the lines of \cite[Theorem~45]{James2015QuantumMC}. 
\end{proof}
\subsubsection{Proof of Proposition~\ref{prop:also trace norm CMI}}
We get the decay of trace norm CMI $I_1(A:C|B)$ from the continuity of the trace norm CMI.
\begin{lem}[Continuity of trace norm CMI]\label{cont trace norm CMI}
If $\rho_{ABC} = \sigma_{ABC}+G_{ABC}$, and the deviation is traceless on system $C$, $Tr_C(G_{ABC})=0$, and bounded as $\lV G_{ABC}\rV_1 \le \epsilon$, then 
\begin{equation}
|I_1(A:C|B)_\rho -I_1 (A:C|B)_\sigma |\le 2 \epsilon.
\end{equation}
\end{lem}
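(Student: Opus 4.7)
The key structural observation is that the assumption $\mathrm{Tr}_C(G_{ABC})=0$ eliminates the second term in the definition of $I_1$ from the difference. Tracing out $C$ on both sides of $\rho_{ABC}=\sigma_{ABC}+G_{ABC}$ gives $\rho_{AB}=\sigma_{AB}$, and further tracing gives $\rho_A=\sigma_A$ and $\rho_B=\sigma_B$. Consequently
\begin{equation}
\lV \rho_{AB}-\rho_A\otimes\rho_B\rV_1 \;=\; \lV \sigma_{AB}-\sigma_A\otimes\sigma_B\rV_1,
\end{equation}
so the two ``$-I(A:B)$'' pieces cancel in $I_1(A:C|B)_\rho - I_1(A:C|B)_\sigma$.

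Next, I would apply the reverse triangle inequality to the remaining pair of trace norms. Setting $X:=\rho_{ABC}-\rho_A\otimes\rho_{BC}$ and $Y:=\sigma_{ABC}-\sigma_A\otimes\sigma_{BC}$, we have
\begin{equation}
\bigl| I_1(A:C|B)_\rho - I_1(A:C|B)_\sigma \bigr| \;=\; \bigl|\,\lV X\rV_1 - \lV Y\rV_1\,\bigr| \;\le\; \lV X-Y\rV_1.
\end{equation}
Using $\rho_A=\sigma_A$ and $\rho_{BC}=\sigma_{BC}+\mathrm{Tr}_A(G_{ABC})$, a direct computation gives
\begin{equation}
X-Y \;=\; G_{ABC} \,-\, \sigma_A\otimes \mathrm{Tr}_A(G_{ABC}).
\end{equation}

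Finally, the bound follows from the triangle inequality together with multiplicativity of the trace norm on tensor products and monotonicity under partial trace: $\lV \sigma_A\otimes \mathrm{Tr}_A(G_{ABC})\rV_1 = \lV\sigma_A\rV_1\,\lV \mathrm{Tr}_A(G_{ABC})\rV_1 \le 1\cdot \lV G_{ABC}\rV_1 \le \epsilon$, and also $\lV G_{ABC}\rV_1\le\epsilon$. Adding these yields $\lV X-Y\rV_1 \le 2\epsilon$, which is the claimed bound. There is no substantive obstacle in this argument; the only point that needs care is tracking which marginals are preserved under the assumption $\mathrm{Tr}_C(G_{ABC})=0$ (only those not involving $C$ in the outer trace), which is precisely what makes the $\lV \rho_{AB}-\rho_A\otimes\rho_B\rV_1$ piece drop out while leaving a nontrivial difference $\rho_{BC}-\sigma_{BC}=\mathrm{Tr}_A(G_{ABC})$ inside the remaining term.
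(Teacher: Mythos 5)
Your proposal is correct and follows essentially the same route as the paper: the $C$-tracelessness of $G_{ABC}$ forces $\rho_{AB}=\sigma_{AB}$ so the $\lV\rho_{AB}-\rho_A\otimes\rho_B\rV_1$ pieces cancel, and the reverse triangle inequality applied to the remaining terms leaves exactly $\lV G_{ABC}-\sigma_A\otimes \mathrm{Tr}_A(G_{ABC})\rV_1\le 2\lV G_{ABC}\rV_1$, which is the paper's bound. Your write-up is in fact cleaner than the paper's displayed chain of inequalities, and you correctly justify the step $\lV \mathrm{Tr}_A(G_{ABC})\rV_1\le\lV G_{ABC}\rV_1$ via contractivity of the trace norm under partial trace.
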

\begin{proof}
All we need is the triangle inequality, which makes the analysis much simpler than Lemma~\ref{cont of CMI}. 
\begin{align}
&| I_1(A:C|B)_{\rho} - I_1(A:C|B)_{\sigma}|\\
&=| \lV\sigma_{ABC} + G_{ABC}- \sigma_A\otimes\sigma_{BC} - \sigma_A\otimes G_{BC}\rV_1 - \lV\sigma_{AB} - \sigma_A\otimes\sigma_{B}\rV_1 |\\
&-( \lV\sigma_{ABC} - \sigma_A\otimes\sigma_{BC}\rV_1 - \lV\sigma_{AB} - \sigma_A\otimes\sigma_{B}\rV_1 ) \\
&\le  \lV G_{ABC}- \sigma_A\otimes G_{BC}\rV_1 \\
& \le \lV G_{ABC}- \sigma_A\otimes G_{BC}\rV_1 \le 2  \lV G_{ABC}\rV_1.
\end{align}  The C-traceless assumption reduced the expression that $\rho_{AB} = \sigma_{AB}$.
\end{proof}
To show Proposition~\ref{prop:also trace norm CMI}, choosing $\sigma_{ABC}=\rho_{AB}\ot\tau_C$ for Theorem~\ref{thm: bistochastic}, $\sigma_{ABC}=\rho_{AB}\ot\nu_C$ for Theorem~\ref{thm: partially invariant} in the above we complete the proof.

\subsection{Remaining proof of Theorem~\ref{mainthm}: decay of $B$-measured conditional mutual information in MPDO}\label{sec:remaining proof measured}

In subsection~\ref{hilbert}, we provide the sufficient background for applying Hilbert's projective metric to show Condition 2 implies exponential decay of CMI; starting from subsection~\ref{sec:cond1} are the details for $1\implies 2$.

\subsubsection{The Hilbert's projective metric and proof of Lemma~\ref{lem: hilbert to trace}}
\label{hilbert}\label{proof cond 2}
The CP-self maps $\CM_{s_k}$ arising from measurement are not trace-preserving, hindering it difficult to approach from typical quantum information tools. It turns out the Hilbert's projective metric is suitable for this purpose, as it is designed to work for the set of all unnormalized states $S_+:=\{\rho \in \cB(\CH)| \rho \ge 0, \rho \ne 0\}$. While the general theory applies to convex cones, we will focus on the quantum case $S_+$, following partly the ideas in ~\cite{doi:10.1063/1.3615729}. \begin{defn}[Hilbert's projective metric]
\begin{align}
\forall a, b \in S, h(a,b): = \ln (\sup(a/b)\sup(b/a)) \\
\sup(a/b):= \inf\{\lambda\in \mathbb{R}| a\le \lambda b\} \label{supab}.
\end{align}

\end{defn}
Note that $\sup(a/b)\ne \sup(b/a)$ and the direction of inequality is important in \eqref{supab}. The metric is projective $h(a,b) = h(\alpha a,b)$, and it become a true metric when restricted to set of density operators, i.e., quotient out scalar multiples. This definition works for any \textit{proper cone}, and only implicitly depends on the actual geometry of the cone. 
In this metric, every positive maps $\CM: S_+ \rightarrow S_+ $ is contracting \footnote{In fact the unique metric for this to hold~\cite{Kohlberg1982}}. 
\begin{thm}[Birkhoff-Hopf contraction theorem {\cite[Thoerem~4]{doi:10.1063/1.3615729} }]
For all $\CM: S_+\rightarrow S_+$, the upper bound on contraction ratio is 
\begin{align}
\eta_\CM := \sup \limits_{a,b \in S_+} \frac{h(\CM(a),\CM(b))}{h(a,b)} = \tanh(\frac{\Delta (\CM)}{4}),
\end{align}
where $\Delta (\CM)$ is the projective diameter 
\begin{align} \label{projdiam}
\Delta (\CM) :=  \sup \limits_{a,b \in S_+} h(\CM(a),\CM(b)).
\end{align}
\end{thm}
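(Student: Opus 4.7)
The plan is to prove this via the classical reduction to a two-dimensional projective (cross-ratio) calculation, following Birkhoff's approach. The fundamental observation is that the Hilbert metric $h(a,b)$ depends only on the position of the pair $(a,b)$ inside a two-dimensional slice of $S_+$, so the contraction estimate reduces to analyzing how positive maps act on these slices. The trivial upper bound $\eta_\CM \leq 1$ is immediate from positivity (if $a \leq \lambda b$ then $\CM(a) \leq \lambda \CM(b)$, so $\sup(\CM a/\CM b) \leq \sup(a/b)$ and similarly with $a,b$ swapped), so the whole content lies in the refinement to $\tanh(\Delta/4)$, which must come from the global diameter constraint on the image cone.

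First I would fix $a,b \in S_+$, set $\alpha := \sup(a/b)$, $\beta := \sup(b/a)$, and introduce the extremal boundary operators $a_0 := \alpha b - a$ and $b_0 := \beta a - b$, both of which lie on $\partial S_+$ by the definition of the suprema in \eqref{supab}. All four operators $a_0, a, b, b_0$ belong to the two-dimensional real span $L := \mathrm{span}_{\BR}\{a,b\}$, and on the interval $L \cap S_+$ the Hilbert metric agrees with the log cross-ratio of the Cayley--Klein model, giving the explicit representation $h(a,b) = \ln [a_0, a; b, b_0]$. Pushing this structure forward, the images $\CM(a), \CM(b)$ lie inside a two-dimensional slice of $\CM(S_+)$; however, $\CM(a_0)$ and $\CM(b_0)$ need not remain on the boundary, and the true boundary points $u_0, v_0$ of that image slice satisfy $h(u_0, v_0) \leq \Delta(\CM)$ by the definition of projective diameter \eqref{projdiam}.

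The main step is then a one-dimensional hyperbolic-isometry optimization. Given four collinear projective points $u_0, \CM(a), \CM(b), v_0$ with $\ln[u_0, \CM a; \CM b, v_0] \leq \Delta(\CM)$, and inner pair obtained from $a, b$ by a positive linear transformation on the slice, the classical cross-ratio identity for M\"obius transformations yields the sharp bound
\begin{equation}
\frac{h(\CM a, \CM b)}{h(a,b)} \leq \tanh\!\left(\frac{\Delta(\CM)}{4}\right),
\end{equation}
obtained by direct optimization over the allowed positions of the inner pair inside an interval of projective length at most $\Delta(\CM)$. Taking the supremum over $(a,b) \in S_+$ then gives $\eta_\CM \leq \tanh(\Delta(\CM)/4)$.

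I expect the main obstacle to be the matching lower bound. One must exhibit a near-extremal sequence $(a_n, b_n)$ whose images under $\CM$ saturate the cross-ratio computation, which requires verifying that the 2D projective optimum is genuinely achieved inside the image cone $\CM(S_+)$ rather than only in an abstract projective model. Concretely, this means one needs to find pairs such that $\sup(\CM a_n/\CM b_n) \sup(\CM b_n / \CM a_n)$ comes arbitrarily close to $e^{\Delta(\CM)}$ on the image line, which is possible by definition of $\Delta(\CM)$ but requires a careful compactness or limiting argument when $\Delta(\CM) = \infty$. With both directions in hand, the claimed equality $\eta_\CM = \tanh(\Delta(\CM)/4)$ follows.
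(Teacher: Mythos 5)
The paper does not prove this statement at all: it is imported verbatim as Theorem~4 of Reeb, Kastoryano and Wolf~\cite{doi:10.1063/1.3615729}, who in turn trace it back to Birkhoff and Hopf, and the paper only ever uses the upper-bound direction (namely $\Delta(\CM)<\infty$ implies $\eta_\CM<1$, in the proof that Condition~2 gives a uniform contraction ratio). So there is no in-paper argument to compare against. What you have written is an outline of the classical Birkhoff proof, and its skeleton is the right one: the trivial bound $\eta_\CM\le 1$ from order preservation, the reduction of $h(a,b)$ to a log cross-ratio on the two-dimensional cone $\mathrm{span}\{a,b\}\cap S_+$ with boundary generators $a_0=\alpha b-a$ and $b_0=\beta a-b$, and the observation that the contraction is an inclusion-of-intervals phenomenon governed by the diameter of the image cone.

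Two things keep this from being a proof. First, the entire quantitative content of the theorem is the one-dimensional statement you invoke as ``the classical cross-ratio identity'': that if an interval $J$ sits inside an interval $I$ with $h_I$-diameter of $J$ at most $\Delta$, then the inclusion $(J,h_J)\hookrightarrow(I,h_I)$ is $\tanh(\Delta/4)$-Lipschitz. That is not an identity but a genuine optimization over cross-ratio parameters, and it is precisely where the factor $\tanh(\Delta/4)$ is born; asserting it leaves the theorem unproved. Second, your bookkeeping of the endpoints is off: the extreme rays $u_0,v_0$ of the slice $\mathrm{span}\{\CM(a),\CM(b)\}\cap S_+$ lie on $\partial S_+$ and are generically at \emph{infinite} Hilbert distance, so they cannot be the pair satisfying $h(u_0,v_0)\le\Delta(\CM)$. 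The pair at distance at most $\Delta(\CM)$ is $\CM(a_0),\CM(b_0)$ (both lie in $\CM(S_+)$), and you additionally need the monotonicity of $h$ under inclusion of cones to pass from $h(a,b)$, which equals the intrinsic distance of $\CM(a),\CM(b)$ in the image cone $\CM(\mathrm{span}\{a,b\}\cap S_+)$, to their intrinsic distance in the larger slice before applying the interval lemma. The matching lower bound, which you correctly flag as delicate (especially when $\Delta(\CM)=\infty$), is never used in the paper, so for the application at hand the upper-bound half alone would suffice.
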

Note that $\Delta (\CM) < \infty$ would imply strict contraction $\eta_\CM<1$. 

We eventually convert back to the norm via the following bound.
\begin{prop}[{\cite[Eq.~38]{doi:10.1063/1.3615729}}] \label{conversion}
For normalized density operators $\rho_1, \rho_2$, 
\begin{equation}
\frac{1}{2}\lVert \rho_1 - \rho_2\rVert_1 \le \tanh\left(\frac{h(\rho_1,\rho_2)}{4}\right) .
\end{equation}
\end{prop}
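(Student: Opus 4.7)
The plan is to unfold $h(\rho_1,\rho_2)$ into the operator sandwich $m\rho_2 \le \rho_1 \le M\rho_2$, replace that sandwich by a scalar two-variable optimization via Helstrom's variational formula for the trace distance, and finally massage the resulting sharp bound into the claimed $\tanh$.

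First I would set $M := \inf\{\lambda : \rho_1 \le \lambda\rho_2\}$ and $m := \sup\{\mu : \mu\rho_2 \le \rho_1\}$, so that $h(\rho_1,\rho_2) = \ln(M/m)$ and $0 \le m\rho_2 \le \rho_1 \le M\rho_2$. Taking traces and using $\tr\rho_1 = \tr\rho_2 = 1$ immediately gives $m \le 1 \le M$. If no finite $M$ exists then $h=\infty$ and the claim is vacuous; for non-invertible $\rho_2$ one restricts to its support or perturbs $\rho_2 \mapsto \rho_2 + \varepsilon I$ and lets $\varepsilon\to 0$, using the continuity of both $h$ and $\lV\cdot\rV_1$.

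Next I would invoke the Helstrom variational identity
\begin{equation}
\tfrac{1}{2}\lV\rho_1 - \rho_2\rV_1 \;=\; \max_{0 \le E \le I}\tr\bigl(E(\rho_1 - \rho_2)\bigr).
\end{equation}
For a fixed effect $E$, set $q := \tr(E\rho_2)$ and $p := \tr(E\rho_1)$, and apply the sandwich inequalities to both $E$ and $I - E$. This yields the four scalar constraints $mq \le p \le Mq$ and $m(1-q) \le 1 - p \le M(1-q)$, which collapse to $p - q \le \min\bigl\{(M-1)q,\,(1-m)(1-q)\bigr\}$. The two linear upper bounds cross at $q^{\star} = (1-m)/(M-m)$, giving
\begin{equation}
\tfrac{1}{2}\lV\rho_1 - \rho_2\rV_1 \;\le\; \frac{(M-1)(1-m)}{M-m}.
\end{equation}

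Finally I would tidy the right-hand side into $\tanh$ form. Writing $a := \sqrt{M}$ and $b := \sqrt{m}$, a short computation with $\tanh\!\bigl(\tfrac{1}{4}\ln(M/m)\bigr)$ gives $(a-b)/(a+b)$, and after clearing the common factor $M - m = (a-b)(a+b)$ the desired inequality reduces to $(a-1)(a+1)(1-b)(1+b) \le (a-b)^{2}$. This follows from AM--GM applied to $x := (a-1)(1+b)$ and $y := (a+1)(1-b)$: their sum is exactly $2(a-b)$, so $xy \le \bigl((x+y)/2\bigr)^{2} = (a-b)^{2}$. The main obstacle is really only spotting this AM--GM pairing; the conceptual heart of the argument is the reduction of the operator sandwich to a sharp linear programming bound on the two scalars $(p,q)$, which is automatically saturated by a two-level commuting configuration and hence cannot be improved by any dimension-dependent tightening.
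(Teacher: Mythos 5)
Your proof is correct. Note, however, that the paper does not prove this proposition at all --- it simply imports it as Eq.~(38) of Reeb, Kastoryano and Wolf, \emph{Hilbert's projective metric in quantum information theory}, so there is no in-paper argument to compare against. What you have supplied is a self-contained elementary derivation: the identification $h(\rho_1,\rho_2)=\ln(M/m)$ with the operator sandwich $m\rho_2\le\rho_1\le M\rho_2$ and $m\le 1\le M$ is right; Helstrom's formula together with the four scalar constraints on $(p,q)$ correctly yields the linear-programming bound $\tfrac12\lVert\rho_1-\rho_2\rVert_1\le (M-1)(1-m)/(M-m)$; and the AM--GM step with $x=(a-1)(1+b)$, $y=(a+1)(1-b)$ (both nonnegative since $a=\sqrt{M}\ge 1\ge b=\sqrt{m}\ge 0$, with $x+y=2(a-b)$) closes the gap to $\tanh(h/4)=(a-b)/(a+b)$. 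Two cosmetic points: when $m=0$ or $M=\infty$ the claim is not ``vacuous'' but trivially true, since $\tanh(\infty)=1$ bounds half the trace distance of any two states; and the degenerate case $M=m$ forces $\rho_1=\rho_2$, so the division by $M-m$ is harmless. Your route has the added benefit of exhibiting the sharper intermediate bound $(M-1)(1-m)/(M-m)$, which is saturated by commuting two-level examples, whereas the cited reference derives the $\tanh$ form in the more general setting of base norms on ordered vector spaces.
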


The above are the backgounds we need to prove the following lemma.
\begin{lem}
If all $\{\CM_{1}, \CM_{2}, \cdots\CM_{n}\}$ are CP maps that map any state to a full rank state, then for arbitrary sequence of $b = s_\ell,\cdots, s_1 \in \{1,\cdots, n\}^\ell$, it holds that
\begin{align}
\left\lV \frac{\CM_b[\rho_1]}{\tr(\CM_b[\rho_1])} - \frac{\CM_b[\rho_2]}{\tr(\CM_b[\rho_2])} \right\rV_1 = O(e^{-c \ell}), 
\end{align}

where the exponent $c>0$ is independent of $b$. 
\end{lem}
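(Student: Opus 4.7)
The plan is to invoke the Birkhoff--Hopf contraction theorem together with the fact that each $\CM_i$, being strictly positive, has finite projective diameter. Concretely, I would first argue that for each individual map $\CM_i$ in the finite family $\{\CM_1,\ldots,\CM_n\}$, the projective diameter $\Delta(\CM_i)$ defined in \eqref{projdiam} is finite. The reason is that the image $\CM_i(\rho)$ of any density operator $\rho$ is full-rank by hypothesis; by continuity of $\CM_i$ and compactness of the set of density operators, there is a uniform lower bound $\lambda_i > 0$ on the smallest eigenvalue of $\CM_i(\rho)$ and a uniform upper bound $\Lambda_i$ on the largest eigenvalue. Using projectivity of $h$ to normalize, this gives $\sup(\CM_i(a)/\CM_i(b)) \leq \Lambda_i/\lambda_i$ for all $a,b\in S_+$, hence $\Delta(\CM_i)\leq 2\ln(\Lambda_i/\lambda_i) < \infty$.

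The Birkhoff--Hopf theorem then yields a strict contraction ratio $\eta_i := \tanh(\Delta(\CM_i)/4) < 1$. Setting $\eta^* := \max_i \eta_i < 1$ and $\Delta^* := \max_i \Delta(\CM_i) < \infty$ (this is where finiteness of the family $\{\CM_1,\ldots,\CM_n\}$ is essential for uniformity), I can control any composition $\CM_b = \CM_{s_\ell}\circ\cdots\circ\CM_{s_1}$. A subtlety is that $h(\rho_1,\rho_2)$ may be infinite if $\rho_1,\rho_2$ are not comparable in $S_+$ (e.g.\ if they have non-overlapping supports), so I handle the first map separately: after one application, both $\CM_{s_1}[\rho_1]$ and $\CM_{s_1}[\rho_2]$ lie in the strictly positive cone, and their projective distance is bounded by $\Delta^*$. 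Iterating the contraction for the remaining $\ell-1$ maps gives
\begin{equation}
h(\CM_b[\rho_1],\CM_b[\rho_2]) \;\leq\; (\eta^*)^{\ell-1}\,\Delta^*.
\end{equation}

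Finally, since the Hilbert metric is projective, $h(\CM_b[\rho_i]/\tr\CM_b[\rho_i],\cdot) = h(\CM_b[\rho_i],\cdot)$, so Proposition~\ref{conversion} converts this back to the trace norm:
\begin{equation}
\left\lV \frac{\CM_b[\rho_1]}{\tr(\CM_b[\rho_1])} - \frac{\CM_b[\rho_2]}{\tr(\CM_b[\rho_2])} \right\rV_1 \;\leq\; 2\tanh\!\left(\tfrac{1}{4}(\eta^*)^{\ell-1}\Delta^*\right) \;\leq\; \tfrac{1}{2}(\eta^*)^{\ell-1}\Delta^*,
\end{equation}
using $\tanh(x)\leq x$ for $x\geq 0$. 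This gives the desired $O(e^{-c\ell})$ bound with $c = -\ln\eta^* > 0$, manifestly independent of the particular sequence $b$.

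The main obstacle, and essentially the only non-trivial input, is establishing the finiteness and uniformity of $\Delta(\CM_i)$ across the finite family. The compactness argument above makes this routine once strict positivity is in hand; the rest of the proof is a direct application of Birkhoff--Hopf contraction and the metric conversion already quoted in the text.
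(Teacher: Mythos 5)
Your proof is correct and follows essentially the same route as the paper: finiteness of the projective diameter of each $\CM_i$ (which you justify more explicitly via the compactness/eigenvalue-bound argument), Birkhoff--Hopf contraction with the maximum ratio over the finite family, treating the first application separately to avoid the possibly infinite initial distance $h(\rho_1,\rho_2)$, and converting back to the trace norm with Proposition~\ref{conversion}. No gaps.
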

\begin{proof}
For each $\CM_{s}$, the distance between any two input states is finite because the image is full rank
$h(\CM_{s}(a),\CM_s(b)) < \infty.$
Hence the projective diameter as a supremum over compact set is also finite, and each $\CM_s$ is strictly contracting. Maximizing over $s = 1,\cdots, n$ provides a global contraction ratio bound $\eta<1$. 
\begin{align}
\Delta (\CM_s) &:=  \sup \limits_{a,b \in S_+} h(\CM_s(a),\CM_s(b)) < \infty \\
\forall s,\ \eta_{\CM_s} &= \tanh\left(\frac{\Delta (\CM_s)}{4}\right) \leq \eta< 1 .    
\end{align}

Then for all $a, b \in S_+$
\begin{align}
h(\CM_{s_\ell,\cdots, s_1}(a), \CM_{s_\ell,\cdots, s_1}(b))\le h(\CM_{s_1}(a),\CM_{s_1}(b))e^{-c(\ell-1)} \le \sup \limits_{s}( \Delta(\CM_{s}))\ e^{-c(\ell-1)} = O(e^{-c\ell}),
\end{align}
where again we used the fact the projective diameter is finite.  Finally we convert to the trace norm
\begin{align}
\left\lV \frac{\CM_b[\rho_1]}{\tr(\CM_b[\rho_1])} - \frac{\CM_b[\rho_2]}{\tr(\CM_b[\rho_2])} \right\rV_1 &\le 2\tanh\left(\frac{h(\CM_b[\rho_1],\CM_b[\rho_2])}{4}\right) \\
&\le O(e^{-c\ell}),
\end{align}
where $\tanh(x)\approx x$ for small $x$. 
\end{proof}

\subsubsection{Proof of Lemma~\ref{forget to MI} }
\begin{lem}
\label{recap:forget to MI}
Suppose the CP map $\CM: C' \rightarrow C$ is contractive in the sense that
\begin{equation}
    \lV \frac{\CM_b[\rho_1]}{\tr(\CM_b[\rho_1])} - \frac{\CM_b[\rho_2]}{\tr(\CM_b[\rho_2])} \rV_1 \le \epsilon.
\end{equation}
Then the state $\rho_{AC} := \CM[\sigma_{AC'}]$ is close to the product state 
\begin{align}
2T_b := \lV \rho_{AC,b} - \rho_{A,b}\otimes\rho_{C,b} \rV_1 \le 4d_{C'}\lV \sigma_{C'}^{-1}\rV_\infty \epsilon.
\end{align}
\end{lem}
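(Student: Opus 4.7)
The strategy is to repackage the hypothesis as a single $cb$-norm bound on a linear map from operators on $C'$ to operators on $C$, and then convert the resulting unnormalized statement on $AC$ into the claimed normalized bound using the invertibility of $\sigma_{C'}$.

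First, I would fix the reference output $\rho^{\mathrm{ref}}_C := \CM[\sigma_{C'}]/\tr(\CM[\sigma_{C'}])$. Applying the hypothesis with $\rho_2 = \sigma_{C'}$ and rescaling yields, for every positive $X \ge 0$ on $C'$,
\begin{equation*}
  \lV \CM[X] - \tr(\CM[X])\,\rho^{\mathrm{ref}}_C \rV_1 \;\le\; \epsilon\,\tr(\CM[X]) \;=\; \epsilon\,\tr(\CM^\dagger[\idty]\,X) \;\le\; \epsilon\,\lV \CM^\dagger[\idty]\rV\,\lV X\rV_1.
\end{equation*}
Splitting Hermitian $X$ into $X_+ - X_-$ and a general $X$ into Hermitian and anti-Hermitian pieces, the linear map $\tilde{\CM}[X] := \CM[X] - \tr(\CM[X])\,\rho^{\mathrm{ref}}_C$ therefore satisfies the $1$--$1$ bound $\lV \tilde{\CM}\rV_{1-1} \le 2\epsilon\,\lV \CM^\dagger[\idty]\rV$.

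Second, I would upgrade this to the $cb$-norm via Lemma~\ref{lem:cb to 1-1} to get $\lV \tilde{\CM}\rV_{cb} \le 2 d_{C'}\,\epsilon\,\lV \CM^\dagger[\idty]\rV$, and apply it to the positive operator $\sigma_{AC'}$. A direct calculation gives
\begin{equation*}
  (\id_A \otimes \tilde{\CM})[\sigma_{AC'}] \;=\; (\id_A \otimes \CM)[\sigma_{AC'}] \;-\; \omega_A \otimes \rho^{\mathrm{ref}}_C, \qquad \omega_A := \tr_{C'}\!\bigl[(\idty_A \otimes \CM^\dagger[\idty])\,\sigma_{AC'}\bigr],
\end{equation*}
which is precisely $\tr(\CM[\sigma_{C'}])$ times $\rho_{AC} - \rho_A \otimes \rho_C$, because $\rho_C = \rho^{\mathrm{ref}}_C$ and $\rho_A = \omega_A/\tr(\CM[\sigma_{C'}])$ by the definition of the normalized chain. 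The $cb$-bound therefore gives
\begin{equation*}
  \lV \rho_{AC} - \rho_A \otimes \rho_C \rV_1 \;\le\; \frac{2 d_{C'}\,\epsilon\,\lV \CM^\dagger[\idty]\rV}{\tr(\CM[\sigma_{C'}])}.
\end{equation*}

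Third, the factor $\lV \sigma_{C'}^{-1}\rV$ enters through the operator inequality $\sigma_{C'} \ge \idty_{C'}/\lV \sigma_{C'}^{-1}\rV$, giving $\tr(\CM[\sigma_{C'}]) = \tr(\CM^\dagger[\idty]\,\sigma_{C'}) \ge \tr(\CM^\dagger[\idty])/\lV \sigma_{C'}^{-1}\rV \ge \lV \CM^\dagger[\idty]\rV/\lV \sigma_{C'}^{-1}\rV$. Substituting collapses the bound to $2 d_{C'}\lV \sigma_{C'}^{-1}\rV\,\epsilon$, comfortably inside the claimed $4 d_{C'}\lV \sigma_{C'}^{-1}\rV\,\epsilon$. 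The only delicate point I expect is that $\CM$ is merely CP, not TP, so the natural ``marginal on $A$'' is $\omega_A/\tr(\CM[\sigma_{C'}])$ rather than $\sigma_A$; once the comparison state at each step is correctly identified as $\omega_A\otimes\rho^{\mathrm{ref}}_C$, the remainder is a standard cb-norm chase together with the Lemma~\ref{lem:cb to 1-1} conversion.
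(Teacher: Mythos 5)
Your proposal is correct and follows essentially the same route as the paper's proof: both view $\rho_{AC}-\rho_A\otimes\rho_C$ as the image of $\sigma_{AC'}$ under the map $X\mapsto \CM[X]-\tr(\CM[X])\,\rho^{\mathrm{ref}}_C$, convert its $1$--$1$ norm (controlled by the contraction hypothesis) to a completely bounded norm via Lemma~\ref{lem:cb to 1-1} at the cost of a factor $d_{C'}$, and absorb the normalization $\tr(\CM[\sigma_{C'}])$ using $\lV\sigma_{C'}^{-1}\rV_\infty$. Your bookkeeping via the Hermitian/anti-Hermitian split even yields the slightly sharper constant $2d_{C'}\lV\sigma_{C'}^{-1}\rV_\infty\epsilon$ in place of the paper's $4d_{C'}\lV\sigma_{C'}^{-1}\rV_\infty\epsilon$.
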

\begin{proof}
 Bounded factors depending on the hidden system $C'$ may show up here and there, but it poses no threat when the error $\epsilon$ is exponentially small. 
\begin{align}
2T_b :&= \lV \rho_{AC,b} - \rho_{A,b}\otimes\rho_{C,b} \rV_1\\
&=\left\lV \frac{\CM_b[\sigma_{AC'}]}{\tr(\CM_b[\sigma_{C'}])} - \frac{Tr_C(\CM_b[\sigma_{AC'}])}{\tr(\CM_b[\sigma_{C'}])} \otimes \frac{\CM_b[\sigma_{C'}]}{\tr(\CM_b[\sigma_{C'}])}\right\rV_1 \\
&\le \left\lV \frac{\CM_b[\cdot ]}{\tr(\CM_b[\sigma_{C'}])} - \frac{Tr_C(\CM_b[\cdot])}{\tr(\CM_b[\sigma_{C'}])} \otimes \frac{\CM_b[\sigma_{C'}]}{\tr(\CM_b[\sigma_{C'}])} \right\rV_\diamond\\
&\le \frac{d_{A}}{\tr(\CM_b[\sigma_{C'}])}\sup_{\lV X\rV_1\le 1, X \subset {\rm supp}(\sigma_{C'})}\left\lV\CM_b[X] - \tr(\CM_b[X]) \frac{\CM_b[\sigma_{C'}]}{\tr(\CM_b[\sigma_{C'}])} \right\rV_1 \\
&\le \frac{4d_{A}}{\tr(\CM_b[\sigma_{C'}])}\sup_{\rho_1}\left\lV\CM_b[\rho_1] - \tr(\CM_b[\rho_1]) \frac{\CM_b[\sigma_{C'}]}{\tr(\CM_b[\sigma_{C'}])} \right\rV_1 \\
& \le 4d_{C'}\frac{\sup_{\rho_1\subset {\rm supp}(\sigma_{C'})}( \tr(\CM_b[\rho_1]) )}{\tr(\CM_b[\sigma_{C'}])} \epsilon \\
&\le 4d_{C'}\lV \sigma_{C'}^{-1}\rV_\infty \epsilon,
\end{align}
where in the second and third inequality we used that diamond norm is bounded by d times the superoperator-norm (Lemma~\ref{lem:cb to 1-1}) and the extra factor of 4 comes from turning $X$ into density operator $\rho_1$; in the fourth inequality we used the assumption; in the last inequality we used that $\lV \sigma_{C'}^{-1}\rV_\infty \sigma_{C'} \ge I \ge \rho_1$. Note that we are working on the support of $\sigma_{C'}$ so that the inverse is well-defined. 
\end{proof}

$2 \implies 3$ by setting $\sigma_{AC'}$ to be the maximally entangled state on $A\bar A$, which yields $\lV \rho_{AC,b} - \rho_{A,b}\otimes\rho_{C,b} \rV_1 \le 4d_{A}^2 \CO(e^{-c\ell})$. The conversion to mutual information straightforwardly follows from the AFW inequality,
\begin{align}
I(A:C)_b \le 2T_b\log(\min(d_A,d_C))+ (1+T_b)\log(1+T_b) - T_b\log(T_b) = O(e^{-c_1\ell}).\label{eq:AFW}
\end{align}
This immediately passes to CMI by taking expectation in Eq.~\eqref{eq:expectation over MI} and thus completes the proof.

\subsubsection{Proof of Lemma~\ref{fulldimkraus}}\label{sec:cond1}
\begin{lem}\label{recap:fulldimkraus}
 If ${\rm span}\{K_1, \cdots, K_p \} = Mat(D,D)$, then $\CN[\rho]: = \sum_i K_i \rho K_i^\dagger > 0, \forall \rho$.
\end{lem}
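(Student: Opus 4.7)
The plan is to argue by contradiction. Suppose $\CN[\rho]$ failed to be strictly positive for some state $\rho \ge 0$ with $\rho \neq 0$; I would pick a nonzero vector $|v\rangle$ in $\ker \CN[\rho]$ and decompose $\rho = \sum_j p_j |\psi_j\rangle\langle\psi_j|$ with all $p_j > 0$, so that the $|\psi_j\rangle$ span the (necessarily nonzero) support of $\rho$. Expanding
\begin{equation}
0 \;=\; \langle v | \CN[\rho] | v \rangle \;=\; \sum_{i,j} p_j \,\left| \langle v | K_i | \psi_j \rangle \right|^2
\end{equation}
and using positivity of each $p_j$ forces $\langle v | K_i | \psi_j \rangle = 0$ for all $i$ and $j$. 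Equivalently, every vector $K_i^\dagger |v\rangle$ lies in the orthogonal complement of the support of $\rho$, which is a \emph{proper} subspace of $\mathbb{C}^D$.

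The second step is to invoke the spanning hypothesis to contradict this containment. Since $\{K_i\}$ spans $\mathrm{Mat}(D,D)$, the adjoint family $\{K_i^\dagger\}$ does too (take adjoints of the expansion coefficients). For any target $|w\rangle \in \mathbb{C}^D$, I would pick the rank-one matrix $M = |w\rangle\langle v| / \langle v|v\rangle$, which satisfies $M|v\rangle = |w\rangle$, expand $M = \sum_i c_i K_i^\dagger$ using the spanning property, and read off $|w\rangle = \sum_i c_i K_i^\dagger |v\rangle$. Consequently $\{K_i^\dagger|v\rangle\}_i$ already spans all of $\mathbb{C}^D$, which directly contradicts the proper-subspace containment established in the previous paragraph.

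I do not anticipate a genuine obstacle here: the argument is purely linear-algebraic, and the only nontrivial observation is that the action of the full matrix algebra on any fixed nonzero vector recovers the whole of $\mathbb{C}^D$. One minor point worth flagging is that the statement requires only $\rho \neq 0$, not full-rank $\rho$; the support decomposition above handles both cases uniformly, so no separate treatment is needed.
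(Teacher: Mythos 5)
Your proof is correct and takes essentially the same route as the paper: assume a vector $\ket{v}$ annihilates $\CN[\rho]$, deduce that every matrix element $\bra{v}K_i\ket{\psi_j}$ vanishes, and contradict the spanning hypothesis (the paper phrases the contradiction as $\tr(K_i\ket{u}\bra{v})=0$ for all $i$ forcing $\ket{u}\bra{v}=0$, which is the dual formulation of your observation that $\{K_i^\dagger\ket{v}\}$ must span $\mathbb{C}^D$). Your version is slightly more complete in that it treats general mixed $\rho$ explicitly rather than only the pure-state case.
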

\begin{proof}
Suppose there exist $\ket{u},\ket{v}$, s.t. $\sum_j \bra{v}K_j \ket{u} \bra{u}K^\dagger_j \ket{v} = 0.$ Then we obtain
\begin{align}
|\bra{v}K_i\ket{u}|^2 = 0, \forall i\\
\implies \tr(K_i\ket{u}\bra{v}) = 0, \forall i.
\end{align}
This is a contradiction because $\{K_i\}$ are full dimensional. 
\end{proof}

\subsubsection{Proof of Lemma~\ref{increase span}}
The idea is rooted from a theorem of Burnside about algebra generated by matrices and the simultaneous invariant subspace.
\begin{thm}[Burnside~\cite{LOMONOSOV200445}]\label{burnside}
Consider $m_1, \cdots m_{p-1} \in Mat_\BC(D,D) $, acting on vector space $\BC^D$. 
The following are equivalent.
\begin{enumerate}
\item The algebra generated by $(m_1, \cdots, m_{p-1})$ is the full matrix algebra $Mat_\BC (D,D)$
\item For all non-trivial subspace $V\subset \BC^D$, 
\begin{equation}
  m_q V \subset V, \ \forall q \implies V = \BC^D,    
\end{equation}
i.e., the simultaneous invariant subspace of $m_1$ and $m_2$ is trivial or the whole space.
\end{enumerate}
\end{thm}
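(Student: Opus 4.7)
The plan is to prove the two implications separately, using the Jacobson density theorem (equivalently, the double commutant theorem in finite dimensions) for the harder direction.

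First I would dispatch the easy direction $1 \Rightarrow 2$. If the algebra generated by $m_1,\ldots,m_{p-1}$ is all of $\mathrm{Mat}_\mathbb{C}(D,D)$, then it contains every rank-one operator $\ket{u}\bra{v}$. Pick any nonzero $w\in V$ with $V$ a simultaneous invariant subspace, and any target $x\in\mathbb{C}^D$; then $\ket{x}\bra{v}w = \braket{v|w}\,\ket{x}$ lies in $V$ for any $v$ with $\braket{v|w}\neq 0$. Hence $V = \mathbb{C}^D$.

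For the harder direction $2 \Rightarrow 1$, let $\mathcal B\subset\mathrm{Mat}_\mathbb{C}(D,D)$ be the unital subalgebra generated by $m_1,\ldots,m_{p-1}$, and view $\mathbb{C}^D$ as a left $\mathcal B$-module. Condition 2 says exactly that this module is simple (irreducible). The three-step plan is then:
(i) By Schur's lemma, the endomorphism ring $\mathrm{End}_{\mathcal B}(\mathbb{C}^D)$ is a division algebra over $\mathbb{C}$; since $\mathbb{C}$ is algebraically closed and $\mathbb{C}^D$ is finite-dimensional, every such division algebra equals $\mathbb{C} I$.
(ii) Apply the Jacobson density theorem: for a simple module over a ring, the action is dense in $\mathrm{End}_{\mathrm{End}_{\mathcal B}(\mathbb{C}^D)}(\mathbb{C}^D)$, which by (i) equals $\mathrm{End}_\mathbb{C}(\mathbb{C}^D) = \mathrm{Mat}_\mathbb{C}(D,D)$.
(iii) In finite dimensions ``dense'' is ``equal'', so $\mathcal B = \mathrm{Mat}_\mathbb{C}(D,D)$.

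If I wanted to avoid quoting Jacobson density, I would replace step (ii) with a direct double commutant argument. Concretely: consider the action of $\mathcal B$ on $\mathbb{C}^D\otimes\mathbb{C}^D$ diagonally. The key claim is that every $\mathcal B$-submodule of $\mathbb{C}^D\oplus\cdots\oplus\mathbb{C}^D$ ($n$ copies) is a sum of ``diagonal'' copies of the simple module $\mathbb{C}^D$, which follows from Schur together with the fact that a cyclic submodule generated by $(v_1,\ldots,v_n)$ is the image of $\mathbb{C}^D$ under a $\mathcal B$-linear map. Taking $n = D$ and the cyclic module generated by an arbitrary tuple, this forces $\mathcal B$ to realize every $\mathbb{C}$-linear endomorphism of $\mathbb{C}^D$. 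This unwinds into an elementary argument using only Schur.

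The main obstacle is showing step (ii): going from irreducibility of the module plus trivial centralizer to actually filling out the full matrix algebra. The risk is introducing heavy machinery; the remedy is the direct Schur-plus-diagonal-action argument in the previous paragraph, which is self-contained and uses only facts already implicit in the paper (finite-dimensionality of $\mathbb{C}^D$, algebraic closedness of $\mathbb{C}$, and Schur's lemma). Once this is in place, the theorem follows cleanly.
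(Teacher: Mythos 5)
The paper does not actually prove this statement: it is quoted as Burnside's theorem with a citation to \cite{LOMONOSOV200445} and used as a black box in the proof of Proposition~\ref{tweakburnside}, so there is no in-paper argument to compare against. Your proof is correct and follows the standard module-theoretic route: condition 2 says exactly that $\mathbb{C}^D$ is a simple module over the generated algebra, Schur's lemma over the algebraically closed field $\mathbb{C}$ forces the commutant to be $\mathbb{C}I$, and Jacobson density (equivalently, in finite dimensions, the double commutant theorem) then fills out all of $\mathrm{Mat}_{\mathbb{C}}(D,D)$; the easy direction via rank-one operators is also fine. By contrast, the cited reference proves the hard direction by an elementary minimal-rank argument that produces a rank-one element of an irreducible algebra directly, avoiding module theory altogether --- your approach buys brevity and generality at the price of quoting density/Wedderburn machinery, while theirs is self-contained linear algebra. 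One small point worth tidying: you work with the \emph{unital} algebra $\mathcal{B}$ generated by the $m_i$, whereas the statement (and the way Proposition~\ref{tweakburnside} later adjoins $\{\lambda I\}$ by hand) suggests the possibly non-unital algebra $\mathcal{A}$. This is harmless: $\mathcal{A}$ is a two-sided ideal of $\mathcal{B}=\mathcal{A}+\mathbb{C}I$, it is nonzero whenever condition 2 holds with $D\ge 2$, and $\mathrm{Mat}_{\mathbb{C}}(D,D)$ is simple, so $\mathcal{B}=\mathrm{Mat}_{\mathbb{C}}(D,D)$ forces $\mathcal{A}=\mathrm{Mat}_{\mathbb{C}}(D,D)$ as well; but the one-line remark deserves to be said. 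Your fallback sketch of the double commutant argument is also essentially right, though the phrase ``a cyclic submodule generated by $(v_1,\ldots,v_n)$ is the image of $\mathbb{C}^D$ under a $\mathcal{B}$-linear map'' should read that each \emph{simple summand} of such a submodule is isomorphic to $\mathbb{C}^D$ (the cyclic submodule itself is a quotient of the regular module, hence by semisimplicity a direct sum of copies of the unique simple).
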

In our version, the vector space is $Mat_\BC(D,D)$, where Kraus operators live :

\begin{prop}\label{tweakburnside}
Consider $m_1, \cdots, m_{p-1} \in Mat_\BC(D,D) $, acting on vector space $Mat_\BC(D,D)$ by left multiplication. The following are equivalent.
\begin{enumerate}
\item The algebra generated by $(m_1,\cdots, m_{p-1})$ is the full matrix algebra $Mat_\BC (D,D)$
\item For all non-trivial subspace(as vector space) $W\subset Mat_\BC(d,d)$ containing an invertible element $B$, 
$$m_q W \subset  W, \forall q \implies W = Mat_\BC(D,D).$$
\end{enumerate}
\end{prop}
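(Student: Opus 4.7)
The plan is to prove the two implications separately, reducing the harder direction (2) $\Rightarrow$ (1) to the classical Burnside theorem (Theorem~\ref{burnside}). The key observation used throughout is that a subspace $W \subset Mat_\BC(D,D)$ is invariant under left multiplication by each $m_q$ if and only if it is invariant under left multiplication by every element of the algebra $\mathcal{A}$ generated by $m_1,\ldots,m_{p-1}$.

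For the forward direction (1) $\Rightarrow$ (2), I assume $\mathcal{A} = Mat_\BC(D,D)$ and let $W$ be an $\mathcal{A}$-invariant subspace containing an invertible element $B$. Then $\mathcal{A}\cdot B \subseteq W$; since $\mathcal{A}$ is the full matrix algebra (and in particular contains $I$) and right multiplication by the invertible $B$ is a linear bijection of $Mat_\BC(D,D)$ onto itself, one has $\mathcal{A}\cdot B = Mat_\BC(D,D)$, forcing $W = Mat_\BC(D,D)$.

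For the backward direction (2) $\Rightarrow$ (1), I would argue by contrapositive. Assume $\mathcal{A} \subsetneq Mat_\BC(D,D)$. By Theorem~\ref{burnside} applied to the action of the $m_q$ on $\BC^D$, there exists a proper nonzero $\mathcal{A}$-invariant subspace $V \subsetneq \BC^D$. From this $V$, construct
\begin{equation*}
W := \{ M \in Mat_\BC(D,D) : MV \subseteq V \}.
\end{equation*}
This $W$ is $\mathcal{A}$-invariant since $(mM)V = m(MV) \subseteq mV \subseteq V$ for all $m \in \mathcal{A}$ and $M \in W$; it contains the identity (hence an invertible element); and it is proper, because picking $v \in V\setminus\{0\}$ and $w \in \BC^D\setminus V$, any rank-one matrix sending $v$ to $w$ lies outside $W$. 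This produces a proper nonzero $\mathcal{A}$-invariant subspace containing an invertible element, contradicting (2).

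The main subtlety is choosing the auxiliary subspace $W$ in the backward direction: naive candidates such as $\mathcal{A}$ itself may fail to contain an invertible element, while $\{M : \mathrm{Im}(M) \subseteq V\}$ is $\mathcal{A}$-invariant but contains no invertibles when $V \neq \BC^D$. The choice $W = \{M : MV \subseteq V\}$ simultaneously encodes the classical Burnside witness $V$ and automatically contains $I$, making the reduction to the standard theorem clean.
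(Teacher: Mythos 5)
Your proof is correct and follows essentially the same route as the paper: the forward direction is identical, and the backward direction likewise argues by contrapositive, reducing to the classical Burnside theorem and exhibiting a proper left-invariant subspace of $Mat_\BC(D,D)$ that contains an invertible element. The only difference is the choice of witness subspace --- you take the stabilizer $\{M : MV \subseteq V\}$ of the Burnside-invariant subspace $V$, while the paper takes $\mathcal{A} + \BC I$; both sit inside the block-upper-triangular algebra determined by $V$ and both work.
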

\begin{proof}
$(\implies)$ For all non-trivial subspace $W\in Mat_\BC(D,D)$, if $m_q W \subset  W, \forall q $, then same is true under left multiplication
\begin{align}
 (m_1,\cdots, m_{p-1} ) W \subset W.
\end{align}
Then in particular $W$ contains $ (m_1,\cdots, m_{p-1} ) B$, thus must be the full matrix algebra $W = Mat_\BC (D,D)$.

$(\impliedby)$ For a contradiction, suppose $ (m_1,\cdots, m_{p-1} )  \ne Mat_\BC (D,D)$. Then by theorem~\ref{burnside}, $ (m_1,\cdots, m_{p-1} ) $ is reducible with an proper invariant subspace $V\subsetneq \BC^D$. Consider the basis for which the first entries are basis vectors spanning $V$, i.e. $ (m_1,\cdots, m_{p-1} ) $ has some zeros at the left down corner:
$$
 (m_1,\cdots, m_{p-1} )  = \begin{bmatrix}
M_V & N \\
 0 & M_{V^c} 
\end{bmatrix}.
$$
Consider the subspace $W': =  (m_1,\cdots, m_{p-1} ) + \{\lambda I\}$ by adding the identity, which is invertible. 

Then the resulting subspace $W'$ is an invariant subspace
\begin{align}
m_qW'  &= m_q (m_1,\cdots, m_{p-1} )  + m_q \subset W'. \\
\end{align}
We arrive at contradiction with $W' =  (m_1,\cdots, m_{p-1} ) + \{\lambda I\}\ne Mat_\BC (D,D) $.
\end{proof}

We can now prove Lemma~\ref{increase span}:
\begin{lem}[Condition 1 implies the increment of span of Kraus operators] \label{recap:increase span}
Consider a CP self-map $\CM$ with $p>2$ Kraus operators $\{ E_1,\cdots, E_p\} $ such that $E_p$ is invertible, and $\left((E_p)^{-1} E_1,\cdots,(E_p)^{-1}  E_{p-1} \right)$ generate full matrix algebra $Mat_\BC(D,D)$ by addition and multiplication. Then for any set of Kraus operators $\{T_j\}$ not full rank and containing an invertible element, the dimension of span must increase after applying $\CM$
\begin{equation}
{\rm dim} ({\rm span}\{E_i T_j\}) > {\rm dim} ({\rm span}\{T_j\}). 
\end{equation}
\end{lem}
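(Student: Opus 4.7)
The plan is to reduce this to a direct application of Proposition~\ref{tweakburnside} by normalizing with $E_p^{-1}$. Let $W := {\rm span}\{T_j\}$; by hypothesis $W$ is a proper subspace of $Mat_\BC(D,D)$ that contains an invertible element. The key observation is that left multiplication by the invertible operator $E_p^{-1}$ is a bijection on $Mat_\BC(D,D)$ and preserves dimensions of spans, so
$$
\dim{\rm span}\{E_i T_j\}_{i,j} \;=\; \dim{\rm span}\{E_p^{-1}E_i T_j\}_{i,j} \;=\; \dim{\rm span}\bigl(W\cup m_1 W\cup\cdots\cup m_{p-1}W\bigr),
$$
where I set $m_i := E_p^{-1}E_i$ for $i=1,\ldots,p-1$, and the $i=p$ contribution is $E_p^{-1}E_p W = W$.

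From here, strict increase of the dimension is equivalent to the existence of some $i\in\{1,\ldots,p-1\}$ with $m_i W \not\subset W$. I would argue by contradiction: assume instead that $m_i W \subset W$ for every such $i$. Since $W$ contains an invertible element by hypothesis, and since by assumption $m_1,\ldots,m_{p-1}$ generate the full matrix algebra $Mat_\BC(D,D)$, Proposition~\ref{tweakburnside} applies and forces $W = Mat_\BC(D,D)$, contradicting the assumption that $\{T_j\}$ is not of full Kraus rank. This establishes the strict increase claimed in the lemma.

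The most delicate point in applying this lemma inside the proof of Theorem~\ref{mainthm} is maintaining the invertible-element hypothesis through the iteration. For a single application of the lemma this is assumed outright, but when iterating $\CM_{s_\ell}\circ\cdots\circ\CM_{s_1}$ one must track that the span $\CS_k$ continues to contain an invertible operator at every step. Fortunately this is automatic: if $T_{j_0}\in\CS_k$ is invertible and $E^{s_{k+1}}_p$ is the invertible Kraus operator of $\CM_{s_{k+1}}$ supplied by Condition~1, then $E^{s_{k+1}}_p T_{j_0}\in\CS_{k+1}$ is again invertible, giving the induction step. Combined with the strict-increase conclusion of the lemma, this yields the global upper bound $\xi \le D^2-p+1$ on the number of iterations needed for every sequence to reach full Kraus rank.
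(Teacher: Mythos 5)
Your proof is correct and follows essentially the same route as the paper's: both reduce the claim to Proposition~\ref{tweakburnside} by observing that left multiplication by the invertible $E_p$ (equivalently, normalizing by $E_p^{-1}$) preserves dimension, so failure of strict increase forces $m_iW\subset W$ for all $i$ and hence $W=Mat_\BC(D,D)$, a contradiction. Your added remark that $E_p^{s_{k+1}}T_{j_0}$ stays invertible is exactly the justification the paper invokes for propagating the invertible-element hypothesis through the iteration.
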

\begin{proof}
Left multiplying $E_i$ yields the span
$$
    \{ E_1T_j, \cdots ,E_pT_j\}.
$$ 
Notice that ${\rm dim}(\{T_j\}) = {\rm dim}(E_p\{T_j\})$ because $E_p$ is invertible, so if any of $E_1\{T_j\}, \cdots, E_{p-1}\{T_j\}$ is not a subspace of $E_p\{T_j\}$, then the dimension would increase ${\rm dim} ({\rm span}\{E_i T_j\}) > {\rm dim} ({\rm span}\{T_j\})$;
we only need to worry if 
\begin{equation}
E_1\{T_j\}, \cdots E_{p-1}\{T_j\} \subset E_p\{T_j\}.    
\end{equation}
Invert $E_p$ and by Proposition~\ref{tweakburnside}, $\{T_j\}$ must reach full rank already.
\begin{equation}
\{T_j\} = Mat_\BC (D,D).    
\end{equation}
\end{proof}
\subsubsection{Proof of Proposition~\ref{prop:algorithm}} \label{proof:algorithm}
It is simple to generate an algebra from set of matrices $m_1, \ldots, m_{p-1}$. Start with $S= {\rm span}\{m_1\}$, repeat the following steps:
\begin{enumerate}
    \item Add left multiplied matrices to the set, resulting $S' = \{S,  m_1S, \ldots, m_{p-1}S\}$.
    \item Find a linear basis for $S'$.\footnote{In the actual code, there always need to be a small error threshold to decide the linear independence.} If ${\rm dim}(S') = {\rm dim}(S)$, then terminate, and  ${\rm span}\{S'\}$ is the algebra generated by $m_1, \ldots, m_{p-1}$. If ${\rm dim}({\rm span}\{S'\}) = D^2$, then it is the full matrix algebra.  
\end{enumerate}
Checking linear independence uses polynomial runtime.

\section{Conclusion and discussions}
Motivated by the problem of showing the existence of the local parent Hamiltonians of MPDO, we have studied the CMI of MPDO. We have shown that MPDO constructed by bistochastic Y-Shaped channels with trivial correctable algebra have exponentially decaying CMI  and thus have approximately local parent Hamiltonians. We have shown a similar bound for a slightly more general class of channels under the restriction that certain decay constants are sufficiently small. We have also shown the exponential decay of CMI for Y-shaped channels with a forgetful component. We have introduced a trace norm variant of the CMI and have shown that for the above cases they obey no worse bounds than the CMI.

For more general Y-shaped channels, we have conjectured the completely contractive DPI (Conjecture~\ref{ccDPI}). We have shown that if this conjecture is true, every MPDO constructed by a Y-shaped channel with trivial correctable algebra has exponentially decaying CMI.  
For the measured MPDO, we have provided sufficient conditions implying the exponential decay of CMI. We have numerically confirmed (up to small bond dimension) that these conditions are generically true if the Y-shaped channel is generated by a Haar-random unitary.

Our results Theorem~\ref{thm: bistochastic} and Theorem~\ref{thm: partially invariant} only work for a restricted family of Y-shaped channels. The proof relies on the fact that the corresponding MPDO are approximately a product state. This is no longer true for general channels and the deviation from the product state do not obviously contract \footnote{ Bistochastic channels are rather special that it contracts all $p$-norm. See~\cite{doi:10.1063/1.2218675}  to see how this fails for non-bistochastic channels.}. A possible solution to avoid the structure of many-body state is resorting to our Conjecture~\ref{ccDPI}.
Note that having trivial correctable algebra is still only a sufficient condition, and a necessary and sufficient condition for exponentially decaying CMI is unclear yet.

Analysis of the CMI for the measured MPDO could be massively easier than the unmeasured case due to losing entanglement with $B$. However, our results on measured MPDO are still limited and only provide sufficient conditions. 
In contrast to the single channel quantum Wielandt's inequality~\cite{5550282} with sufficient and necessary conditions, Condition 1 in Theorem~\ref{mainthm} guarantees strict positivity for all sequences multiplicatively generated by a finite set of CP-maps $\{\CM_s\}$. Quantum Wielandt's inequality has a classical analog in matrix theory, however this multiple-channel  generalization has limited results even in the classical case (see e.g.,~\cite{PROTASOV2012749} for a related result). 

\section{Acknowledgement}
We thank Jean-Francois Quint for comments on multiplicative ergodic theory. We thank Mario Berta, Marco Tomamichel, Hao-Chung Cheng for discussions about the DPI for CMI. CFC is thankful for Physics TA Relief Fellowship and the Physics TA Fellowship at Caltech. KK acknowledges funding provided by the Institute for Quantum Information and
Matter, an NSF Physics Frontiers Center (NSF Grant {PHY}-{1733907}) and MEXT Quantum Leap Flagship Program (MEXT Q-LEAP) Grant Number JPMXS0120319794. FB acknowledges funding from NSF.

\newpage
\appendix

\section{Equivalence between correctable algebra and the Petz recovered map for bistochastic channels}\label{sec:Petz and correctable}
The structure of CPTP self-map, the invariant subspace, and the fixed-point algebra of the dual has been studied.
\begin{thm}[Combination of \cite{Hayden2004,PhysRevA.66.022318,lindbald1999}]\label{invar}
For every CPTP self-map $\CZ: S(\BC^d)\rightarrow S(\BC^d)$, the following holds:
\begin{enumerate}
\item The invariant subspace $S_{\CZ^\dagger}$ of $\CZ^\dagger$ forms a subalgebra $M\subset B(\CH_d)$, with factors $\alpha$.
\begin{equation}
S_{\CZ^\dagger} = \bigoplus_\alpha B(\CH_\alpha)\otimes I_{\bar{\alpha}}
\end{equation}
\item The invariant subspace of $S_{\CZ}$ of $\CZ$ has form 
\begin{equation}
S_{\CZ} = \bigoplus_\alpha p_\alpha \rho_\alpha \otimes \sigma_{\bar{\alpha}}
\end{equation}
with $\sigma_{\bar{\alpha}}$ determined by $\CZ$, and $\rho_\alpha, p_\alpha$ are free.
\item The channel $\CZ$ restricted on the block diagonal entries has form (its acting on off-diagonal part is not as simple)
\begin{equation}
\CZ =  \bigoplus_\alpha \CI_{\alpha} \otimes \CZ_{\bar{\alpha }}
\end{equation}
and $Z_{\bar{\alpha }}$ has unique fixed point $\sigma_\alpha$
\end{enumerate}
\end{thm}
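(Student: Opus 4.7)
The plan is to assemble the three items from known structural theorems about fixed points of unital CP maps, combined with the Koashi--Imoto decomposition for states invariant under a channel. The three cited results each contribute one piece: Lindblad's work on the algebra of multiplicative domain/fixed points, Koashi--Imoto on the structure of states left invariant by a channel, and Hayden--Jozsa--Petz--Winter on the Markov structure associated with equality in strong subadditivity.

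For item 1, I would first observe that $\CZ$, being CPTP on a finite-dimensional space, always has at least one fixed state $\omega$; by restricting to its support, we may assume $\omega$ is faithful. Then $\CZ^\dagger$ is unital and CP, and for any $X \in S_{\CZ^\dagger}$ the Kadison--Schwarz inequality $\CZ^\dagger(X^*X)\geq \CZ^\dagger(X)^*\CZ^\dagger(X) = X^*X$ combined with $\tr(\omega\,\CZ^\dagger(X^*X)) = \tr(\omega\,X^*X)$ forces equality $\CZ^\dagger(X^*X) = X^*X$. Hence $S_{\CZ^\dagger}$ is closed under the adjoint and the product, i.e.\ it is a $*$-subalgebra of $B(\CH_d)$. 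Applying the Artin--Wedderburn structure theorem for finite-dimensional $C^*$-algebras then yields, in an appropriate basis, the direct-sum decomposition $S_{\CZ^\dagger} = \bigoplus_\alpha B(\CH_\alpha)\otimes I_{\bar{\alpha}}$, proving part 1.

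For item 2, I would use the duality between $S_{\CZ^\dagger}$ and $S_{\CZ}$: any $\rho$ with $\CZ(\rho)=\rho$ must satisfy $\tr(X\rho)=\tr(X\,\CZ(\rho)) = \tr(\CZ^\dagger(X)\rho)$ for every $X\in B(\CH_d)$, and the full characterization is pinned down once we know how $\rho$ interacts with the commutant $S_{\CZ^\dagger}'$. The Koashi--Imoto theorem tells us that any family of states left invariant by a common channel decomposes in this block form with the $\bar\alpha$-factor rigidly fixed to a single $\sigma_{\bar\alpha}$ determined by $\CZ$, while the $\alpha$-factors and weights $p_\alpha$ are free parameters. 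That is exactly the claimed form $S_{\CZ}=\bigoplus_\alpha p_\alpha \rho_\alpha \otimes \sigma_{\bar{\alpha}}$.

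For item 3, I would argue by duality again: since $\CZ^\dagger$ acts as the identity on the $\alpha$-factors (because those factors lie in $S_{\CZ^\dagger}$), the primal map $\CZ$ restricted to the block-diagonal sector must have the tensor-product form $\bigoplus_\alpha \CI_\alpha \otimes \CZ_{\bar\alpha}$, with $\CZ_{\bar\alpha}$ a CPTP map on $B(\CH_{\bar\alpha})$ whose unique fixed point is $\sigma_{\bar\alpha}$; uniqueness follows because otherwise additional invariants would enlarge $S_{\CZ^\dagger}$, contradicting the Wedderburn decomposition already obtained. The main obstacle is the off-diagonal behaviour of $\CZ$ (which is not controlled by this structural result, and is why the theorem statement explicitly restricts to block-diagonal entries); we sidestep rather than resolve it by noting that peripheral-spectrum arguments on the finite-dimensional space $B(\CH_d)$ guarantee that off-diagonal blocks decay or rotate unitarily and so do not interfere with the identification of the fixed-point structure needed elsewhere in the paper.
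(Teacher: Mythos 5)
The paper does not prove Theorem~\ref{invar} at all: it is imported wholesale as a ``Combination of'' the three cited references (Lindblad's multiplicative-domain/fixed-point-algebra result, Koashi--Imoto, and Hayden--Jozsa--Petz--Winter), and your proposal reconstructs exactly that assembly, so it is essentially the same approach. Your sketch is correct at the level of detail the paper itself operates at; the only places where you compress standard but nontrivial steps are (i) the passage from $\CZ^\dagger(X^*X)=X^*X$ to closure of $S_{\CZ^\dagger}$ under products, which requires the multiplicative-domain argument from the equality case of Kadison--Schwarz, and (ii) the restriction to the support of a maximal invariant state to get faithfulness, a caveat the paper's statement also glosses over.
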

Though, when a channel has different input-output dimension, we alternatively consider the correctable algebra $\CA(\CE)$. It turned out to correspond to the invariant subspace of $Z = \CP_{\tau, \CE} \CE$, the original channel composed with the Petz recovery map with the maximally mixed state as the reference.
\begin{prop}[Recap of Proposition~\ref{invar_cor}]
The correctable algebra of a channel $\CE$ equals to the fixed-point algebra of the Petz-recovery channel composed with channel $\CP_{\tau, \CE}\circ\CE$ (and the dual $\CE^\dagger \circ \CP_{\tau, \CE}^\dagger$, due to being self-adjoint).
\begin{equation}
S_{\CE^\dagger \circ \CP_{\tau, \CE}^\dagger} = S_{\CP_{\tau, \CE}\circ\CE} = \CA(\CE)    
\end{equation}
\end{prop}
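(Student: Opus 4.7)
The plan is to establish the two inclusions in $S_\CZ = \CA(\CE)$, where $\CZ := \CP_{\tau,\CE}\circ\CE$ is a CPTP self-map on $\cB(\cH)$. First I would dispense with the self-adjointness claim by direct substitution: with $\tau = I/d$ the maximally mixed state, both $\CZ$ and $\CZ^\dagger$ reduce to $X \mapsto \CE^\dagger\bigl(\CE(I)^{-1/2}\,\CE(X)\,\CE(I)^{-1/2}\bigr)$, so their fixed-point sets automatically agree and I only need to analyze one of them. Since $\CZ$ is a CPTP self-map, Theorem~\ref{invar} already guarantees that $S_\CZ$ carries a $C^*$-algebra structure, matching the algebra structure expected of $\CA(\CE)$.

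For the forward inclusion $\CA(\CE)\subseteq S_\CZ$, I would take $O$ satisfying $[O, E_a^\dagger E_b] = 0$ for every pair of Kraus operators of $\CE$ and substitute into the Kraus expansion
\[
\CZ(O) = \sum_{a,b} E_a^\dagger\,\CE(I)^{-1/2}\,E_b\, O\, E_b^\dagger\,\CE(I)^{-1/2}\, E_a.
\]
The cleanest manipulation passes through the Stinespring dilation $V:\cH\to\cH_{\mathrm{out}}\otimes\cH_E$ with $E_a = (I\otimes\langle a|)V$. The commutation hypothesis is equivalent to $[VOV^\dagger,\, I_{\mathrm{out}}\otimes Y] = 0$ for every environment observable $Y$, which displays $VOV^\dagger$ as block-diagonal along the environment factor. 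Because $\CE(I) = \mathrm{Tr}_E(VV^\dagger)$, the inverse square root inserted by the Petz map is precisely what cancels the non-unitality of $V$, and a direct computation in the dilated picture collapses $\CZ(O)$ back to $O$.

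For the reverse inclusion $S_\CZ \subseteq \CA(\CE)$, suppose $\CZ(O) = O$ with $O$ Hermitian (the general case splits into Hermitian and anti-Hermitian parts). For sufficiently small real $\lambda$, the operator $\rho_\lambda := (I+\lambda O)/d$ is a full-rank state and $\CP_{\tau,\CE}\circ\CE(\rho_\lambda) = \rho_\lambda$. Invoking Petz's equality case in the data-processing inequality for relative entropy --- $D(\rho\|\sigma) = D(\CE(\rho)\|\CE(\sigma))$ iff $\CP_{\sigma,\CE}\circ\CE(\rho) = \rho$~\cite{Petz1986,Petz1988} --- the channel $\CE$ is sufficient for $\{\rho_\lambda, \tau\}$, and the corresponding structure theorem for sufficient channels forces $O$ to commute with every $E_a^\dagger E_b$. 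Hence $O \in \CA(\CE)$.

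The main obstacle is the forward inclusion, and specifically the factor $\CE(I)^{-1/2}$, which does not commute with the individual Kraus operators. I expect the Stinespring route to be the cleanest because in the dilated picture $\CE(I)$ appears as the marginal $\mathrm{Tr}_E(VV^\dagger)$ and the non-commuting pieces reorganize transparently via $V^\dagger V = I$. A minor technical point is that if $\CE(I)$ is rank-deficient one should interpret $\CE(I)^{-1/2}$ as the pseudo-inverse on the image of $\CE$; this has no effect on the algebraic identification $S_\CZ = \CA(\CE)$.
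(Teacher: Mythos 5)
Your overall architecture (self-adjointness by direct substitution, then two inclusions) is sound, and your reverse inclusion via Petz's equality condition for the data-processing inequality is a legitimate, if roundabout, alternative to the paper's one-line argument that ``fixed by $\CE^\dagger\circ\CP_{\tau,\CE}^\dagger$ means the Petz map itself is a recovery map, hence the observable is correctable by definition.'' The genuine gap is in the forward inclusion $\CA(\CE)\subseteq S_{\CZ}$, which is exactly where the paper leans on an external theorem (\cite[Theorem~1]{Chen_2020}: the Petz map with maximally mixed reference is a \emph{universal} subalgebra recovery map, so anything recoverable by some channel is recovered by $\CP_{\tau,\CE}$). You replace that citation with the claim that ``a direct computation in the dilated picture collapses $\CZ(O)$ back to $O$,'' but you never perform the computation, and the step you identify as the obstacle --- the factor $\CE(I)^{-1/2}$ --- does not resolve itself via $V^\dagger V = I$ alone. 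To make the Kraus sum $\sum_{a,b}E_a^\dagger\,\CE(I)^{-1/2}E_b\,O\,E_b^\dagger\,\CE(I)^{-1/2}E_a$ collapse, you effectively need the structure theorem for correctable algebras: writing $\CA(\CE)=\bigoplus_\beta \CB(\CH_\beta)\otimes I_{\bar\beta}$, one must show that the output supports $E_a P_\beta\CH$ for distinct central blocks $\beta$ are mutually orthogonal and that within each block $E_aP_\beta$ factorizes as (isometry)$\,\circ\,(I_\beta\otimes F_{a,\beta})$; only then does $\CE(I)^{-1/2}$ act blockwise and cancel. That is the real content of the result, and it is missing.

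There is also a concrete error in your Stinespring reformulation: the commutation hypothesis $[O,E_a^\dagger E_b]=0$ for all $a,b$ is equivalent to $[O,\,V^\dagger(I_{\mathrm{out}}\otimes Y)V]=0$ for all environment observables $Y$, \emph{not} to $[VOV^\dagger,\,I_{\mathrm{out}}\otimes Y]=0$. The latter would force $VOV^\dagger = X\otimes I_E$, which is generically impossible since $VOV^\dagger$ is supported on the range of $V$; taking matrix elements of your commutator shows it demands $E_eOE_c^\dagger=0$ for $e\neq c$, far stronger than the hypothesis. The correct statement only controls the compression of $VOV^\dagger$ to $\mathrm{ran}(V)$, which is precisely why the block-diagonality you want does not come for free. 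So: either carry out the structure-theorem argument in full, or do what the paper does and cite the universality of the maximally-mixed-reference Petz map for subalgebra recovery.
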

\begin{proof}
By~\cite[Theorem~1]{Chen_2020}, the Petz map with the maximally mixed reference state  is a universal subalgebra recovery map, i.e. it recovers any subalgebra that can be recovered from any channel $D$. 
\begin{equation}
D \circ \CE (\rho)|_a =\rho_a, \forall \rho \implies \CP_{\tau, \CE} \circ \CE (\rho)|_a =\rho_a, \forall \rho
\end{equation}
The invariant subspace of $\CE^\dagger \circ \CP_{\tau, \CE}^\dagger$ contains the correctable algebra, and the converse is by definition true.
 \begin{equation}
 \CA(\CE)\subset S_{\CE^\dagger \circ \CP_{\tau, \CE}^\dagger} \subset \CA(\CE)
 \end{equation}
 We conclude the proof by that the Petz-recovered channel is self-adjoint w.r.t. to the H.S. norm
 \begin{align}
     \CE^\dagger \circ \CP_{\tau, \CE}^\dagger &=  \CE^\dagger \circ [\CE[\tau]^{-1/2}\tau\CE[\cdot ]\CE[\tau]^{-1/2}]= \tau\CE^\dagger  [\CE[\tau]^{-1/2}\CE[\cdot ]\CE[\tau]^{-1/2}]\\
     &=\CP_{\tau, \CE}\CE.
 \end{align}
\end{proof}

This provides an alternative proof that the correctable algebra behaves nicely when tensored with auxiliary system $B$.
\begin{prop} \label{correctabletensor}
Suppose the correctable algebra is 
\begin{equation}\label{firsteq_correctable}
\CA(\CE) := \gamma = \bigoplus_{\beta} B(\CH_\beta)\otimes I_{\bar{\beta}} \subset B(\CH_{C})
\end{equation}
then
\begin{equation}\CA(\CI_B \otimes \CE) = \bigoplus_\beta B(\CH_{B\beta})\otimes I_{\bar{\beta}} = B(\CH_B) \otimes \gamma
\end{equation}
\end{prop}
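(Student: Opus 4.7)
The plan is to leverage Proposition~\ref{invar_cor}, which identifies the correctable algebra of a channel with the fixed-point algebra of its Petz recovery composed with itself (w.r.t.\ the maximally mixed reference). Everything then reduces to two algebraic facts: the Petz map with a product reference state factorizes, and the fixed-point algebra of $\CI_B\otimes Z$ equals $B(\CH_B)\otimes S_Z$.

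First, I would show the factorization of the Petz map. Since $\tau_{BC}=\tau_B\otimes\tau_C$ and $(\CI_B\otimes\CE)(\tau_{BC})=\tau_B\otimes\CE(\tau_C)$ is also a product, all the square roots in Eq.~\eqref{eq:Petz} split as tensor products, and $(\CI_B\otimes\CE)^\dagger=\CI_B\otimes\CE^\dagger$. A short calculation yields
\begin{equation}
\CP_{\tau_{BC},\,\CI_B\otimes\CE}=\CI_B\otimes\CP_{\tau_C,\,\CE},
\end{equation}
so that $\CP_{\tau_{BC},\,\CI_B\otimes\CE}\circ(\CI_B\otimes\CE)=\CI_B\otimes Z$ with $Z:=\CP_{\tau_C,\,\CE}\circ\CE$. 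By Proposition~\ref{invar_cor}, $S_Z=\gamma$.

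Second, I would identify the fixed-point algebra of $\CI_B\otimes Z$. Pick a linearly independent basis $\{O_i\}$ of $B(\CH_B)$ and write an arbitrary $X\in B(\CH_{BC})$ as $X=\sum_i O_i\otimes Q_i$. Then $(\CI_B\otimes Z)(X)=\sum_i O_i\otimes Z(Q_i)$, so the fixed-point equation forces $Z(Q_i)=Q_i$ for every $i$ by linear independence. Conversely, any $X\in B(\CH_B)\otimes S_Z$ is clearly a fixed point. Therefore
\begin{equation}
S_{\CI_B\otimes Z}=B(\CH_B)\otimes S_Z=B(\CH_B)\otimes\gamma.
\end{equation}
Combining with the factorization from Step 1 and Proposition~\ref{invar_cor} applied to $\CI_B\otimes\CE$ gives $\CA(\CI_B\otimes\CE)=B(\CH_B)\otimes\gamma$. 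Unpacking the block decomposition $\gamma=\bigoplus_\beta B(\CH_\beta)\otimes I_{\bar\beta}$ yields $\bigoplus_\beta B(\CH_{B\beta})\otimes I_{\bar\beta}$, which is the desired form.

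I do not anticipate a real obstacle: the proof is essentially a bookkeeping argument built on the product structure of $\tau_{BC}$ and the tensorial behavior of Petz recovery. The only mildly non-obvious step is the fixed-point computation, but once one writes $X$ in the basis $\{O_i\otimes\cdot\}$ the argument is immediate. A minor caveat is that $\CE(\tau_C)$ may be supported on a proper subspace, in which case one works on its support and interprets inverses as pseudo-inverses; this is the standard convention used throughout the paper and does not affect the factorization.
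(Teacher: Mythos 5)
Your proof is correct and follows the same skeleton as the paper's: both reduce the statement to Proposition~\ref{invar_cor} together with the factorization $\CP_{\tau_{BC},\,\CI_B\otimes\CE}\circ(\CI_B\otimes\CE)=\CI_B\otimes\CZ$ with $\CZ=\CP_{\tau_C,\CE}\circ\CE$ (the paper performs this factorization on the dual side, $S_{\CI_B\otimes\CE^\dagger\circ\CP_{\tau_{BC},\CI_B\otimes\CE}^\dagger}=S_{\CI_B\otimes(\CE^\dagger\circ\CP_{\tau_C,\CE}^\dagger)}$, but it is the same observation). The one place you diverge is the final step: the paper invokes the structure theorem (Theorem~\ref{invar}) to write $\CZ$ in the block form $\bigoplus_\beta\CI_\beta\otimes\CZ_{\bar\beta}$ and reads off the invariant subspace of $\CI_B\otimes\CZ$ from that decomposition, whereas you compute $S_{\CI_B\otimes\CZ}=B(\CH_B)\otimes S_{\CZ}$ directly by expanding $X=\sum_i O_i\otimes Q_i$ in a linearly independent basis of $B(\CH_B)$ and using linear independence to force $\CZ(Q_i)=Q_i$. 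Your route is more elementary and self-contained for this step --- it only needs $S_{\CZ}$ to be a linear subspace, not the full classification of fixed-point algebras of CPTP self-maps --- while the paper's route makes the block structure of the resulting algebra manifest without further unpacking. Both are valid; your closing remark about supports and pseudo-inverses is also consistent with the paper's conventions.
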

\begin{proof}
 By Proposition~\ref{invar_cor}, we get 
\begin{align}
\CA(\CI_B \otimes \CE) = S_{\CI_B \otimes \CE^\dagger \circ \CP_{\tau_{BC}, \CI_B \otimes \CE}^\dagger} = S_{\CI_B \otimes( \CE^\dagger \circ \CP_{\tau_{C}, \CE}^\dagger)} 
\end{align}
where we can use structure theorem~\ref{invar} for $\CZ:=\CP_{\tau_{C}, \CE}\circ \CE$
\begin{align}
\CI_B \otimes( \CE^\dagger \circ \CP_{\tau_{C}, \CE}^\dagger) &= \CI_B \otimes \bigoplus_\beta \CI_{\beta} \otimes \CZ_{, \bar{\beta }} \\
& =  \bigoplus_\beta \CI_{B\beta} \otimes \CZ_{, \bar{\beta }}
\end{align}
The invariant subspace of $\CI_B \otimes \CZ^\dagger$ coincide with Eq.\eqref{firsteq_correctable} 
\end{proof}
\section{Structure of periodic MPDO generated by Y-shaped bistochastic channels. }\label{sec:periodic bistochastic MPDO}
\begin{thm}
Consider the following MPDO constructed by closed loop of Y-shaped channels
\begin{align}
\rho_{B_1\cdots B_\ell}:= \sum_{i,j}\tr( \ket{j} \bra{i} \CN \circ \cdots \CN [\ket{i} \bra{j}] ).
\end{align}
Then the state is close to the maximally mixed state
\begin{align}
\rho_{B_1\cdots B_\ell} = \tau_{B_1\cdots B_\ell}+ O_{B_1\cdots B_\ell}
\end{align}
up to an exponentially small global operator
\begin{equation}
    \lV O_{B_1\cdots B_\ell} \rV_1 =\CO(\eta^{\ell}),
\end{equation}
which implies the decay of the CMI for any tripartition $A'B'C'$
\begin{equation}
    I(A':C'|B') = \CO(\ell\eta^{\ell}).
\end{equation}
Where the contraction ratio $\eta$ is given as
\begin{equation}
\eta: = \limsup_{\rho_C \rightarrow \tau_C } \frac{D(\CN[\rho_C]\|\tau_{C'})}{D(\rho_C\|\tau_C)}\,.
\end{equation}
\end{thm}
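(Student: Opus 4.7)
The plan is to reduce the periodic problem to the open-boundary result of Theorem~\ref{thm: bistochastic} by recognizing the closed loop as a gluing of the open chain via projection onto the maximally entangled state between the two end systems. Letting $\rho^{\mathrm{open}}_{A B_1 \cdots B_\ell C_\ell}$ denote the open MPDO from Theorem~\ref{thm: bistochastic} initialised with $\sigma_{A\bar{A}}$, a direct Choi--Jamio\l kowski computation (unpacking $\sum_{ij}\ket{i}\bra{j}\otimes\ket{j}\bra{i}$ in terms of the maximally entangled state) shows
\begin{equation}
\rho^{\mathrm{per}}_{B_1 \cdots B_\ell} = D^2 \cdot \tr_{A C_\ell}\!\left[ \ketbra{\Omega}{\Omega}_{A C_\ell} \; \rho^{\mathrm{open}}_{A B_1 \cdots B_\ell C_\ell} \right],
\end{equation}
where $D = d_A = d_{C_\ell}$ is the fixed end-system dimension and $\ket{\Omega}_{AC_\ell}$ is the normalised maximally entangled state.

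Next I apply the decomposition from Theorem~\ref{thm: bistochastic}, namely $\rho^{\mathrm{open}} = \rho^{\mathrm{open}}_{AB_1\cdots B_\ell} \otimes \tau_{C_\ell} + G$ with $G$ traceless on $C_\ell$ and $\lV G \rV_1 = O(\eta^\ell)$. Bistochasticity of $\CN$ combined with the maximally-entangled initialization propagates $\CN[\tau] = \tau \otimes \tau$ at every step, forcing the $B$-marginal $\rho^{\mathrm{open}}_{B_1 \cdots B_\ell} = \tau_{B_1 \cdots B_\ell}$. A direct calculation from the identity $D\,\tr_{AC}[\ketbra{\Omega}{\Omega}(X_{AB}\otimes I_C)] = \tr_A X_{AB}$ shows that the leading term in $\rho^{\mathrm{per}}$ contributes exactly $\tr_A \rho^{\mathrm{open}}_{AB} = \tau_{B_1 \cdots B_\ell}$, while the error term obeys $\lV D^2 \tr_{AC_\ell}[\ketbra{\Omega}{\Omega}\, G] \rV_1 \le D^2 \lV G \rV_1 = O(\eta^\ell)$ since $D$ is a fixed constant independent of $\ell$. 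The trace of $\rho^{\mathrm{per}}$ likewise differs from $1$ by $O(\eta^\ell)$, and a brief rescaling absorbs this deviation into the error term without altering the decay rate, yielding
\begin{equation}
\rho^{\mathrm{per}}_{B_1 \cdots B_\ell} = \tau_{B_1 \cdots B_\ell} + O_{B_1 \cdots B_\ell}, \qquad \lV O \rV_1 = O(\eta^\ell).
\end{equation}

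Finally, since $\tau_{B_1 \cdots B_\ell}$ is fully product, $I(A':C'|B')_{\tau_B} = 0$ for every tripartition $A'B'C'$. Applying the continuity of CMI (Lemma~\ref{cont of CMI}, or equivalently the Alicki--Fannes--Winter inequality) with total dimension $\log d_{\mathrm{tot}} = O(\ell)$ and trace-norm error $O(\eta^\ell)$ yields $I(A':C'|B')_{\rho^{\mathrm{per}}} = O(\ell \eta^\ell)$, uniformly in the choice of tripartition. The main technical obstacle is ensuring that the closure of the loop preserves the exponential smallness: because the $D^2$ prefactor only acts on the fixed-dimension end systems $A$ and $C_\ell$ while the bound on $G$ from Theorem~\ref{thm: bistochastic} is directly in trace norm, the overhead is absorbed into the implicit constant. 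The only remaining subtlety is the lack of exact trace-normalization of the formally-defined periodic MPDO, which is handled by the brief rescaling noted above.
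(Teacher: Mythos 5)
Your argument is correct, but it reaches the result by a genuinely different route than the paper. The paper proves the periodic statement from scratch: it expands $\sum_{i,j}\tr(\ket{j}\bra{i}\,\CN\circ\cdots\circ\CN[\ket{i}\bra{j}])$ over an orthonormal operator basis split into $\tau$ and its traceless complement $\{K_n\}$, observes that in the telescoping decomposition of $\CN\circ\cdots\circ\CN$ every term except the all-$\BP_K$ one carries a factor $\tau_{C_\ell}$ that is Hilbert--Schmidt orthogonal to $K_n^\dagger$ and hence vanishes under the closed-loop contraction, and then bounds the single surviving term by Lemma~\ref{DPI 2norm} in the normalized Hilbert--Schmidt norm before passing to the trace norm by Cauchy--Schwarz. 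You instead reduce the loop to the open chain via the gluing identity $\rho^{\mathrm{per}}_{B}=D^2\,\tr_{AC_\ell}\!\left[\ketbra{\Omega}{\Omega}\,\rho^{\mathrm{open}}_{ABC_\ell}\right]$ and import Theorem~\ref{thm: bistochastic} as a black box; this is valid, since the identity checks out, bistochasticity gives $\rho^{\mathrm{open}}_{B}=\tau_{B}$ so the leading term lands exactly on $\tau_{B_1\cdots B_\ell}$, and the $D^2$ overhead acts only on the fixed-dimension end systems. Your route buys modularity (no need to redo the telescoping sum), at the cost of having to justify the gluing and the constant-factor loss; the paper's route stays entirely inside the Hilbert--Schmidt framework where its continuity lemma applies directly. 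One caution on your final step: Lemma~\ref{cont of CMI} as stated takes the \emph{normalized Hilbert--Schmidt} error $d\,\tr(G^2)$ as input, and converting your trace-norm bound into that quantity costs a factor $d_{B_1\cdots B_\ell}=e^{O(\ell)}$, which would ruin the estimate; you should therefore rely, as you note parenthetically, on the trace-norm continuity of entropy (the first line of Lemma~\ref{cont of entropy}, i.e.\ AFW) applied to the four entropies, which yields the same $O(\ell\eta^{\ell})$ bound.
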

\begin{proof}
We first break the loop by rewriting
\begin{align}
    \rho_{B_1\cdots B_\ell} &= \sum_{i,j}\tr( \ket{j} \bra{i} \CN \circ \cdots \CN [\ket{i} \bra{j}] ) \\
    &= \sum_{O_{ij}}\tr(O^\dagger_{ij} \CN \circ \cdots \CN [O_{ij}] )\label{eq:sumij},
\end{align}
where $O_{ij}$ are any complete basis of operators orthonormal w.r.t. the Hilbert Schimdt inner product. Then recall the decomposition of $\CN \circ \cdots \CN$ as in the open boundary case
\begin{align}
    \CN \circ \cdots \CN[O]&=  (1-\BP_K) \CN[O]\otimes \tau_3 \cdots \tau_\ell \\ 
    &+ (1-\BP_K)\CN\BP_K\CN[O]\otimes \tau_4 \cdots \tau_\ell\\
    &+ \cdots \\
    &+ \BP_K\CN\cdots\BP_K\CN\BP_K\CN[O].
\end{align}
Plugging into Eq.~\eqref{eq:sumij} and choosing $O_{ij}$ to split into the maxmially mixed $I/\sqrt{d}$ and the basis for orthogonal complement $\{K_n\}$. 
\begin{align}
\sum_{O_{ij}}\tr(O^\dagger_{ij} \CN \circ \cdots \CN [O_{ij}] ) &= \sum_{n}\tr(K^\dagger_n \CN \circ \cdots \CN [K_n] ) \\
&+  d\tr(\tau \CN \circ \cdots \CN [\tau] )\\
&= \tau_{B_1\cdots B_\ell}+ \sum_{n}\tr(K^\dagger_n \CN \BP_K\circ \cdots\BP_K \CN [K_n] ),
\end{align}
where the sum over traceless $K_n$ contains a single term since other terms have $\tau_\ell$ which is orthogonal to $K_n$. When $\CN$ has trivial correctable algebra, by Lemma~\ref{DPI 2norm} the global operator is exponentially decaying in the normalized H-S norm and hence the trace norm using Cauchy-Schwartz inequality.  Finally by continutity of the CMI (Lemma ~\ref{cont of CMI}), each tripartition must have CMI exponentially small w.r.t to total length $\ell$.
\end{proof}

\end{document}